\documentclass[12pt]{article}

\usepackage{amssymb} 
\usepackage{latexsym} 
\usepackage{amsmath}
\usepackage{color}
\usepackage{epsfig}
\usepackage{graphicx}
\usepackage[all]{xy}
\input xy
\usepackage[mathcal]{eucal}
\usepackage{enumerate}
\usepackage{anysize}
\usepackage[english]{babel}
\usepackage{bbm}
\usepackage{hyperref}

%\abovedisplayskip=3mm
%\belowdisplayskip=3mm
%\abovedisplayshortskip=0mm
%\belowdisplayshortskip=2mm
%\parindent 6mm
%\oddsidemargin 0mm \evensidemargin 10mm \topmargin -21mm
%
%\vfuzz2pt % Don't report over-full v-boxes if over-edge is small
%\hfuzz2pt % Don't report over-full h-boxes if over-edge is small

% THEOREMS ETC. ----------------------------------------------------------
\newtheorem{definition}{Definition}

\newtheorem{remark}{Remark}
\newtheorem{prop}{Proposition}
\newtheorem{theorem}{Theorem}

\newtheorem{lemma}{Lemma}

\newtheorem{corollary}{Corollary}

\newtheorem{proofTh} {Proof.}

\newcommand{\CVD} {\hspace*{\fill}$\Box$}

\newenvironment{proof}{\begin{proofTh}\em}{\CVD\end{proofTh}}

%\newenvironment{proof}{\noindent{\bf Proof.}}%
 %              {\hspace*{\fill}$\Box$\par\vspace{4mm}}

%\newenviroment{quadro}{\vcenter\bgroup\advance\hsize by -2em\noindent
           %  \refstepcounter{equation}(\theequation)~\ignorespaces\em}{\par\vspace{4mm}}

\def\qed{\hfill\raisebox{3pt}{\fbox{\rule{0mm}{1mm}\hspace*{1mm}\rule{0mm}{1mm}}\,}
\vspace{8pt}}
\def\cadre{$$\vcenter\bgroup\advance\hsize by -2em\noindent
             \refstepcounter{equation}(\theequation)~\ignorespaces}
\makeatletter
\def\endcadre{\egroup\eqno$$\global\@ignoretrue}
%\makeatother
\def\ncadre{$$\vcenter\bgroup\advance\hsize by -2em\noindent
             \ignorespaces}
\makeatletter
\def\endncadre{\egroup\eqno$$\global\@ignoretrue}

% MISCELLANEOUS -----------------------------------------------------------

\newcommand{\comment}[1]{}
\newcommand{\mybreak} {\par\vspace{2mm}\noindent}
% GESTIONE VISIBILITA' NOTE
%nota visibile

%nota invisibile
%\newcommand{\nota}[1]{}

%modifica proposta

%modifica accettata

\newcommand{\mv}[1] {\mathsf{#1}}

% MATH -----------------------------------------------------------

\makeatletter
\def\imod#1{\allowbreak\mkern10mu({\operator@font mod}\,\,#1)}
\makeatother

\newcommand{\C} {\mathcal{C}}

\newcommand{\F} {\mathcal{F}}
\newcommand{\HH} {\mathcal{H}}
\newcommand{\B} {\mathcal{B}}
\newcommand{\LL} {\mathcal{L}}
\newcommand{\K} {\mathcal{K}}

\newcommand{\NN} {\mathcal{N}}

\newcommand{\maxbic} {\B}

\newcounter{progcount}
\newcounter{linecount}[progcount]

\newcommand{\N}{\refstepcounter{linecount}\thelinecount. \>}
\newcommand{\NL}[1]{\refstepcounter{linecount}\thelinecount. \label{#1}\>}

\newenvironment{prog}[1]{
    \refstepcounter{progcount}\label{#1}
    \par\vspace{0.5ex}\noindent\hspace{1ex}
    \begin{minipage}{\linewidth}
    \small
    \begin{tabbing}
    =spa\=spa\=spa\=spa\=spa\=spa\=spa\=spa\=spa\=spa\=spa\=spa\=\kill
%    =spaz\=spaz\=spaz\=spaz\=spaz\=spaz\=spaz\=spaz\=spaz\=spaz\=spaz\=spaz\=\kill
}%
{
    \end{tabbing}
    \end{minipage}\\[0.5ex]
}

\newcommand{\key}[1]{\textbf{#1~}}\ignorespaces
%%%%%%%%%%%%% fine envalgo.tex %%%%%%%%%%%%%%%%%%%%%%%%

% GRAPHS ---------------------------------------------------------
%\newcommand{\head}[1]          {\textrm{head}(#1)}
%\newcommand{\tail}[1]          {\textrm{tail}(#1)}
%\newcommand{\red}          {\rm red}
%\newcommand{\blue}          {\rm blue}
%\newcommand{\Red}[2]          {{\rm Red}_{#1}(#2)}
%\newcommand{\Blue}[2]         {{\rm Blue}_{#1}(#2)}
%\newcommand{\Rd}[1]          {{\rm Red}(#1)}
%\newcommand{\Be}[1]         {{\rm Blue}(#1)}
%\newcommand{\RRed}[2]          {\overline{{\rm red}}_{#1}(#2)}
%\newcommand{\BBlue}[2]         {\overline{{\rm blue}}_{#1}(#2)}
%\newcommand{\nr}[1]          {\overline{{\rm red}}(#1)}
%\newcommand{\nb}[1]         {\overline{{\rm blue}}(#1)}
%\newcommand{\fn}          {[n]}
%\newcommand{\dist}[3]          {\textrm{dist}_{#1}(#2,#3)}
%\newcommand{\cost}[1]          {c(#1)}
%\newcommand{\costp}[1]          {c'(#1)}
% PAPER SPECIFIC -------------------------------------------------
% ----------------------------------------------------------------
%
% PER INSERIRE FIGURE .eps
%%%%%%%%%%%%%%%%%%%%%%%%%%%%%%%%%%%%%%%%%%%%%%%%%%%%%%%%%%%%%%%%%%%%%%%%%%%%%%%%%
%\begin{figure}
%    \begin{center}
%        \begin{figbox}
%               \epsfbox[150 442 400 500]{clusterspannerpiccolo.eps}
%        \end{figbox}
%    \caption{a 5-clustering of a graph and an associated 3-spanner}
%    \protect\label{fig:clustering}
%    \end{center}
%\noindent\hrulefill%
%\end{figure}
%%%%%%%%%%%%%%%%%%%%%%%%%%%%%%%%%%%%%%%%%%%%%%%%%%%%%%%%%%%%%%%%%%%%%%%%%%%%%%%%%

\begin{document}
\pagestyle{plain}

\title{On the Galois Lattice of Bipartite Distance Hereditary Graphs}

% AUTORI IN STILE LNCS
%\author{Nicola Apollonio\inst{1} \and Massimiliano Caramia\inst{2} \and Paolo Giulio Franciosa\inst{3}}
%
%\institute{Istituto per le Applicazioni del
%Calcolo, M. Picone, v. dei Taurini 19, 00185 Roma, Italy,\\e-mail:~\email{nicola.apollonio@cnr.it}
%\and
%Dipartimento di Ingegneria dell'Impresa,
%Universit\`a di Roma ``Tor Vergata'', v. del Politecnico 1, 00133
%Roma, Italy, e-mail:~\email{caramia@disp.uniroma2.it}
%\and
%Dipartimento di Scienze Statistiche, Universit\`a di Roma ``La Sapienza'',
%piazzale Aldo Moro 5, 00185 Roma, Italy, e-mail:~\email{paolo.franciosa@uniroma1.it}
%}
%
%\maketitle

% AUTORI IN STILE NON LNCS
\author{Nicola Apollonio\footnote{Istituto per le Applicazioni del
Calcolo, M. Picone, v. dei Taurini 19, 00185 Roma, Italy.
\texttt{nicola.apollonio@cnr.it}} \and {Massimiliano
Caramia\footnote{Dipartimento di Ingegneria dell'Impresa,
Universit\`a di Roma ``Tor Vergata'', v. del Politecnico 1, 00133
Roma, Italy. \texttt{caramia@disp.uniroma2.it}}}
\and
{Paolo Giulio Franciosa\footnote{Dipartimento di Scienze Statistiche,
Sapienza Universit\`a di Roma, p.le Aldo Moro 5, 00185
Roma, Italy. \texttt{paolo.franciosa@uniroma1.it}. The author was partially supported by the Italian Ministry of Education,
University, and Research (MIUR) under PRIN 2012C4E3KT national
research project ``AMANDA -- Algorithmics for MAssive and Networked
DAta''}}
}

\date{}

\maketitle

\begin{abstract} We give a complete characterization of bipartite graphs having tree-like Galois lattices. We prove that the poset obtained by deleting
bottom and top elements from the Galois lattice of a bipartite
graph is tree-like if and only if the graph is a Bipartite
Distance Hereditary graph. By relying on the interplay between
bipartite distance hereditary graphs and series-parallel graphs,
we show that the lattice can be realized as the containment
relation among directed paths in an arborescence.
Moreover, a compact encoding of Bipartite Distance Hereditary graphs is proposed, that allows optimal time computation of neighborhood intersections and maximal bicliques.\\

\noindent \textbf{Keywords}: Galois lattice, transitive reduction,
distance hereditary graphs, bicliques, series-parallel graphs.
\end{abstract}

\section{Introduction}\label{sec:intro}
Galois lattices are a well established topic in applied lattice theory. Their importance is widely recognized \cite{gw}, and its applications span across theoretical computer science and discrete mathematics as well as artificial intelligence, data mining and data-base theory. There is a growing interest on the interplay between finite Galois lattices and other discrete structures in combinatorics and computer science, and new relationships have been (and are to be) discovered between graphs and the related Galois lattices. This papers follows this stream and characterizes a class of bipartite graphs by the Galois lattice of their maximal cliques.
\mybreak
\emph{Distance Hereditary} graphs are graphs with the
\emph{isometric property}, i.e., the distance function of a
distance hereditary graph is inherited by its connected induced
subgraphs. This important class of graphs was introduced and
thoroughly investigated by Howorka in \cite{how1,how2}. A
\emph{comparability graph} is the graph of the comparability
relation among elements of a poset. In \cite{codiste}, Cornelsen
and Di Stefano proved that by intersecting the class of Distance
Hereditary graphs with the class of comparability graphs one
obtains precisely the comparability graphs of tree-like posets,
i.e., those posets whose transitive reduction is a tree. Here we
investigate another relation between comparability graphs and
distance hereditary graphs: inspired on the one hand by the work of Amilhastre,
Vilarem and Janssen \cite{avg} and on the other hand by the work of Berry and Sigayret \cite{bs} and the work of Brucker and G\'ely \cite{bg}. In \cite{avg}, \emph{Galois lattices} of domino-free bipartite graphs are investigated.
In \cite{bs} it is shown that the Hasse diagram of the \emph{Galois lattice} of \emph{chordal bipartite} graphs is \emph{dismantable} \cite{rival}, while an analogous result is shown in \cite{bg}  for the \emph{clique lattice} of \emph{strongly chordal graphs}. Both \cite{bs} and \cite{bg} use a dismantlability property of these lattices proved in
\cite{rival}. Recall that a graph $G$ is \emph{strongly chordal} if and only if its \emph{vertex-clique graph}, namely, the incidence bipartite graph of the maximal cliques of $G$ over $V(G)$, is a \emph{bipartite chordal} graph and that a graph is \emph{bipartite chordal} if it does not contain an induced copy of a chordless cycle on more than four vertices---the reader is referred to Section~\ref{sec:prel} for undefined terms and notions.  
\mybreak
In this paper we study the transitive
reduction of the \emph{Galois lattice} of those bipartite graphs that are chordal (as in \cite{bs}) and domino-free (as in \cite{avg}). It follows by Theorem~\ref{thm:bdhchar} in Section~\ref{sec:prel} that these graphs are precisely the Bipartite Distance
Hereditary (BDH for shortness) graphs, namely, those distance hereditary graphs which are bipartite.
\mybreak
Essentially in the same way as chordal bipartite graphs are related to strongly chordal graphs, BDH graphs are related to the so called \emph{Ptolemaic graphs}. If $\mathbf{CH}$ denotes the class of chordal graphs, namely, those graphs that do not contain an induced copy of the chordless cycle on more than three vertices, and if $\mathbf{DH}$ is the class of distance hereditary graphs, then the class $\mathbf{Pt}$ of Ptolemaic graphs is the intersection between $\mathbf{CH}$ and $\mathbf{DH}$. Actually, by the results of \cite{peledwu}, $\mathbf{Pt}$ is the intersection between $\mathbf{SC}$ and $\mathbf{DH}$, where $\mathbf{SC}$ is the class of strongly chordal graphs. Let $\LL(G)$ denote the \emph{Galois lattice} of a bipartite graph $G$ and let $\C(H)$ denote the \emph{clique lattice} of a graph $H$. As shown by Wu (as credited in \cite{peledwu}), if $G$ is Ptolemaic, then the vertex-clique graph of $G$ is a BDH graph. Hence there is a map $\lambda: \mathbf{Pt}\rightarrow \mathbf{BDH}$ and it is not difficult to see that $\LL(\lambda G)\cong \C(G)$, where $\cong$ is lattice isomorphism. In a sense, as we show in Section \ref{sec:indirect}, the converse statement holds as well, namely, there is a mapping $\mu$ that takes a a BDH graph $G$ into a Ptolemaic graph $\mu G$ so that $\C(\mu G)\hookrightarrow \LL(G)$ in such a way that $\LL(G-I)\cong \C(\mu G)$ for a certain set $I$ of \emph{join-irreducible} (or \emph{meet-irreducible}) elements of $\LL(G)$, where $\hookrightarrow$ denotes \emph{order embedding}. In other words the following diagram applies (and commutes):

\begin{equation}\label{eq:diagram2}
\xymatrix@1{
\mathbf{Pt}\ar@{<->}[d]_{\C(\cdot)}\ar@<1ex>[rrrr]^{\lambda} & &&& \mathbf{BDH}\ar@<1ex>[llll]^{\mu}\ar@{<->}[d]^{\LL(\cdot)} \\
\mathbb{T}_{\K}\ar@/_/[rrrr]|{\Psi_\mu} & & && \mathbb{T}_{\B}\ar@/_/[llll]|{\Phi_\lambda} }
%\hspace{2cm}
\end{equation}
where $\mathbb{T}_\K$ and $\mathbb{T}_\B$ are the classes of tree-shaped clique lattices and Galois lattices, respectively, $\Phi_\lambda$ is lattice isomorphism induced by $\lambda$, and $\Psi_\mu$ is an order embedding induced by $\mu$.
\subsection{Our result}
Let us recall what is the \emph{Galois
lattice} of a bipartite graph $G$. Let $G$ have color classes $X$
and $Y$. A \emph{biclique} of $G$ is a set $B\subseteq V(G)$ which
induces a complete bipartite graph. Let $\B(G)$ be the
set of the (inclusionwise) maximal bicliques of $G$ and for $B\in
\B(G)$ let $X(B)=B\cap X$ and $Y(B)=B\cap Y$. $X(B)$ and
$Y(B)$ are called the \emph{shores} of $B$. Throughout the rest of the paper we assume that $G$ does not contain universal vertices, 
where a \emph{universal vertex} in a bipartite graph is a vertex that is adjacent to all vertices in the opposite color class. This assumption, while does not cause loss of generality, leads to simpler statements and proofs.
Following \cite{avg},
we endow $\B(G)$ by a partial order $\preceq$ defined by
$$B\preceq B' \Leftrightarrow X(B)\subseteq X(B').$$
Equivalently, the same partial order can be defined as
$$B\preceq B' \Leftrightarrow Y(B)\supseteq Y(B')$$
since $X(B)\subseteq X(B') \Leftrightarrow Y(B)\supseteq Y(B')$.
If we extend $\B(G)$ by
adding two dummy elements $\bot$ and $\top$ acting as bottom and
top element respectively, the poset
$\mathcal{L}(G)=(\maxbic(G)\cup\{\bot,\top\},\preceq)$ is a lattice
known as the Galois lattice of $G$. The two dummy elements are
respectively defined by
$$X(\bot)=\emptyset,\, Y(\bot)=Y \,\,\mbox{and}\,\, X(\top)=X,\, Y(\top)=\emptyset.$$

In this paper we prove that the shape of
$\mathcal{L}(G)$ can be used to characterize BDH graphs. More
precisely, we show the following.
\begin{theorem}\label{thm:main} Let $G$ be a connected bipartite
graph and let $\mathbf{H}(G)$ be the transitive reduction of
$(\mathcal{B}(G),\preceq)$. Then $\mathbf{H}(G)$ is a tree if and
only if $G$ is a BDH graph.
\end{theorem}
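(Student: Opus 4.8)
The plan is to prove the two implications separately. Throughout I use the fact, furnished by Theorem~\ref{thm:bdhchar}, that a connected bipartite graph is BDH if and only if it is chordal bipartite and domino-free; equivalently, by the standard pruning characterisation of distance-hereditary graphs (in which the true-twin operation cannot occur in the bipartite setting), that a connected BDH graph on at least two vertices can be built from $K_2$ by repeatedly adding a pendant vertex or a false twin.

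\emph{($\Leftarrow$).} I argue by induction on $|V(G)|$ along such a construction, the base case $\mathbf{H}(K_2)$ being a single vertex. For the step, write $G=G'+v$ with $G'$ BDH. If $v$ is a false twin of $w$, adding $v$ merely inserts $v$ into the $w$-shore of every maximal biclique through $w$; the closure operator of $G$ then agrees with that of $G'$ after this relabelling, so $\mathcal{L}(G)\cong\mathcal{L}(G')$ and $\mathbf{H}(G)\cong\mathbf{H}(G')$. If $v$ is pendant at $u$, then $G$ has a unique maximal biclique $B_v$ through $v$, with shores $\{u\}$ and $N_G(u)$, and every other maximal biclique of $G'$ remains maximal in $G$ with the same shores, with the single possible exception of $N_{G'}(u)\cup\{u\}$. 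When $N_{G'}(u)\cup\{u\}$ is itself a maximal biclique of $G'$, it is in fact a maximal element of $(\B(G'),\preceq)$, and in $G$ it just absorbs $v$, whence again $\mathbf{H}(G)\cong\mathbf{H}(G')$. When it is not maximal in $G'$, the set $N_{G'}(u)$ is closed; the corresponding maximal biclique $B^{\ast}$ is the greatest element of the ideal $\{B\in\B(G'):X(B)\subseteq N_{G'}(u)\}$, and $B_v$ enters $\mathbf{H}(G)$ as a new leaf adjacent precisely to $B^{\ast}$, all covers of $\mathbf{H}(G')$ being preserved. In every case $\mathbf{H}(G)$ remains a tree.

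\emph{($\Rightarrow$).} The heart of the matter is a heredity lemma: if $\mathbf{H}(G)$ is a forest then so is $\mathbf{H}(H)$ for every induced subgraph $H$ of $G$; it suffices to treat $H=G-v$. The map $C\mapsto C\setminus\{v\}$ carries the closed $X$-sets of $G$ onto those of $G-v$, is order preserving and surjective, and has each fibre equal to an interval $\{D,\,D\cup\{v\}\}$ of at most two elements. From this fibre structure alone one checks that a cover $D\lessdot D'$ of $\mathcal{L}(G-v)$ lifts to a saturated chain of length at most three in $\mathcal{L}(G)$, joining the bottom of the fibre of $D$ to the top of the fibre of $D'$ and passing through the single cover from the top of the fibre of $D$ to the bottom of the fibre of $D'$. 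Given a cycle in $\mathbf{H}(G-v)$, pick for each of its vertices the top of that vertex's fibre if the vertex is a local minimum of the cycle and the bottom of its fibre otherwise, and splice the corresponding lifted chains: this yields a closed walk in $\mathbf{H}(G)$ with no immediate backtrack, which forces a cycle in $\mathbf{H}(G)$, a contradiction. (Every element produced lies strictly between $\bot$ and $\top$, so the walk indeed lives inside $\mathbf{H}(G)$.) Granting the lemma, it remains to check small cases: a direct computation shows that the maximal bicliques of $C_{2k}$ ($k\geq 3$) are the $2k$ cherries $\{v_{i-1},v_i,v_{i+1}\}$ and that $\mathbf{H}(C_{2k})$ is a $2k$-cycle, while $\mathbf{H}$ of the domino is a $4$-cycle. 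Hence if $\mathbf{H}(G)$ is a tree then $G$ has no induced cycle of length $\geq 6$ and no induced domino, i.e.\ $G$ is chordal bipartite and domino-free, hence BDH. (Connectedness of $\mathbf{H}(G)$ for connected $G$ is routine: two maximal bicliques sharing a vertex are joined in $\mathbf{H}(G)$ through their meet or their join, and connectivity of $G$ propagates this along a path.)

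The step I expect to be the main obstacle is the heredity lemma: the natural candidate $C\mapsto C\setminus\{v\}$ fails to be a lattice homomorphism—it need not preserve joins—so the cycle-lifting must be carried out using only the weaker interval-fibre structure, and one must check that the resulting closed walk in $\mathbf{H}(G)$ truly has no backtrack at the extrema of the lifted cycle, which is exactly what the choice of fibre endpoints is designed to guarantee.
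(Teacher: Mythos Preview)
Your \emph{if} part is correct and takes a genuinely different route from the paper. Rather than inducting along the pendant/twin construction, the paper first proves a closure property (Theorem~\ref{thm:convex}: BDH graphs are closed under adding a vertex whose neighbourhood is $N(v)\cap N(v')$), deduces that the incidence bipartite graph $\Gamma(\mathcal{X}_G)$ of the family of $X$-shores is itself BDH (Corollary~\ref{lem:tb}), and then argues that a cycle in $\mathbf{H}(G)$ with the fewest non-flow nodes would have at least six of them (Lemma~\ref{lem:noflow}) and would produce a hole in $\Gamma(\mathcal{X}_G)$. Your inductive argument is shorter and more elementary; the paper's route, in return, isolates structural facts (Lemmas~\ref{lem:galois2} and~\ref{lem:noflow}) that already hold for all domino-free bipartite graphs. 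One minor imprecision in your write-up: whether the biclique $N_{G'}(u)\cup\{u\}$ is a maximal or a minimal element of $(\B(G'),\preceq)$ depends on the colour class of $v$; in one case $B_v$ is a leaf attached above $B^{\ast}$ and in the other below. This is harmless by the self-duality in Remark~\ref{rem:usef3}.

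Your \emph{only if} part follows the paper's overall strategy---show that acyclicity of $\mathbf{H}$ is inherited by induced subgraphs, then rule out the domino and holes---but the cycle-lifting argument has a gap. You claim that each cover $D\prec D'$ in $\mathcal{L}(G-v)$ lifts to a saturated chain ``passing through the single cover from the top of the fibre of $D$ to the bottom of the fibre of $D'$''. That cover need not exist. Take $v\in X$: if the fibre of $D$ is $\{D,\,D\cup\{v\}\}$ and the fibre of $D'$ is $\{D'\}$ (that is, $D$, $D\cup\{v\}$ and $D'$ are all closed in $G$ while $D'\cup\{v\}$ is not---nothing in the interval-fibre structure forbids this), then the top of the first fibre is $D\cup\{v\}$ and the bottom of the second is $D'$, and these are \emph{incomparable} because $v\notin D'$. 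Your rule ``top at local minima, bottom elsewhere'' then asks for a chain from $D\cup\{v\}$ to $D'$ that is simply not there. The heredity lemma is true, and the fix is to lift via the closure operator of $G$: set $\ell(D)$ to be the closure of $D$ in $G$, which always lies in the fibre and is order-preserving; then the offending configuration cannot arise (it would require $v$ to lie in the closure of $D$ but not in that of the larger set $D'$), and each cover lifts to a saturated chain of length one or two with no backtracks at the splices. The paper carries out the same heredity step in different language via Lemmas~\ref{lem:Htree1}--\ref{lem:Htree2} and Theorem~\ref{thm:retract}, showing that $\mathbf{H}(G-v)$ arises from $\mathbf{H}(G)$ by node-deletions and arc-contractions.
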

Otherwise stated: after deleting $\bot$ and $\top$,  $\mathcal{L}(G)$ is a
tree-like poset. This is a very strong property: for instance, it
allows efficient enumeration of linear extensions \cite{atk}.
The question of studying bipartite graphs (binary relations) whose Galois lattice is tree-like (arborescence-like in a sense) was raised first in \cite{bdov}. Here we completely solve the problem from a graph-theoretical view-point. We also give a direct proof that $(\mathcal{B}(G),\preceq)$ has dimension at most 3, though this can be derived by known properties of planar posets~\cite{trmoo3}.
\mybreak
Although, as we show in Section \ref{sec:indirect}, Theorem \ref{thm:main} can be deduced with some extra work from other known results on graphs and hypergraphs (by taking the longest dipath in Diagram \ref{eq:diagram2}) the proof we present here is direct and self-contained. 
\mybreak
The corpus of theoretical and algorithmic machinery on Galois lattices would certainly allow a pure lattice theoretical development of the present paper. Nevertheless, we prefer to present the result as much combinatorially as possible. For instance, a bipartite graph is the counterpart of the lattice theoretical notion of \emph{context induced by a binary relation} and maximal bicliques are the graph theoretical counterparts of \emph{concepts} in Galois lattices \cite{gw}. In this respect we prefer to look at the covering diagram of a poset as to a directed acyclic graph and to think of ``crowns'' as cycles in an undirected graph. This choice allows us from the one hand to make the paper very self-contained and on the other hand to elicit the purely combinatorial arguments behind our proofs. As an example consider Lemma \ref{lem:Htree1}. That easy result is also a rather straightforward consequence of the fact that the Galois lattice of a sub-context of a given context can be order-embedded canonically into the Galois lattice of the context so that the covering diagram of the former is the covering diagram of an induced subposet (not necessarily a sublattice) of the latter. The proof we give is nothing but than a specialization of the general arguments used in Proposition 32 in \cite{gw}, but has the advantage of avoiding more sophisticated notion that would remain otherwise unused throughout the rest of the paper. Also notions such as \emph{join-irreducibility} and  \emph{meet-irreducibility} will be briefly recalled and referred to when they come out and mostly from a diagrammatic approach, because what matters for our purposes is their graphical counterpart. 

We warn the reader that our results do not use assumptions about the context usually introduced under the clause ``without loss of generality'', such as that the context is reduced or clarified. While this fact is not a gain of generality from a theoretical view-point it is surely a gain in algorithmic robustness and computational complexity.

\subsection{Relations with other classes of graphs}
BDH graphs have very strong structural properties and some of them are
highlighted in the characterization by Bandelt and Mulder \cite{bm} recalled in Theorem \ref{thm:bdhchar} (see also \cite{audamo,bm} and the
monograph \cite{bralespi}). Moreover, BDH graphs are related to other very well known classes of graphs: Ptolemaic graphs and series parallel graphs. The relation between Ptolemaic graphs and BDH graphs has been partially introduced above and will be further pursued in Section \ref{sec:indirect}. Let us discuss here the relation with series parallel graphs.

Ellis-Monaghan and Sarmiento \cite{mosa} showed that the
class of bipartite distance hereditary graphs is a nice
nontrivial class of polynomially computable instances for the
\emph{vertex-nullity interlace polynomial} introduced by Arratia,
Bollob\'{a}s and Sorkin in \cite{abs}, under the name of
\emph{interlace polynomial}. The former authors achieved
their result by exploiting a strong topological relationship between BDH
graphs and series-parallel graphs. In this paper we prove another deep relation between the two classes, namely BDH graphs are \emph{fundamental graphs} of series parallel graphs (see Section \ref{sec:encoding}). This result leads to an implicit representation of the Galois lattice of a BDH graph as a collection of paths in an arborescence. We further discuss this representation in Section \ref{sec:encoding}, where we exploit it to
show how the Galois lattice of a BDH graph, and the BDH graph itself, can be efficiently
encoded. The encoding of the BDH graph  requires $O(n)$ space in the worst case, $n$ being the order of the graph, still allowing the retrieval of the neighborhood of any vertex in time linear in the size of the neighborhood. Moreover, intersections of neighborhoods can be listed in optimal linear time in the size of the intersection, in the worst case. 

\paragraph{Organization} The rest of the paper goes as follows. We first give some preliminary notions in Section~\ref{sec:prel}.
%At that point we will have enough machinery to give in Section~\ref{sec:indirect} a first indirect proof of Theorem \ref{thm:main}.
In Section~\ref{sec:closprop} we prove a useful property of BDH
graphs. Such a property is then exploited in Section \ref{sec:char} to characterize BDH graphs through their Galois lattice---the property is intimately related with Fagin's results \cite{fagin} (as it is discussed in Section \ref{sec:indirect})--. In Section
\ref{sec:encoding} we show how the Galois lattice of any BDH graph can be
encoded as the containment relation among dipaths in an
arborescence. In Section \ref{sec:algorithms} we draw some algorithmic consequences of the encoding and, finally, in Section \ref{sec:indirect} we give another proof of Theorem \ref{thm:main} relying, via Diagram \ref{eq:diagram2}, on known results on \emph{Ptolemaic graphs}, \emph{$\gamma$-acyclic hypergraphs} and \emph{clique lattices of graphs}. 

\section{Preliminaries}\label{sec:prel}
Let $V$ be a finite set. By a \emph{hypergraph} on $V$ (with a little abuse of language) we simply mean a family $\HH$ of subsets of a given ground set $V$. Notice that $\HH$ can contain repeated members.
If $\HH$ is a hypergraph, then $\Gamma(\HH)$ is the \emph{bipartite incidence graph of} $\HH$ over $V$, that is, the bipartite graph with color class $V$ and $\HH$ where there is an edge between $v\in V$ and $F\in \HH$ if and only if $v\in F$.
\mybreak
If $G$ is a graph, then $V(G)$ denotes its
vertex-set if $G$ is undirected, while it denotes its node-set if
$G$ is directed. Similarly, $E(G)$ denotes the edge-set of $G$ if
$G$ is undirected, and the arc-set of $G$ if $G$ is directed. The
distance between two vertices $u$ and $v$ of an undirected graph
$G$, denoted by $d_{G}(u,v)$, equals the minimum
length of a path having $u$ and $v$ as end-vertices, or is $\infty$ if no such path exists. For a graph
$G$ and a vertex $v\in V(G)$, $N_G(v)$ (or simply $N(v)$ when $G$
is understood) is the set of vertices adjacent to $v$ in $G$. The \emph{degree} of $v$ is the number of vertices in $N_G(v)$. The graph induced by $V(G)-\{v\}$ is denoted by $G-v$. Let
$G$ be a directed graph and $v$ be a node of $G$.
We split the neighborhood of $v$ into
$N^-(v)=\{u\in V(G) \ |\ (u,v)\in E(G)\}$ and $N^+(v)=\{w\in V(G)
\ |\ (v,w)\in E(G)\}$. 
The
\emph{outdegree} of $v$ in $G$, denoted by $\deg^+_G(v)$, is the number $|N^+(v)|$ of arcs leaving $v$.
Analogously, the \emph{indegree} of $v$ in $G$, $\deg^-_G(v) = |N^-(v)|$,
is the number of arcs entering $v$.
A node in $G$ is a \emph{source} if its
indegree in $G$ is zero, a \emph{sink} if its
outdegree in $G$ is zero, or a \emph{flow-node} if it is
neither a source nor a sink. A \emph{dipath} $P$ of $G$ is a path
of $G$ with exactly one source in $P$ and exactly one sink in $P$.
A \emph{circuit} $C$ in $G$ is a cycle in $G$ with no source and
no sink in $C$.

The chordless cycle on
$n\geq 4$ vertices is denoted by $C_n$, and a \emph{hole} in a
bipartite graph is an induced subgraph isomorphic to $C_n$ for
some $n\geq 6$. A \emph{domino} is a subgraph isomorphic to the
graph obtained from $C_6$ by joining two antipodal vertices by a
chord (see Figure~\ref{fi:h}). A
$(l,k)$-\emph{chordal graph} is a graph such that every cycle of
length at least $l$ has at least $k$ chords. Bipartite
$(6,1)$-chordal graphs are simply called \emph{chordal bipartite}.
A twin of a vertex $v$ in a graph is a vertex with the same
neighbors as $v$.
\begin{theorem}[Bandelt and Mulder \cite{bm}, Corollaries 3 and 4]\label{thm:bdhchar} The
following statements are equivalent for a bipartite graph $G$:
\begin{enumerate}[(i)]
\item $G$ is a BDH graph;
\item\label{it:bandeltmount} $G$ is constructed from a single vertex by a sequence of adding
pending vertices and twins of existing vertices;
\item $G$ contains neither holes nor induced dominoes;
\item  $G$ is a bipartite $(6,2)$-chordal graph.
\end{enumerate}
\end{theorem}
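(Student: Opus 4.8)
The plan is to establish $(ii)\Rightarrow(i)$, $(i)\Rightarrow(iii)$ and $(iii)\Rightarrow(ii)$, which together give $(i)\Leftrightarrow(ii)\Leftrightarrow(iii)$, and, separately, the chord count $(iii)\Leftrightarrow(iv)$; for self-containedness one may also prove $(iii)\Rightarrow(i)$ directly. All of these are short except $(iii)\Rightarrow(ii)$: essentially the whole weight of the theorem is the existence of a reducible vertex, which is the Bandelt--Mulder one-vertex-extension fact and where I expect the main obstacle.

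For $(ii)\Rightarrow(i)$ I would induct on the construction length. Each operation keeps the graph bipartite (put a new pendant vertex opposite to its neighbour, a new twin in the colour class of its partner) and does not change distances among the vertices already present: a shortest path gains nothing by visiting a leaf, and a shortest path through a twin $v$ of $w$ reroutes through $w$, since $v$ and $w$ have the same neighbours and are non-adjacent. From this, case-splitting on whether a connected induced subgraph $S$ of the extended graph contains the new vertex $v$ and, if so, on whether it also contains $v$'s partner, one reduces the required isometry of $S$ to the inductive hypothesis applied to a smaller connected induced subgraph — namely $S$ with $v$ deleted, or $S$ with $v$ replaced by its twin. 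For $(i)\Rightarrow(iii)$, recall that distance hereditarity is inherited by induced subgraphs, so it suffices to observe that neither a hole nor a domino is distance hereditary: deleting a vertex $v$ from a hole $C_n$ ($n\ge 6$) leaves the connected induced path $P_{n-1}$, whose ends are the two neighbours of $v$ in $C_n$, at distance $n-2\ge 4$ there but $2$ in $C_n$; and deleting from a domino the degree-three endpoint of the chord leaves a $P_5$ whose ends are at distance $4$ inside it but $2$ in the domino.

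The equivalence $(iii)\Leftrightarrow(iv)$ is a chord count. A hole has no chords and the hexagon of an induced domino has exactly one, so a $(6,2)$-chordal bipartite graph has neither, giving $(iv)\Rightarrow(iii)$. For $(iii)\Rightarrow(iv)$, let $C$ be a cycle of length $\ge 6$ in a hole-free, domino-free bipartite graph: if $C$ is chordless it is a hole; if $C$ has a unique chord $e$, then $e$ splits $C$ into two even cycles whose lengths sum to $|C|+2\ge 8$ and are each $\ge 4$, and if $|C|\ge 8$ the longer one has length $\ge 6$ and is chordless (the only non-$C$ edge among the vertices of $C$ being $e$, which is one of its own edges), hence a hole, while if $|C|=6$ then $C$ together with its unique, necessarily long-diagonal, chord is an induced domino; so every cycle of length $\ge 6$ has at least two chords, which is $(iv)$. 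For a direct $(iii)\Rightarrow(i)$, invoke Howorka's criterion \cite{how1} that a graph is distance hereditary iff every cycle of length $\ge 5$ has a pair of crossing chords: in a bipartite graph only even cycles occur, and for such a $C$ of length $\ge 8$ a chord $e$ whose shorter arc is as short as possible cuts off, by hole-freeness and bipartiteness, an induced $C_4$; contracting that arc along $e$ yields a strictly shorter even cycle with the same crossing pattern, so the induction reduces to the hexagon, where domino-freeness forces at least two chords and any two chords of a bipartite hexagon, being long diagonals, cross; hence $G$ is distance hereditary and, being bipartite, BDH.

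It remains to prove $(iii)\Rightarrow(ii)$, for which it suffices to show that a connected bipartite, hole-free, domino-free graph $G$ on at least two vertices has a pendant vertex or a pair of twins — deleting such a vertex keeps $G$ connected, bipartite, hole-free and domino-free, so induction on $|V(G)|$ finishes. My approach is a BFS from an arbitrary root, with layers $L_0,\dots,L_k$: if $k=1$ then $G$ is a star and every leaf is pendant, so assume $k\ge 2$ and, unless a pendant vertex appears, that every $v\in L_k$ has $\deg(v)\ge 2$, hence $N(v)\subseteq L_{k-1}$ with $|N(v)|\ge 2$; the goal is then to find two vertices of $L_k$ with the same neighbourhood, which are automatically non-adjacent and thus twins. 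This is where the bulk of the work lies and where I expect the main obstacle: one must rule out that the sets $\{N(v):v\in L_k\}$ are pairwise distinct. I would pick $v\in L_k$ with inclusion-minimal neighbourhood, choose $x,y\in N(v)$, and, using that $x$ and $y$ are not twins, trace the induced shortest paths from $x$ and $y$ down to $L_{k-2}$ and back through $v$; any chordless cycle of length $\ge 6$ so obtained is a hole and a hexagon carrying a single chord is a domino, so the surviving configurations force a common neighbour of $x$ and $y$ which, together with the minimality of $N(v)$, collapses $v$ and a second vertex of $L_k$ into a twin pair — or else passing to an end block of $G$'s block decomposition reduces matters to a smaller $2$-connected instance, where a cleaner form of the same analysis applies. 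Carrying out this case analysis cleanly is the crux of the theorem; it amounts to Corollaries~3 and~4 of \cite{bm}.
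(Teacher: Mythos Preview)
The paper does not prove this theorem. It is stated in the Preliminaries (Section~\ref{sec:prel}) as a known characterization, attributed to Bandelt and Mulder~\cite{bm} (their Corollaries~3 and~4), and is used as a black box throughout; there is no argument in the paper to compare your attempt against.

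On the merits of your outline: the implications $(ii)\Rightarrow(i)$, $(i)\Rightarrow(iii)$ and $(iii)\Leftrightarrow(iv)$ are handled correctly. Your side argument for $(iii)\Rightarrow(i)$ via Howorka's crossing-chords criterion is sketchy at the ``contracting that arc along $e$'' step --- you are not contracting anything, you are replacing a $3$-arc of $C$ by the chord $e$ to obtain a shorter cycle $C'$ in $G$, and you then need to argue that a pair of crossing chords of $C'$ lifts to a pair of crossing chords of $C$, which requires a little care when one of those chords is $e$ itself. As for $(iii)\Rightarrow(ii)$, you are right that this carries essentially all the weight, and what you have written is a plan rather than a proof: the BFS-with-minimal-neighbourhood argument needs the actual case analysis to be carried out, and at the end you effectively defer back to~\cite{bm}. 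If the goal is a self-contained proof, that analysis must be written out in full; otherwise, citing Bandelt--Mulder as the paper does is the honest option.
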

If $G,H_1,H_2\ldots, H_n$ are graphs, we say that $G$ is
$H_1,\ldots, H_n$-\emph{free} if $G$ contains no induced copy of
$H_i$, $i=1,\ldots, n$. Funny enough, after Theorem
\ref{thm:bdhchar}, one can say that a graph is BDH if and only if
it is DH-free: just solve the latter acronym as Domino
Hole. 

In a poset $(X,\leq)$ an element $y$ covers an element $x$ if
$x\leq y$ and $x\leq z$ $\Rightarrow$ $y\leq z$. If $x,\,y$ are
incomparable we write $x\parallel y$. The \emph{least} or \emph{bottom element} of a poset $(X,\leq)$ is the unique element $x\in X$ such that $x\leq x'$ for every $x'\in X$. This element is usually denoted by $\bot$. The \emph{greatest} or \emph{top element} of $(X,\leq)$, usually denoted by $\top$, is defined dually. The transitive reduction of a poset $(X,\leq)$ is the directed acyclic graph on
$X$ where there is an arc leaving $x$ and entering $y$ if and
only if $y$ covers $x$. The meet and the join operators in a
lattice are denoted as customary by $\wedge$ and $\vee$,
respectively. 

An element $x$ in a poset $(X,\leq)$ is \emph{meet-irreducible} (resp., \emph{join-irreducible}) if $x = y \wedge z$ (resp., $x = y \vee z$) implies $x=y$ or $ x=z$.
Let $(X_1,\leq_1)$ and $(X_2,\leq_2)$ be two posets. An order embedding of $(X_1,\leq_1)$ into $(X_2,\leq_2)$ is a map $f: X_1\rightarrow X_2$ satisfying the following condition
$$x\leq_1 y\Longleftrightarrow f(x)\leq_2f(y).$$  
An \emph{order isomorphism} is a bijective order embedding.
\mybreak
For a bipartite graph $G$ let
$\mathcal{L}^\circ(G)=(\maxbic(G),\preceq)$ and recall that $\mathcal{L}(G)$ denotes $(\maxbic(G) \cup \{\top,\bot\},\preceq)$. Thus $\mathbf{H}(G)$ is the transitive
reduction of $\mathcal{L}^\circ(G)$.
%Notice that if $\mathbf{P}$ is a dipath, then $V(\mathbf{P})$ is a
%chain in $\mathcal{L}^\circ(G)$. Moreover, there is a dipath
%joining any two comparable elements of $\mathcal{L}^\circ(G)$.
%Hence $G$ is connected if and only if so is $\mathbf{H}(G)$.
Throughout the rest of the paper we represent a biclique $B$ of a bipartite graph $G$ by the ordered pair of its shores, i.e, we write $B=(U,W)$ to mean that $U=X(B)$, $W=Y(B)$ and that $X(B)\cup Y(B)$ induces a complete bipartite subgraph of $G$. Moreover, with some abuse of notation, if $v\in V(G))$, then we write $v\in B$ to mean that $v\in X(B)\cup Y(B)$ and, analogously, we write $B-v$ for the biclique induced by $(X(B)\cup Y(B)-\{v\}$. A biclique $B$ \emph{dominates} a biclique $B'$ if $X(B')\subseteq
X(B)$ and $Y(B')\subseteq Y(B)$.
\begin{figure}
    \begin{center}
        %\begin{figbox}
             \includegraphics[width=8.5cm]{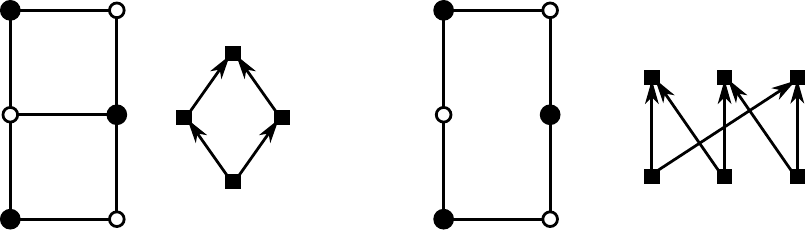}
        %\end{figbox}
    \end{center}
    \caption{Domino and $C_6$ and the corresponding Galois lattices.}
    \protect\label{fi:h}
\noindent\hrulefill%
\end{figure}
As an example, let $G$ be either the domino or the $C_6$ (see Figure~\ref{fi:h}). If $G$ is the domino, then $\maxbic(G)$ contains four
members: the vertex-sets of the two stars centered at vertices of
degree three and the vertex-sets of two squares; $\mathbf{H}(G)$
is thus a directed square with one source and one sink; if $G$
is the $C_6$ then the members of $\maxbic(G)$ are the
vertex-sets of the subpaths of $G$ of length 2; therefore,
$\mathbf{H}(G)$ is a directed $C_6$ with three sources and three
sinks.
\begin{remark}\label{rem:usef2} For $B,\,B'\in
\maxbic(G)$ one has $B\parallel B'$ if and only if
$\{X(B),\,X(B')\}$ and $\{Y(B),\,Y(B')\}$ both have inclusionwise
incomparable members. Indeed, if $X(B)\subseteq X(B')$, say, then
$X(B)\cup(Y(B)\cup Y(B'))$ is a biclique of $G$ dominating $B$.
\end{remark}
\begin{remark}\label{rem:usef3} Galois lattices are self-dual in the following sense: if $\mathcal{L}(G)$
is the Galois lattice of $G$ then $\mathcal{L}^*(G)$ (the lattice
dual of $\mathcal{L}(G)$) is the Galois lattice of $G$ with color
classes interchanged. We often use this fact later in the
following way: if we prove a property of the lattice for the
$X$-shores of maximal bicliques, then the same property holds by
duality for the $Y$-shores.
\end{remark}
If $X_0\subseteq X$, then there is a biclique $B_0\in \mathcal{L}^\circ(G)$ such that $X(B_0)=X_0$ if and only if $X_0=\bigcap_{y\in Y_0} N(y)$ for some $Y_0\subseteq Y$. 
Analogously if $Y_0\subseteq Y$, then there is a biclique $B_0\in \mathcal{L}^\circ$ such that $Y(B_0)=Y_0$ if and only if $Y_0=\bigcap_{x\in X_0} N(x)$ for some $X_0\subseteq X$.
% HO ELIMINATO QUESTA ESPRESSIONE: NON E' VERA PER OGNI X_0 e Y_0 ... 
%In fact $$B_0=\left(\bigcap_{y\in Y_0}N(y), \bigcap_{x\in X_0}N(x)\right).$$
Using these facts one has that the projections  $(X_0,Y_0)\mapsto X_0$ and $(X_0,Y_0)\mapsto Y_0$ are actually order isomorphism between $\mathcal{L}^\circ(G)$ and $\{X(B) \ |\ B\in \mathcal{L}^\circ(G)\}$ and $\{Y(B) \ |\ B\in \mathcal{L}^\circ(G)\}$. Hence 
\begin{equation}\label{eq:shoresisosx}
\mathcal{L}^\circ (G)\cong \Big(\big\{X(B) \ |\ B\in \mathcal{L}^\circ(G)\big\}, \subseteq\Big)
\end{equation}
and
\begin{equation}\label{eq:shoresisodx}
\mathcal{L}^\circ (G) \cong \Big(\big\{Y(B) \ |\ B\in \mathcal{L}^\circ(G)\big\}, \supseteq\Big)
\end{equation}
(see also \cite{gw}).

% QUI C'ERA LA SEZIONE  CON LA INDIRECT PROOF

\section{A Closure Property}\label{sec:closprop}
Before proceeding toward the proof of Theorem \ref{thm:main}, we
discuss separately a sort of ``convexity property'' for the
neighborhood of the vertices of a BDH graph. Such a property is stated in Theorem \ref{thm:convex} and it is needed to prove the necessity in Theorem \ref{thm:main}, besides, we deem it interesting on its
own. In Section~\ref{sec:indirect} we show that Theorem \ref{thm:convex} is equivalent to one of Fagin's results~\cite{fagin}, namely, to the first implication in Theorem~\ref{thm:fagin}. Let $G$ be a connected BDH graph. For
$v,\, v'\in V(G)$, let $G\star \{v,\,v'\}$ be the graph defined as
follows:
\begin{itemize}
\item[--] if $v$ and $v'$ are in different color classes, then $G\star
\{v,\,v'\}$ is $G$;
\item[--] if $v$ and $v'$ are in the same color class, then $G\star
\{v,\,v'\}$ is obtained from $G$ by adding a new vertex
$\widehat{vv'}$ to the color class of $v$ and $v'$. Vertex $\widehat{vv'}$ is adjacent to every vertex in $N(v)\cap N(v')$.
\end{itemize}
\begin{theorem}\label{thm:convex} Let $G$ be a BDH graph and let $v,\,v'\in V(G)$. Then $G\star \{v,\,v'\}$ is a BDH
graph, that is the class of BDH graphs is closed under $\star$.
\end{theorem}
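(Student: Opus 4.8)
The plan is to use the forbidden‑subgraph description of BDH graphs, namely condition (iii) of Theorem~\ref{thm:bdhchar}. The case where $v,v'$ lie in different colour classes is vacuous, since then $G\star\{v,v'\}=G$. If $v=v'$ the new vertex $\widehat{vv'}$ is simply a twin of $v$, so the statement reduces to closure of BDH graphs under adding twins, which is condition (ii) of Theorem~\ref{thm:bdhchar}. Thus I may assume $v\neq v'$ lie in a common colour class $X$; I set $G':=G\star\{v,v'\}$, $w:=\widehat{vv'}$, and I must show that $G'$ contains neither an induced hole nor an induced domino. Since $G$ contains neither, any such forbidden induced subgraph $H\subseteq G'$ must contain $w$. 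The only facts I will use about $w$ are: $N_{G'}(w)=N_G(v)\cap N_G(v')$; $w$ is non-adjacent to $v$ and to $v'$ (same colour class); hence $N_{G'}(w)\subseteq N_{G'}(v)$ and $N_{G'}(w)\subseteq N_{G'}(v')$; and no vertex adjacent in $G'$ to both $v$ and $v'$ can fail to be adjacent to $w$. (If $|N_{G'}(w)|\le 1$ then $w$ has too small a degree to lie in a hole or a domino, so there is nothing to prove; thus $|N_{G'}(w)|\ge 2$.)

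Hole case. First I would show $v,v'\notin V(H)$: if, say, $v\in V(H)$, then $v$ and $w$ are non-adjacent vertices of the chordless cycle $H$ lying in the same colour class, and $N_H(w)=N_{G'}(w)\cap V(H)\subseteq N_{G'}(v)\cap V(H)=N_H(v)$; since every vertex of a chordless cycle of length $\ge 6$ has degree $2$, this forces $N_H(w)=N_H(v)=\{a,b\}$, so both $a$ and $b$ have neighbourhood $\{v,w\}$ in $H$, making $H$ a $4$-cycle --- a contradiction. So write $N_H(w)=\{a,b\}$ (both in the class opposite to $v,v'$) and let $P:=H-w$, an induced $a$--$b$ path in $G$. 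Joining the ends of $P$ through $v$ gives a cycle $C_v$ (legitimate, since $a,b\in N_G(v)$ and $v\notin V(P)$) of length $|V(H)|\ge 6$; as $P$ is induced, the only possible chords of $C_v$ run from $v$ to an internal vertex of $P$, so if there are none then $C_v$ is a hole of $G$, a contradiction. Hence $v$ has an internal neighbour on $P$. Listing the internal neighbours of $v$ on $P$ in order along $P$ and looking at the consecutive sub-paths of $P$ they cut off (from $a$ to the first one, between consecutive ones, from the last one to $b$), each such sub-path closed up through $v$ is again an induced cycle --- no chord can occur, by induced-ness of $P$ and by the choice of the sub-paths --- hence has length $\le 4$, i.e.\ length exactly $2$. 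This forces $v$ to be adjacent to \emph{every} vertex of $P$ lying in the class opposite to $v$. But then $v'$ is adjacent to none of the \emph{internal} such vertices (a common neighbour of $v$ and $v'$ would be a neighbour of $w$, contradicting $P=H-w$), so $v'$ has no chord to the analogous cycle $C_{v'}$, making $C_{v'}$ a hole of $G$ --- a contradiction. So $G'$ is hole-free.

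Domino case. Now $H$ is an induced domino; since a domino has only seven edges this is a short finite check, organised by the $H$-degree of $w$. Label the domino $d_1d_2\cdots d_6$ as the $6$-cycle with antipodal chord $d_2d_5$. If $\deg_H(w)=3$ then $w\in\{d_2,d_5\}$, so up to the automorphism swapping them $w=d_2$, $N_H(w)=\{d_1,d_3,d_5\}$; deleting $w$ leaves the path $d_3d_4d_5d_6d_1$, on which $v$ (adjacent to $d_1,d_3,d_5$ and to nothing else there, for colour reasons) closes up to an induced domino of $G$ --- contradiction. If $\deg_H(w)=2$ then, since the four degree-$2$ vertices form one orbit of the automorphism group, up to symmetry $w=d_1$, $N_H(w)=\{d_2,d_6\}$, and $H-w$ is a $4$-cycle $d_2d_3d_4d_5$ with a pendant $d_6$ at $d_5$; if $v\not\sim d_4$ then $\{v,d_2,\dots,d_6\}$ induces a domino of $G$, while if $v\sim d_4$ then $v'\not\sim d_4$ (else $d_4$ would be a neighbour of $w$) and $\{v',d_2,\dots,d_6\}$ induces a domino of $G$ --- contradiction. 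The only configuration not covered by the hole-case argument is that $v$ itself is a vertex of $H$, necessarily of degree $3$ (the domino has no twins); up to symmetry $v=d_2$ and $w=d_4$, and one checks $v'\notin V(H)$, $v'\sim d_3,d_5$, and $v'\not\sim d_1$ because $d_1\notin N_{G'}(w)$; deleting $w=d_4$ and inserting $v'$ yields an induced domino of $G$ on $\{v',d_1,d_2,d_3,d_5,d_6\}$ --- contradiction. Hence $G'$ is domino-free, and by Theorem~\ref{thm:bdhchar} it is a BDH graph.

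I expect the hole case to be the real substance: the crux is the rigidity step, where ``$G$ is hole-free'' forces $v$ to have many chords onto $C_v$ while ``$v,v'$ have no common neighbour outside $N(w)$'' forces $v'$ to have none, and the clash of these two constraints produces the contradiction. The domino case, though it needs care with the ``$v\in V(H)$'' configuration, is pure enumeration once one sees that the right move is always to delete $w$ and re-attach $v$ or $v'$. A viable alternative would be induction on $|V(G)|$ via the pruning sequence of Theorem~\ref{thm:bdhchar}(ii): peel off a pendant vertex or a twin of $G$, apply induction to the smaller graph's $\star$-operation, and check that the same vertex may be peeled off $G\star\{v,v'\}$; this also works but fragments into a comparable number of subcases according to how the removed vertex meets $\{v,v'\}$.
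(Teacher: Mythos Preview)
Your proof is correct and follows the same overall strategy as the paper's: use the forbidden-subgraph characterisation (Theorem~\ref{thm:bdhchar}(iii)), assume a hole or an induced domino appears in $G\star\{v,v'\}$, and obtain a contradiction by swapping the new vertex $\widehat{vv'}$ for $v$ or $v'$. Your domino analysis is essentially the paper's, organised the same way by the degree of the new vertex and by whether $v$ or $v'$ lies in the domino.

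The one genuine difference is the hole case. The paper invokes characterisation~(iv) ($(6,2)$-chordality): replacing $\widehat{vv'}$ by $v$ gives a long cycle in $G$, which must carry at least two chords, all incident to $v$; looking at the first two chord-ends along the cycle isolates a sub-cycle with exactly one chord, contradicting $(6,2)$-chordality. You instead stay within characterisation~(iii) and push further: hole-freeness forces \emph{every} opposite-colour vertex of $P$ to be a neighbour of $v$, and then the constraint $N(w)=N(v)\cap N(v')$ forbids $v'$ from seeing any internal vertex of $P$, so the cycle through $v'$ is chordless and hence a hole in $G$. Your argument is slightly longer but self-contained within the hole/domino description, while the paper's is shorter at the cost of switching to the $(6,2)$-chordal viewpoint; both are entirely elementary.
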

\begin{proof}
Without loss of generality, let $v,\, v'\in X$ and to simplify the
notation let $v_0=\widehat{vv'}$ and $G_0=G\star \{v,\,v'\}$. By
contradiction suppose that $G_0$ contains induced dominoes or holes
while $G$ does not. Suppose first that $D$ is an induced domino in
$G_0$. Clearly $v_0\in V(D)$. Let $d$ be the number of neighbors of
$v_0$ in $D$. Then $d=2$. To see this observe that if $d=3$ then
neither $v$ nor $v'$ can belong to $D$, because both $v$ and $v'$ should be adjacent to three vertices in $D$, contradicting the assumption that $D$ is domino-free.
Since $v\not\in V(D)$ it
follows that $V(D)\Delta \{v_0,v\}$ induces a domino in $G$
contradicting that $G$ is domino-free. Thus $d=2$. Let $y_1$,
$y_2$ and $y_3$ be the vertices of $D$ contained in $Y$, with,
say, $y_1$ and $y_2$ adjacent to $v_0$ and $y_3\not\in N(v)\cap
N(v')$. At least one among $v$ and $v'$ does not belong to $V(D)$,
otherwise $y_3$ would belong to $N(v)\cap N(v')$. Suppose first
that exactly one of them is in $V(D)$ and let, without loss of
generality, $v\in V(D)$ and $v'\not\in V(D)$. Then, $y_3\in N(v)$
and $y_3\not\in N(v')$, because $y_3\not\in N(v)\cap N(v')$.
Therefore, $V(D)\Delta\{v_0,v'\}$ induces a domino in $G$, still a
contradiction. We must conclude that $v$ and $v'$ do not belong to
$D$. Now the graphs $D_1$ and $D_2$ induced in $G_0$ by
$V(D)\Delta\{v_0,v\}$ and $V(D)\Delta\{v_0,v'\}$, respectively,
are both induced subgraphs of $G$. Since $G$ is domino-free
neither of them is a domino. Therefore, $y_3$ must be adjacent to
$v$ in $D_1$ and to $v'$ in $D_2$ implying that $y_3\in N(v)\cap
N(v')$. The latter contradiction proves that $G_0$ is domino-free.

Suppose now that $G_0$ contains a hole $H$. The vertex-set of such
a hole must contain $v_0$ and contains neither $v$ nor $v'$
(otherwise $H$ would be chorded). Now the graph induced in $G_0$ by
$V(D)\Delta\{v_0,v\}$ is a subgraph of $G$ and therefore, by
Statement (iv) of Theorem \ref{thm:bdhchar}, $H$ possesses at
least two chords and such two chords must be incident to $v$.
Choose one of the possible orientations of $H$ and let $y,\,y'\in
V(H)\cap G$ the end-vertices of the chords incident to $v$ met as
first and as second, respectively, while traveling on $H$
starting from $v$ in the chosen orientation. Let $R$ be the set of
vertices met after $y'$ and before coming back to $v$. Now
$V(H)\setminus R$ induces a cycle with exactly one chord. A
contradiction.
\end{proof}
Let
$\mathcal{X}_G=\left(X(B) \ |\ B\in \maxbic(G)\right)$ and
$\mathcal{Y}_G=\left(Y(B) \ |\ B\in \maxbic(G)\right)$.

\begin{corollary}\label{lem:tb} If $G$ is a BDH graph then
so are the graphs $\Gamma(\mathcal{X}_G)$ and $\Gamma(\mathcal{Y}_G)$.
\end{corollary}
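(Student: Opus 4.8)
The plan is to realize both $\Gamma(\mathcal{X}_G)$ and $\Gamma(\mathcal{Y}_G)$ as graphs obtained from $G$ by a sequence of operations that preserve the BDH property, so that Theorem~\ref{thm:bdhchar}(\ref{it:bandeltmount}) — or rather its closure-theoretic reformulation via Theorem~\ref{thm:convex} — does the work. By Remark~\ref{rem:usef3} it suffices to treat $\Gamma(\mathcal{X}_G)$, since $\Gamma(\mathcal{Y}_G)$ is obtained by the same construction applied to $G$ with its color classes swapped, and BDH-ness is invariant under that swap. So fix the color classes $X,Y$ of $G$ and recall that the vertices of $\Gamma(\mathcal{X}_G)$ are $X$ on one side and $\{X(B)\mid B\in\maxbic(G)\}$ on the other, with $x$ adjacent to $X(B)$ iff $x\in X(B)$.

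The key observation is that every shore $X(B)$ is, by the discussion preceding~\eqref{eq:shoresisosx}, exactly an intersection $\bigcap_{y\in Y_0}N(y)$ of neighborhoods of $Y$-vertices. So I would build $\Gamma(\mathcal{X}_G)$ from $G$ by iterating the $\star$ operation of Theorem~\ref{thm:convex} on pairs of $Y$-vertices: starting from $G$, each application $G\mapsto G\star\{y,y'\}$ adds a new $Y$-side vertex $\widehat{yy'}$ whose neighborhood is $N(y)\cap N(y')$, i.e.\ a new vertex representing an intersection of two already-present neighborhoods. Iterating, and using that intersections can be formed two at a time, one produces after finitely many steps a BDH graph $G'$ (BDH by repeated use of Theorem~\ref{thm:convex}) whose $Y$-side contains a vertex $z_B$ with $N(z_B)=X(B)$ for every $B\in\maxbic(G)$. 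Now $\Gamma(\mathcal{X}_G)$ is essentially the subgraph of $G'$ induced by $X$ together with these representative vertices $\{z_B\}$: an $X$-vertex $x$ is joined to $z_B$ precisely when $x\in N(z_B)=X(B)$, which is the incidence defining $\Gamma(\mathcal{X}_G)$. Since BDH graphs are closed under taking induced subgraphs (immediate from Theorem~\ref{thm:bdhchar}(iii), holes and dominoes being induced), $\Gamma(\mathcal{X}_G)$ is BDH. Applying the dual argument gives that $\Gamma(\mathcal{Y}_G)$ is BDH as well.

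The main technical point to get right — and where I expect the real work to lie — is the bookkeeping in the iteration: one must check that the set of $Y$-side neighborhoods appearing in $G'$ genuinely closes up to include all the $X(B)$, that the intermediate "auxiliary" intersection-vertices introduced along the way can be deleted at the end without disturbing the induced-subgraph structure on $X\cup\{z_B\}$, and that repeated shores (recall $\maxbic(G)$ is taken as a family, possibly with repetitions in the definitions of $\mathcal{X}_G,\mathcal{Y}_G$) are handled by allowing twin $Y$-vertices, which is exactly the "adding twins" operation of Theorem~\ref{thm:bdhchar}(\ref{it:bandeltmount}) and hence harmless. A clean way to package the repetition issue is to note that if $z$ and $z'$ are twins then $G\star\{z,z'\}$ just adds another twin, so multiplicities cost nothing. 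Once the iteration is set up carefully, the BDH-closure at each step is immediate from Theorem~\ref{thm:convex}, and the conclusion follows. (Alternatively, and perhaps more simply, one can invoke Theorem~\ref{thm:convex} only finitely many times by first listing the shores and then, for each, expressing it as a nested intersection of neighborhoods of original $Y$-vertices — a binary-tree of $\star$-operations of depth $\lceil\log|Y|\rceil$ — which avoids any need to reason about a closure process converging.)
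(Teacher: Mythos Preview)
Your proposal is correct and follows essentially the same argument as the paper: realize every shore as an iterated intersection of neighborhoods, close $G$ under the $\star$ operation of Theorem~\ref{thm:convex} to get a BDH supergraph containing a vertex representing each shore, and then pass to an induced subgraph. The paper's proof is terser (it treats $\mathcal{Y}_G$ rather than $\mathcal{X}_G$ and packages the iteration as $\Gamma(\mathcal{C})\cong\Gamma(\{N_{\tilde G}(x)\mid x\in\tilde X\})$ for a single $\tilde G$ built from $G$ by repeated $\star$), but the ideas and the sequence of steps are the same; the bookkeeping issues you flag are exactly those the paper elides.
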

\begin{proof} By duality it suffices to prove the lemma only for
$\mathcal{Y}_G$. One has $W\in \mathcal{Y}$ if and only if $(U,W)\in \maxbic(G)$ for some $U\subseteq X$ and $W=\bigcap_{u\in
U}N_G(u)$. Therefore, $\mathcal{Y}_G$ is a subfamily of the family
$$\mathcal{C}=\left(\bigcap_{u\in U}N_G(u) \ |\ U\subseteq X\right)$$
and $\Gamma(\mathcal{Y}_G)$ is an induced subgraph of
$\Gamma(\mathcal{C})$. Observe that
$\Gamma(\mathcal{C})\cong \Gamma(\{N_{\tilde{G}}(x) \ |\ x\in
\tilde{X}\})$ for a certain graph $\tilde{G}$ with color classes
$\tilde{X}$ and $Y$ arising from $G$ by a repeated application of
operation $\star$. Such an operation preserves the property of
being a BDH graph. Thus $\Gamma(\mathcal{C})$ (and hence
$\Gamma(\mathcal{Y}_G)$) is BDH.
\end{proof}

\section{Characterizing Chordal, Domino-Free Bipartite Graphs by their Galois
lattices}\label{sec:char}
In this section we prove Theorem
\ref{thm:main}. The proof of the \emph{if part} is given in
Section \ref{sec:if} while the \emph{only if part} is proved in
Section \ref{sec:onlyif}.
%Unless otherwise stated, the color
%classes of a bipartite graph $G$ are $X$ and $Y$.

\subsection{Proof of the \emph{if part}}\label{sec:if}
Let us exploit now the structure of BDH graphs to prove the
\emph{if part} of Theorem \ref{thm:main}. We remark that the next two
results apply to the more general class of domino-free bipartite graphs.
\begin{lemma}\label{lem:galois2} If $G$ is a domino-free bipartite graph then
for any $B^1,\,B^2\in\maxbic(G)$ such that $B^1 \parallel B^2$ one has
$$\bot\not= B^1\wedge B^2\ \Rightarrow\ B^1\vee B^2=\top\ \ \ \ \mathrm{and}
\ \ \ \ B^1\vee B^2\not=\top\ \Rightarrow\ B^1\wedge B^2=\bot.$$
\end{lemma}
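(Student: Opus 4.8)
The plan is to work entirely with the shores, using the order isomorphisms \eqref{eq:shoresisosx} and \eqref{eq:shoresisodx}. Recall that the meet and join in $\mathcal{L}(G)$ are computed by a closure: for $B^1,B^2\in\maxbic(G)$, the join $B^1\vee B^2$ is the unique maximal biclique whose $X$-shore is $\bigcap_{y\in Y(B^1)\cap Y(B^2)}N(y)$ (equivalently, the smallest maximal biclique whose $X$-shore contains $X(B^1)\cup X(B^2)$), and dually the meet $B^1\wedge B^2$ has $Y$-shore $\bigcap_{x\in X(B^1)\cap X(B^2)}N(x)$. In particular $B^1\wedge B^2=\bot$ exactly when $X(B^1)\cap X(B^2)=\emptyset$ (since we assumed no universal vertices, $\bigcap_{x\in\emptyset}N(x)=Y$ forces the $Y$-shore to be all of $Y$, i.e. the bottom), and $B^1\vee B^2=\top$ exactly when $Y(B^1)\cap Y(B^2)=\emptyset$.

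With this reformulation the two implications are contrapositives of one another, so it suffices to prove: if $B^1\parallel B^2$ and $B^1\wedge B^2\neq\bot$ and $B^1\vee B^2\neq\top$, then $G$ contains an induced domino, contradicting domino-freeness. So assume there exist $x\in X(B^1)\cap X(B^2)$ and $y\in Y(B^1)\cap Y(B^2)$. Since $B^1\parallel B^2$, by Remark \ref{rem:usef2} the $X$-shores are inclusionwise incomparable and so are the $Y$-shores; pick $x_1\in X(B^1)\setminus X(B^2)$, $x_2\in X(B^2)\setminus X(B^1)$, $y_1\in Y(B^1)\setminus Y(B^2)$, $y_2\in Y(B^2)\setminus Y(B^1)$. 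Now examine the six vertices $x,x_1,x_2,y,y_1,y_2$ (they are pairwise distinct by the membership relations just listed). Using that $X(B^i)\cup Y(B^i)$ induces a complete bipartite graph, we get the edges: $x$ is adjacent to $y,y_1,y_2$; $x_1$ is adjacent to $y,y_1$ (both in $Y(B^1)$) but not to $y_2$ (since $x_1\notin X(B^2)$ while $y_2\in Y(B^2)$ — wait, this requires $x_1y_2\notin E(G)$, which follows because if $x_1y_2\in E(G)$ then... ) — the nonadjacency $x_1y_2\notin E(G)$ is exactly the point that needs the maximality of the bicliques, and this is the step I expect to be the main obstacle: one must argue that $x_1$ cannot be adjacent to $y_2$, for otherwise $X(B^2)\cup\{x_1\}$ together with $Y(B^2)$, or some larger biclique, would contradict that $x_1\notin X(B^2)=\bigcap_{y'\in Y(B^2)}N(y')$; concretely, $x_1\notin X(B^2)$ means there is some $y'\in Y(B^2)$ with $x_1y'\notin E(G)$, and similarly for the other non-edges. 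Then on the six vertices $\{x_1,x_2,y_1,y_2\}\cup\{x,y\}$ one reads off the edge set $\{xy,xy_1,xy_2,x_1y,x_1y_1,x_2y,x_2y_2\}$, which is a $6$-cycle $y_1,x_1,y,x_2,y_2,x,y_1$ plus the chord $xy$ — i.e. an induced domino — once one checks the remaining potential edges $x_1y_2$, $x_2y_1$, $x_1x_2$, $y_1y_2$ are all absent (the last two automatically, being within a color class).

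So the skeleton is: (1) translate meet/join to intersections of shores using the closure formulas and the no-universal-vertex hypothesis; (2) observe the two implications are contrapositive-equivalent; (3) from the hypotheses extract the six vertices; (4) establish the four crucial non-edges from the fact that the shores are intersections of neighborhoods of the opposite side — this is where maximality of $B^1,B^2$ genuinely enters, and it is the delicate part, since the witnesses for non-membership need not be among $y,y_1,y_2$, so one may instead need to take $B^1\vee B^2$ and $B^1\wedge B^2$ themselves and argue inside them, or argue that whatever the witness $y'$ is, the induced subgraph on $\{x,x_1,x_2,y,y_1,y_2\}$ or a variant already contains a domino or a hole; (5) conclude the induced domino and contradict Theorem \ref{thm:bdhchar}(iii) (or rather domino-freeness, which is the hypothesis here). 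A cleaner route for step (4) may be to invoke Remark \ref{rem:usef2}'s proof idea directly: incomparability of shores on both sides is precisely what prevents the relevant vertices from being absorbed into a common biclique, and spelling this out vertex-by-vertex gives the needed non-adjacencies.
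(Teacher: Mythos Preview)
Your approach is the same as the paper's: assume both $B^1\wedge B^2\neq\bot$ and $B^1\vee B^2\neq\top$, pick six vertices, and exhibit an induced domino. Your reformulation in step (1) and the contrapositive observation in step (2) are correct (and slightly more explicit than the paper, which simply notes $X(B^0)\neq\emptyset$ and $Y(B^3)\neq\emptyset$ and picks $x_0,y_3$ there).

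The gap you flag at step (4) is genuine: with your choice of $y_1\in Y(B^1)\setminus Y(B^2)$ and $y_2\in Y(B^2)\setminus Y(B^1)$ taken \emph{arbitrarily}, the non-edges $x_1y_2$ and $x_2y_1$ need not hold, and no argument on the fixed six vertices will rescue this. The fix is exactly the one you gesture at (``the witnesses for non-membership''), and it is what the paper does: choose $y_1,y_2$ \emph{after} $x_1,x_2$, as the witnesses themselves. Since $x_1\notin X(B^2)=\bigcap_{y\in Y(B^2)}N(y)$, there exists $y_2\in Y(B^2)$ with $x_1y_2\notin E(G)$; because $x_1\in X(B^1)$ is adjacent to all of $Y(B^1)$, this $y_2$ automatically lies in $Y(B^2)\setminus Y(B^1)$. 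Symmetrically pick $y_1\in Y(B^1)\setminus N(x_2)$. With these choices all edges and non-edges of the domino on $\{x,x_1,x_2,y,y_1,y_2\}$ (your notation) are forced, and the proof closes. The paper phrases the existence of such $y_1,y_2$ via a ``heavy vertex'' notion---a vertex universal in $B^1\cup B^2$---observing that maximality forbids heavy vertices in $B^1\Delta B^2$; this is exactly the maximality argument above.
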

\begin{proof} By the sake of contradiction assume that
$$\bot\not= B^1\wedge B^2\prec B^1\vee B^2\not=\top$$
for some $B^1,\,B^2\in\maxbic(G)$ and let $B^0=B^1\wedge B^2$
and $B^3=B^1\vee B^2$. A vertex $v\in V(B^1)\cup V(B^2)$ is called
\emph{heavy} if it is a universal vertex in the subgraph induced by $B^1 \cup B^2$. The maximality of $B^1$ and $B^2$
implies that there is no heavy vertex in $B^1 \Delta B^2$. Now
$X(B^0)$ and $Y(B^3)$ are both nonempty because $B^0\not=\bot$ and
$B^3\not=\top$. Thus, we can pick $x_0\in X(B^0)$ and $y_3\in
Y(B^3)$. Hence, $x_0y_3\in E(G)$. Now pick $x_1\in X(B^1)\setminus
X(B^2)$ and $y_2\in Y(B^2)\setminus N(x_1)$; the latter vertex
exists because $x_1$ is not heavy. Similarly, pick $x_2\in
X(B^2)\setminus X(B^1)$ and $y_1\in Y(B^2)\setminus N(x_2)$; the
latter vertex exists because $y_1$ is not heavy. The subgraph
induced by $\{x_0,x_1,x_2,y_1,y_2,y_3\}$ is a domino, contradicting
the hypothesis.
\end{proof}
\begin{figure}
    \begin{center}
        %\begin{figbox}
             %\includegraphics[width=8.5cm]{galois_forb.eps}
             \includegraphics[width=8.5cm]{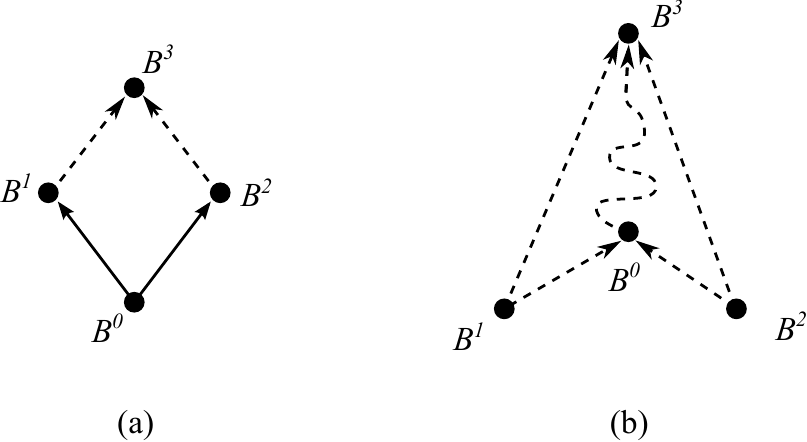}
        %\end{figbox}
    \end{center}
    \caption{Proof of Lemma \ref{lem:noflow}; dotted lines represent dipaths and hence chains.}
\protect\label{fi:galois_forb}
\noindent\hrulefill%
\end{figure}
\begin{lemma}\label{lem:noflow} Let $G$ be a domino-free bipartite
graph and $\mathbf{H}(G)$ be the transitive reduction of
$\mathcal{L}^\circ(G)$. Then, any cycle of $\mathbf{H}(G)$ that does not contain $\bot$ or $\top$ has at
least six non-flow-nodes.
\end{lemma}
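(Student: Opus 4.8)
The plan is to argue by contradiction: suppose $\mathbf{H}(G)$ contains a cycle $\Gamma$ avoiding $\bot$ and $\top$ with at most four non-flow-nodes (since sources and sinks alternate along a cycle of the Hasse diagram, the number of non-flow-nodes is even, so ``fewer than six'' means ``two or four''). Recall from Remark~\ref{rem:usef2} that two maximal bicliques are incomparable exactly when their $X$-shores and their $Y$-shores are both inclusionwise incomparable. First I would handle the case of two non-flow-nodes: a cycle with exactly one source $S$ and one sink $T$ consists of two internally disjoint dipaths from $S$ to $T$, i.e.\ two chains $S = A_0 \prec A_1 \prec \cdots \prec A_k = T$ and $S = A'_0 \prec A'_1 \prec \cdots \prec A'_\ell = T$ whose interiors are pairwise incomparable. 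Picking an interior element from each chain, say $A_1$ and $A'_1$, they are incomparable, yet $A_1 \wedge A'_1 \succeq S \succ \bot$ and $A_1 \vee A'_1 \preceq T \prec \top$, contradicting Lemma~\ref{lem:galois2}.

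For the remaining case, a cycle with four non-flow-nodes decomposes into four dipaths (chains) joining, in cyclic order, a source $S_1$, sink $T_1$, source $S_2$, sink $T_2$, so that $S_1 \prec T_1$, $S_2 \prec T_1$, $S_2 \prec T_2$, $S_1 \prec T_2$ along the four chains, with the interiors of distinct chains pairwise incomparable. The configuration is exactly the ``crown'' depicted in Figure~\ref{fi:galois_forb}. The goal is again to exhibit a forbidden domino. From $S_1 \prec T_1$ we get $x_0 \in X(S_1) \subseteq X(T_1)$, hence a vertex $y_1 \in Y(T_1)$ with $x_0 y_1 \in E(G)$; symmetrically $S_2 \prec T_2$ yields $x_2 \in X(S_2)$ and $y_3 \in Y(T_2)$ with $x_2 y_3 \in E(G)$. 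Incomparability of the two sources and of the two sinks (via Remark~\ref{rem:usef2}) lets me choose $x_1 \in X(S_1)\setminus X(S_2)$, a vertex in $X(S_2)\setminus X(S_1)$, and dually vertices in $Y(T_1)\setminus Y(T_2)$ and $Y(T_2)\setminus Y(T_1)$; using the completeness of each biclique together with non-adjacencies forced by these strict containments, I expect the six vertices $\{x_0, x_1, x_2, y_1, y_2, y_3\}$ (suitably chosen) to induce a domino in $G$, contradicting the hypothesis --- this mirrors the vertex bookkeeping in the proof of Lemma~\ref{lem:galois2}.

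The main obstacle is the four-non-flow-node case: I must verify that the edges and non-edges among the chosen six vertices are precisely those of a domino, and in particular that no extra chord appears. The delicate points are (a) showing the chosen representatives can be taken distinct and in the correct color classes, and (b) ruling out adjacencies such as $x_1 y_3$ or $x_2 y_1$ that would turn the domino into something chorded --- here one uses that $S_1 \preceq T_2$ and $S_2 \preceq T_1$ only give containments along the cycle and that the relevant incomparabilities from Remark~\ref{rem:usef2} force the missing edges. A clean way to organize this is to first reduce, using the order isomorphisms \eqref{eq:shoresisosx} and \eqref{eq:shoresisodx}, to a purely set-theoretic statement about the shore families, and then read off the domino directly; I would also invoke Remark~\ref{rem:usef3} to cut the casework roughly in half by duality between $X$-shores and $Y$-shores.
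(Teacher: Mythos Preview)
Your treatment of the two-non-flow-node case is correct and coincides with the paper's: the two elements covering the unique source are incomparable, have meet $\succeq S\neq\bot$ and join $\preceq T\neq\top$, and Lemma~\ref{lem:galois2} gives the contradiction.

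The four-non-flow-node case, however, has a real gap. You propose to mimic the vertex bookkeeping of Lemma~\ref{lem:galois2}, but that construction needs \emph{both} a vertex $x_0\in X(B^1\wedge B^2)$ and a vertex $y_3\in Y(B^1\vee B^2)$. In the crown $S_1,T_1,S_2,T_2$, Lemma~\ref{lem:galois2} itself forces $S_1\wedge S_2=\bot$ (because $S_1\vee S_2\preceq T_1\neq\top$) and dually $T_1\vee T_2=\top$ (because $T_1\wedge T_2\succeq S_1\neq\bot$). Hence whichever incomparable pair you feed into the Lemma~\ref{lem:galois2} template, one of the two extreme shores is empty and the required witness vertex is unavailable; the non-edges you list under item~(b) cannot be forced by the relations you have. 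This is not a matter of tidier bookkeeping --- the approach as stated does not close. (A side remark: the claim that the interiors of the four chains are ``pairwise incomparable'' is not justified either; arcs of $\mathbf{H}(G)$ record only covers, and nodes on different chains may well be comparable in $\mathcal{L}^\circ(G)$.)

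The paper takes a different and short route for this case. With sources $S_1,S_2$ and sinks $T_1,T_2$ one has $S_1\vee S_2\preceq T_1$ and $S_1\vee S_2\preceq T_2$; after replacing one sink by $S_1\vee S_2$ (and redefining the cycle accordingly) the two sinks become comparable, so a dipath connects them in $\mathbf{H}(G)$. Adjoining that dipath to the cycle produces two subcycles, each with exactly one source and one sink, i.e.\ two non-flow-nodes, and the already-settled first case yields the contradiction. No domino is built by hand here; Lemma~\ref{lem:galois2} enters only through the reduction to the two-non-flow-node case.
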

\begin{proof}
First notice that any cycle of any directed graph has an even
number of non-flow nodes because it has as many sources as sinks.
Moreover, if the graph is acyclic then such a number is positive.
Transitive reductions are triangle-free. Therefore, if
$\mathbf{C}$ is a cycle in $\mathbf{H}(G)$, then $\mathbf{C}$ has
at least four nodes and the number of its non-flow-nodes is even
and positive. Let us prove first that $\mathbf{C}$ cannot have
exactly two non-flow-nodes. Referring to 
Figure~\ref{fi:galois_forb} (a), let $B^0$ and $B^3$ be the source and the sink of
$\mathbf{C}$, respectively, and let $B^1$ and $B^2$ be the
neighbors of $B^0$ in $\mathbf{C}$. Clearly $B^0=B^1\wedge
B^2\not=\bot$. The existence of dipaths from $B^1$ to $B^3$ and
$B^2$ to $B^3$ implies that $B^1\vee B^2\prec B^3\not=\top$,
contradicting Lemma \ref{lem:galois2}. So $\mathbf{C}$ must have
at least four non-flow-nodes and it is of the form shown in Figure~\ref{fi:galois_forb} (b), where $B^1$ and $B^3$ are sources, while $B^0$ and $B^2$ are sinks. As above, the existence of the dipaths from
$B^i$ to $B^0$ and from $B^i$ to $B^3$, $i=1,2$, implies that
$B^1\vee B^2\prec B^0$ and $B^1\vee B^2\prec B^3$. So, possibly by
replacing $B^0$ by $B^1\vee B^2$ and redefining $\mathbf{C}$, we
may assume that $B^1\vee B^2=B^0$. By definition, $B^0$ is the only least
upper bound. Thus $B^0\prec B^3$ and there exists a dipath
$\mathbf{P}$ from $B^0$ to $B^3$. Now $\mathbf{C}\cup \mathbf{P}$
contains two cycles with exactly two non-flow-nodes contradicting
the preceding part of the proof. We conclude that $\mathbf{C}$ has
at least six non-flow-nodes.
\end{proof}

We are now ready the prove the \emph{if part} of Theorem \ref{thm:main}.\\

\noindent\textbf{Proof of the \emph{if part} of Theorem
\ref{thm:main}.} We assume $\bot$ and $\top$ have been deleted from $\mathbf{H}(G)$. Since $\mathbf{H}(G)$ is connected we have only
to show that it does not contain cycles. Suppose by contradiction
that $\mathbf{H}(G)$ contains some cycle, and let
$\mathbf{C}$ be a cycle having the least possible number of non-flow-nodes. Let
$2t$, $t\in \mathbb{N}$, be such a number. As $G$ is a BDH graph it
is domino-free. Therefore, by Lemma
\ref{lem:noflow}, $t\geq 3$. Let
$B^1,\ldots,B^{2t-1}$ and $B^2,\ldots,B^{2t}$ be the sources and
the sinks of $\mathbf{C}$, respectively, as they are met
traversing the cycle in a chosen direction. By definition of
transitive reduction one has
$$\emptyset\not=X(B^1)\subseteq X(B^2)\cap
X(B^{2t})$$
and
$$\emptyset\not=X(B^{2i+1})\subseteq X(B^{2i})\cap
X(B^{2(i+1)}),\, i=1,\ldots, t-1.$$ Moreover, for
$i\in\{0,\ldots,t-1\}$ and $j\in\{1,\ldots,t\}$ such that
$|i-j|\not\in\{0,1,t\}$ one has $X(B^{2i+1})\cap
X(B^{2j})=\emptyset$. Otherwise $X(B^{2i+1})\wedge X(B^{2j})\in
V(\mathbf{H}(G))$ and one of the two subpaths of $\mathbf{C}$
connecting $X(B^{2i+1})$ and $X(B^{2j})$ along with the two paths of
$\mathbf{H}(G)$ connecting $X(B^{2i+1})\wedge X(B^{2j})$ to
$X(B^{2i+1})$ and $X(B^{2j})$ respectively, would define a cycle
$\mathbf{C}'$ of $\mathbf{H}(G)$ with fewer non-flow-nodes than
$\mathbf{C}$. Now for $i=0,\ldots, t-1$, pick $x_{2i+1}\in
X(B^{2i+1})$ and let $U=\{x_1,x_3\ldots,x_{2t-1}\}$ and
$\mathcal{U}=\{X(B^{2i}) \ |\ i=1,\ldots t\}$. Thus $U\cup
\mathcal{U}$ induces a hole in $\Gamma(\mathcal{X}_G)$,
contradicting Corollary \ref{lem:tb}.
\subsection{Proof of the \emph{only if part}}\label{sec:onlyif} To complete the proof of Theorem
\ref{thm:main} we need some more properties of $\mathbf{H}(G)$. 

\begin{lemma}\label{lem:galois2bis} Let $G$ be a BDH
graph with at least three vertices. Then $x\in X$ is a cut-vertex
of $G$ if and only if $(\{x\},N(x))\in \maxbic(G)$.
Analogously, $y\in Y$ is a cut-vertex of $G$ if and only if
$(N(y),\{y\})\in \maxbic(G)$.
\end{lemma}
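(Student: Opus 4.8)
The lemma is a biconditional, and by the self-duality of Galois lattices (Remark~\ref{rem:usef3}) it suffices to treat $x\in X$. The plan starts with a reformulation: a star $(\{x\},W)$ belongs to $\maxbic(G)$ exactly when $W=N(x)$ and there is \emph{no} $x'\in X\setminus\{x\}$ with $N(x)\subseteq N(x')$. This is immediate, since the $Y$-shore of $(\{x\},N(x))$ admits no enlargement (any new vertex of a $Y$-shore would have to be adjacent to $x$), and the $X$-shore can be enlarged precisely by such an $x'$. So the lemma amounts to: $x$ is a cut-vertex of $G$ if and only if no vertex of $X\setminus\{x\}$ has a neighborhood containing $N(x)$. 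The implication ``$x$ cut-vertex $\Rightarrow$ the star is maximal'' I would prove by contraposition, and this direction uses nothing about $G$ being BDH: if $x'\in X\setminus\{x\}$ satisfies $N(x)\subseteq N(x')$, then $G-x$ is still connected, because any path through $x$ enters and leaves $x$ through two vertices of $N(x)\subseteq N(x')$ and can be rerouted through $x'$; hence $x$ is not a cut-vertex.

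The content is in the converse: assuming no $x'\in X\setminus\{x\}$ dominates $N(x)$, show $G-x$ is disconnected; suppose not. A couple of routine reductions first: since $|V(G)|\ge 3$ and $G-x$ is connected, every $y\in N(x)$ must have a neighbor other than $x$ (otherwise $y$ is isolated in $G-x$), and consequently $\deg(x)\ge 2$. Choose $x_1\in X\setminus\{x\}$ maximizing $m:=|N(x_1)\cap N(x)|$ (such an $x_1$ with $m\ge 1$ exists by the previous remark) and set $S_1:=N(x_1)\cap N(x)$; by hypothesis $S_1\subsetneq N(x)$. Using connectedness of $G-x$, take a \emph{shortest} path $Q=u_0u_1\cdots u_k$ in $G-x$ with $u_0\in N(x)\setminus S_1$ and $u_k\in S_1$; then $Q$ is induced and $u_1,\ldots,u_{k-1}\notin N(x)$, so the cycle $x\,u_0\,u_1\cdots u_k\,x$ is chordless (no chord can reach $x$, and none can lie on $Q$), of even length $k+2\ge 4$. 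If $k\ge 4$ this chordless cycle has length at least $6$, a hole, contradicting Theorem~\ref{thm:bdhchar}; hence $k=2$.

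So $Q=u_0\,x_1'\,u_2$ with $x_1':=u_1\in X\setminus\{x\}$, and its trace $S':=N(x_1')\cap N(x)$ contains $u_0\notin S_1$ as well as $u_2\in S_1$, so $S'$ meets $S_1$ but is not contained in it (in particular $x_1'\ne x_1$); since $|S'|\le m=|S_1|$ we also get $S_1\not\subseteq S'$. Picking $a\in S_1\setminus S'$, $b:=u_0\in S'\setminus S_1$ and $c:=u_2\in S_1\cap S'$, the six vertices $x,x_1,x_1',a,b,c$ are pairwise distinct and induce exactly the edges $xa,\,xb,\,xc,\,x_1a,\,x_1c,\,x_1'b,\,x_1'c$; deleting the edge $xc$ exposes the $6$-cycle $x\,a\,x_1\,c\,x_1'\,b\,x$ in which $x$ and $c$ are antipodal, i.e.\ an induced domino, again contradicting Theorem~\ref{thm:bdhchar}. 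This contradiction forces $G-x$ to be disconnected, i.e.\ $x$ to be a cut-vertex.

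The delicate point, and the main obstacle, is this last step: connectedness of $G-x$ only yields \emph{local} data — pairs of neighbors of $x$ joined by a short detour avoiding $N(x)$ — whereas the hypothesis to be exploited concerns a \emph{single} vertex dominating all of $N(x)$. The device is to confront such a local witness $x_1'$ with the globally extremal vertex $x_1$: setting up the shortest path so that its endpoints straddle $S_1$ and its interior avoids $N(x)$ is exactly what makes the associated cycle chordless and collapses the hole case to $k=2$, and it is precisely the maximality of $m$ that guarantees $S_1\not\subseteq S'$, turning the $k=2$ configuration into a genuine induced domino rather than a harmless $K_{2,2}$ with pendant vertices.
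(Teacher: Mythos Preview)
Your proof is correct. The forward direction matches the paper's (both are the obvious rerouting argument). For the converse, however, you take a genuinely different route.

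The paper exploits the \emph{isometric} definition of distance-hereditary graphs: since $G-x$ is connected, for any two $y,y'\in N(x)$ one has $d_{G-x}(y,y')=d_G(y,y')=2$, so they share a common neighbor other than $x$. The extremal object is then a maximal biclique $\tilde B\succ(\{x\},N(x))$ with both shores of size at least two and $|Y(\tilde B)|$ largest; a vertex $y_1\in N(x)\setminus Y(\tilde B)$ together with the isometric property again supplies the missing pieces of a domino.

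You instead work purely from the forbidden-subgraph characterization (no holes, no dominoes). Your extremal object is a single vertex $x_1$ maximizing $|N(x_1)\cap N(x)|$; a shortest $(N(x)\setminus S_1)$--$S_1$ path in $G-x$ gives a chordless cycle through $x$, and the no-hole hypothesis collapses its length to four, producing $x_1'$. The maximality of $|S_1|$ is then exactly what ensures $S_1\not\subseteq S'$, so the three ``Venn pieces'' $a,b,c$ exist and the six vertices induce a domino.

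What each approach buys: the paper's argument is slightly shorter once one is willing to invoke the metric property, and it ties in naturally with the biclique lattice language used elsewhere. Your argument is more self-contained relative to Theorem~\ref{thm:bdhchar}(iii), never leaving the forbidden-subgraph viewpoint; the shortest-path trick is a clean way to rule out holes and localize the obstruction, and the maximal-trace vertex is a more elementary extremal choice than a maximal biclique.
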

\begin{proof}
By duality it suffices to prove the lemma only for $x\in X$.
Suppose that $x$ is a cut-vertex of $G$. There is no biclique
$B$ with $x\in X(B)$, $|X(B)| \geq 2$  and $N(x)\subseteq Y(B)$, otherwise the
removal of $x$ cannot disconnect $G$. Hence $(\{x\}, N(x)) \in \B(G)$.

Conversely, let $B=(\{x\}, 
N(x)) \in \B(G)$ and suppose, for the sake of
contradiction, that $x$ is not a cut-vertex of $G$. As the removal
of $x$ does not disconnect $G$, any two neighbors $y$ and $y'$ of
$x$ have a common neighbor $x'\in X$. This follows by the fact
that $G-x$ is connected and hence $d_{G-x}(y,y')=d_{G}(y,y')=2$,
because $G$ is distance hereditary. Therefore,
$(\{x,x'\},\{y,y'\})$ is a biclique and there exists a biclique
$B' \in \maxbic(G)$ such that $B \prec B'$ and both shores of
$B'$ have at least two elements. Among the bicliques of
$\maxbic(G)$ fulfilling these conditions choose one with
$|Y(B)|$ as large as possible. Let this biclique be $\tilde{B}$.
The maximality of $B$ implies that $Y(\tilde{B})$ is strictly
contained in $N(x)$. Let $y_1\in N(x)\setminus Y(\tilde{B})$. One
has $N(y_1)\cap X(\tilde{B})=\{x\}$ by the choice of
$Y(\tilde{B})$. Indeed, if there were $u\in (X(\tilde{B})\setminus
\{x\})\cap N(y_1)$ then $\{x,u\}\cup Y(\tilde{B})\cup \{y_1\}$
would be a biclique with $|Y(\tilde{B})\cup
\{y_1\}|=|Y(\tilde{B})|+1>|Y(\tilde{B})|$. As $x$ is not a
cut-vertex, $y_1$ is not a pending vertex. Hence
$|N(y_1)\setminus\{x\}|\geq 1$. The choice of $\tilde{B}$ also
implies that for each $u\in N(y_1)\setminus\{x\}$ we can find at
least one vertex of $Y(\tilde{B})$ which is not adjacent to $u$
(otherwise the maximality of $\tilde{B}$ would be contradicted).
On the other hand, for each $y\in Y(\tilde{B})$ there is some
$v\in N(y_1)\setminus\{x\}$ which is adjacent to $y$. This because
$G-x$ is connected and $d_{G-x}(y,y_1)=d_{G}(y,y_1)=2$. Therefore,
we can find $x_1\in N(y_1)\setminus\{x\}$ and $y_2,\,y_3\in
Y(\tilde{B})$ in such a way that $y_2$ is a neighbor of $x_1$
while $y_3$ is not. Finally, as $|X(\tilde{B})|\geq 2$, we can find
$x_2\in X(\tilde{B})\setminus \{x\}$. But now
$(\{x,x_1,x_2\},\{y_1,y_2,y_3\})$ induces a domino in $G$. A
contradiction which proves the lemma.
\end{proof}
Recall that in poset that has a bottom element $\bot$, an \emph{atom} is an element of the poset that covers $\bot$. Dually, if the poset has a top element $\top$, a \emph{co-atom} is an element which is covered by $\top$. After this terminology we can say that the cut vertices of $G$ are either atoms or co-atoms.
\mybreak
We now study the behavior of $\mathbf{H}(G-v)$ for $v\in V(G)$.
Let us begin
with an easy but useful property of $\textbf{H}(G)$ in the general
case. The next lemma proves that if the deletion
of a vertex $v$ from a maximal biclique of $G$ does not cause loss of
maximality in the biclique, then $\mathbf{H}(G-v)$ inherits from
$\mathbf{H}(G)$ as much adjacency as possible.
\begin{lemma}\label{lem:Htree1}
Let $G$ be a bipartite graph, $B_0\in \maxbic(G)$ and $v\in B_0$. If
$B_0-v\in\maxbic(G-v)$, then
\begin{itemize}
\item[--]there is an arc $(B-v,B_0-v)$ in $\mathbf{H}(G-v)$ for every $B\in\maxbic(G)$ such that $B-v\in \maxbic(G-v)$and $(B,K_0)$ is an arc of $\mathbf{H}(G)$;
\item[--]there is an arc $(B_0-v,B-v)$ in $\mathbf{H}(G-v)$ for every $B\in\maxbic(G)$ such that $B-v\in \maxbic(G-v)$and $(B_0,B)$ is an arc of $\mathbf{H}(G)$;
\end{itemize}
In other words,
$$\phi: \mathbf{H}(G-v)\ni B-v\longmapsto B\in \mathbf{H}(G)$$
embeds $\mathbf{H}(G-v)$ in $\mathbf{H}(G)$ as a sub-digraph.
\end{lemma}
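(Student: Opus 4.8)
The plan is to exhibit $\phi$ concretely as a map $\maxbic(G-v)\to\maxbic(G)$ and to show that it is an order-preserving injection which recovers the correspondence $B\mapsto B-v$ for the bicliques in the statement. The two displayed bullets then follow from a one-line argument: an element sitting strictly between $B-v$ and $B_0-v$ in $\mathcal{L}^\circ(G-v)$ would, after applying $\phi$, sit strictly between $B$ and $B_0$ in $\mathcal{L}^\circ(G)$, which is impossible once $B_0$ covers $B$. Throughout I may assume $v\in X$, the case $v\in Y$ being obtained by interchanging the color classes (Remark~\ref{rem:usef3}).

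For $C\in\maxbic(G-v)$ I would set $\phi(C):=(X(C)\cup\{v\},\,Y(C))$ when $Y(C)\subseteq N_G(v)$, and $\phi(C):=(X(C),Y(C))=C$ otherwise. The first thing to check is that $\phi(C)\in\maxbic(G)$ with $\phi(C)-v=C$: the last equality is clear since $v\in X$, and for maximality one argues that any biclique $B'$ of $G$ dominating $\phi(C)$ restricts in $G-v$ to a biclique dominating $C$, hence equals $C$ there by maximality of $C$ in $G-v$; a short inspection of the shores (using in the second case that $v\in X(B')$ would force $Y(C)\subseteq N_G(v)$) then yields $B'=\phi(C)$. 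Next, $\phi$ is injective because $C$ is recovered from $\phi(C)$ as $(X(\phi(C))\setminus\{v\},\,Y(\phi(C)))$. Moreover $\phi$ is order-preserving and order-reflecting: since $C\preceq C'$ is equivalent to $Y(C)\supseteq Y(C')$, whenever $v$ is adjacent to all of $Y(C)$ it is also adjacent to all of $Y(C')$, so $X(\phi(C))$ is obtained from $X(C)$ by adding $v$ in a way monotone along $\preceq$, whence $X(C)\subseteq X(C')\iff X(\phi(C))\subseteq X(\phi(C'))$. Finally I would record that $\phi(B-v)=B$ for every $B\in\maxbic(G)$ with $B-v\in\maxbic(G-v)$: if $v\in X(B)$ this is a direct shore computation, and if $v\notin B$ it amounts to the observation that $Y(B)\not\subseteq N_G(v)$, since otherwise $(X(B)\cup\{v\},Y(B))$ would be a biclique of $G$ strictly dominating the maximal biclique $B$.

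Granting these facts, the two bullets are immediate. Suppose $(B,B_0)$ is an arc of $\mathbf{H}(G)$ with $B-v,\,B_0-v\in\maxbic(G-v)$; then $B_0$ covers $B$ in $\mathcal{L}^\circ(G)$ and, by the last property, $B=\phi(B-v)$ and $B_0=\phi(B_0-v)$. If $B_0-v$ did not cover $B-v$ in $\mathcal{L}^\circ(G-v)$, there would be $C\in\maxbic(G-v)$ with $B-v\prec C\prec B_0-v$, and then, $\phi$ being order-preserving and injective, $B\prec\phi(C)\prec B_0$, contradicting that $B_0$ covers $B$. Hence $(B-v,B_0-v)$ is an arc of $\mathbf{H}(G-v)$, which is the first bullet; applying the same to the reversed arc $(B_0,B)$ gives the second. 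The ``in other words'' assertion is then the observation that, being an order-preserving and order-reflecting injection, $\phi$ is an order isomorphism of $\mathcal{L}^\circ(G-v)$ onto the subposet of $\mathcal{L}^\circ(G)$ induced on $\phi(\maxbic(G-v))$, so that $\mathbf{H}(G-v)$ is carried by $\phi$ onto the transitive reduction of that induced subposet (a spanning subdigraph of which is the restriction of $\mathbf{H}(G)$ to the image).

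I expect the only genuinely delicate point to be the well-definedness of $\phi$ together with the identity $\phi(B-v)=B$: that a maximal biclique of $G-v$ lifts \emph{uniquely} to a maximal biclique of $G$, and that this lift is $B$ itself when one starts from $B-v$. Both rest on the same elementary dichotomy --- a biclique of $G$ dominating $C$ either contains $v$, in which case its $Y$-shore lies in $N_G(v)$, or it restricts to a biclique of $G-v$ dominating $C$ --- combined with the maximality of $C$ in $G-v$ (respectively, of $B$ in $G$). Everything else is routine bookkeeping with shores.
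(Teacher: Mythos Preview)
Your proof is correct and rests on the same underlying observation as the paper's: removing $v\in X$ leaves all $Y$-shores unchanged, so the order relation between maximal bicliques is preserved under $B\mapsto B-v$. The paper's argument is considerably terser---it simply records the three equivalences $B_0\parallel B\Leftrightarrow B_0-v\parallel B-v$, $B_0\prec B\Leftrightarrow B_0-v\prec B-v$, $B\prec B_0\Leftrightarrow B-v\prec B_0-v$ and notes they are forced by the $Y$-shores---whereas you make the map $\phi$ explicit, verify that every $C\in\maxbic(G-v)$ has a unique lift in $\maxbic(G)$, and then spell out why covering relations (not just order relations) transfer. This extra care is not superfluous: the passage from ``order preserved'' to ``covering preserved'' genuinely requires knowing that every element of $\maxbic(G-v)$ arises as some $B-v$, a point the paper leaves implicit. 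Your caution in the final paragraph about the precise sense of the ``sub-digraph'' embedding is also well placed; the paper's ``in other words'' should be read in the direction of the two bullets rather than as a literal digraph embedding in the opposite direction.
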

\begin{proof}
By duality, it suffices to prove the
statement only when $v\in X(B)$. Let $B\in\maxbic(G)$ be such that
$B-v\in \maxbic(G-v)$. The thesis of
the lemma follows by the equivalences listed below:
$$
\left(B_0\parallel B\,\Leftrightarrow\, B_0-v\parallel
B-v\right);\,\, \left(B_0\prec B\,\Leftrightarrow\, B_0-v\prec
B-v\right);\,\, \left( B\prec B_0\,\Leftrightarrow\, B-v\prec
B_0-v\right).
$$
To prove these statements it suffices to recall Remark
\ref{rem:usef2} and to observe that since $v\in X$ one has
$Y(B_0-v)=Y(B_0)$ and $Y(B-v)=Y(B)$. Therefore, if $B_0-v$ and
$B-v$ are both maximal in $G-v$ then they must be in the same
relation in $\mathcal{L}(G-v)$ as $B_0$ and $B$ in
$\mathcal{L}(G)$ (and conversely), because this relation is forced
by the relation on the $Y$-shore.
\end{proof}
The next lemma shows instead that in case deletion of a vertex $v$ from a maximal biclique $B$ of $G$
causes loss of
maximality in the biclique, the role of $B$ in  $\mathcal{L}(G)$ is not really relevant.
\begin{lemma}\label{lem:Htree2} Let $G$ be a bipartite graph and
let $v\in V(G)$ and $B\in\maxbic(G)$ be such that $B-v\not\in\maxbic(G-v)$. If $v\in X$ and $B$ is not an atom in $\LL(G)$, then
$\deg^-_{\mathbf{H}(G)}(B)=1$. Moreover, if $(B',B)$ is the unique
arc entering $B$ in $\mathbf{H}(G)$ then $B'\in \maxbic(G-v)$.
Analogously, if $v\in Y$ and $B$ is not a co-atom in $\LL(G)$, then $\deg^+_{\mathbf{H}(G)}(B)=1$.
Moreover, if $(B,B')$ is the unique arc leaving $B$ in
$\mathbf{H}(G)$ then $B'\in \maxbic(G-v)$.
\end{lemma}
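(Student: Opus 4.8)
The plan is to establish the statement for $v\in X$; the case $v\in Y$ then follows by interchanging the two color classes (Remark~\ref{rem:usef3}). I begin with two reductions. First, $v\in X(B)$: deleting a vertex that lies outside a maximal biclique cannot cost the biclique its maximality, so if $v\notin B$ then $B-v=B\in\maxbic(G-v)$, contrary to hypothesis. Second, $|X(B)|\ge 2$: if $X(B)$ were a singleton then the only element $B$ could cover would be $\bot$, so $B$ would be an atom, against the hypothesis.

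Next I would read off a ``witness'' of the non-maximality. Maximality of $B$ gives the two closure identities $Y(B)=\bigcap_{u\in X(B)}N_G(u)$ and $X(B)=\bigcap_{w\in Y(B)}N_G(w)$, and passing to $G-v$ relaxes only the first. Consequently $B-v$ fails to be maximal in $G-v$ exactly because there is a vertex $y_0\in Y\setminus Y(B)$ adjacent to every vertex of $X(B)\setminus\{v\}$ but not to $v$. I then put $W^{*}=\bigcap_{u\in X(B)\setminus\{v\}}N_G(u)$ (so $Y(B)\cup\{y_0\}\subseteq W^{*}$) and $\widehat{B}=\bigl(X(B)\setminus\{v\},\,W^{*}\bigr)$. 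A direct check of the two closure identities, together with $|X(B)|\ge2$ and $Y(B)\ne\emptyset$, shows $\widehat{B}\in\maxbic(G)$; since $v\notin\widehat{B}$ it is also a maximal biclique of $G-v$; and $X(\widehat{B})=X(B)\setminus\{v\}\subsetneq X(B)$, so $\widehat{B}\prec B$. (In fact $\widehat{B}$ is the maximal biclique of $G-v$ obtained by closing the non-maximal biclique $B-v$.)

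It now suffices to show that $\widehat{B}$ is the unique element covered by $B$ in $\mathcal{L}^\circ(G)$; this yields $\deg^{-}_{\mathbf{H}(G)}(B)=1$ (as $B$ is not an atom it does not cover $\bot$), with unique predecessor $\widehat{B}\in\maxbic(G-v)$. I would prove the stronger claim that every maximal biclique $B'$ with $B'\prec B$ satisfies $v\notin X(B')$: then $X(B')\subseteq X(B)\setminus\{v\}=X(\widehat{B})$, hence $B'\preceq\widehat{B}\prec B$, and any lower cover of $B$ must equal $\widehat{B}$. Suppose, for a contradiction, $B'\prec B$ with $v\in X(B')$. Pick $p\in Y(B')\setminus Y(B)$; then $p\in N_G(v)$, and since $p\notin Y(B)$ there is $w\in X(B)$ with $p\notin N_G(w)$, necessarily $w\neq v$ and $w\notin X(B')$. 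Pick also $z\in Y(B)$ and $u\in X(B')\setminus\{v\}$. Using that $z$ is adjacent to all of $X(B)$, that $y_0$ is adjacent to all of $X(B)\setminus\{v\}$ but not to $v$, and that $p$ is adjacent to all of $X(B')$, one checks that the six distinct vertices $v,u,w\in X$ and $p,z,y_0\in Y$ induce in $G$ the $6$-cycle $p,v,z,w,y_0,u$ (listed cyclically) plus exactly the chord $uz$ between the two antipodal vertices $u$ and $z$ --- that is, a domino, contradicting that $G$ is domino-free (Theorem~\ref{thm:bdhchar}(iii)).

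The crux is this last argument, and in particular its degenerate sub-case $X(B')=\{v\}$, where the auxiliary vertex $u$ needed to exhibit the domino is unavailable and $B'$ is forced to be $(\{v\},N_G(v))$. To settle it I would invoke Lemma~\ref{lem:galois2bis}: $(\{v\},N_G(v))\in\maxbic(G)$ makes $v$ a cut-vertex of $G$, so this sub-case cannot occur under the hypotheses in force (equivalently, provided $B$ does not cover $(\{v\},N_G(v))$, which holds in particular whenever $v$ is not a cut-vertex of $G$). Finally, the assertion for $v\in Y$ and $B$ not a co-atom is obtained by running the whole argument in $G$ with its color classes swapped (Remark~\ref{rem:usef3}).
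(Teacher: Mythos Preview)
Your argument imports two hypotheses the lemma does not provide: that $G$ is domino-free (to obtain the contradiction in the main case) and, for the degenerate sub-case $X(B')=\{v\}$, that $v$ is not a cut-vertex (via Lemma~\ref{lem:galois2bis}, which itself requires $G$ to be BDH). The lemma is stated for an arbitrary bipartite graph, so neither is available; and your sentence ``this sub-case cannot occur under the hypotheses in force'' has no justification --- nothing in the stated hypotheses excludes $v$ from being a cut-vertex.

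The paper's route is entirely different and much shorter, using only containment of shores and no structural property of $G$. It picks any $B'\in\maxbic(G-v)$ dominating $B-v$; from $Y(B)\subseteq Y(B')$ one gets $B'\prec B$, hence $X(B')\subseteq X(B)$, forcing $X(B')=X(B)\setminus\{v\}$, so $(B',B)$ is a covering arc and $B'$ coincides with your $\widehat B$. For uniqueness, if $(B'',B)$ were another covering arc with $B''\in\maxbic(G-v)$, then $v\notin X(B'')\subseteq X(B)$ gives $X(B'')\subseteq X(B)\setminus\{v\}=X(B')$, contradicting $B'\parallel B''$. Observe, though, that the paper only excludes a second predecessor \emph{lying in $\maxbic(G-v)$}; it never treats a predecessor $B''$ with $v\in X(B'')$. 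That is precisely the case your domino construction targets, so you have located a genuine gap between the paper's argument and its stated conclusion $\deg^-_{\mathbf H(G)}(B)=1$; small examples --- even BDH ones with $v$ a cut-vertex --- show that the unqualified in-degree claim can fail. For the paper's purposes this is harmless: Lemma~\ref{lem:Htree2} is used only inside Theorem~\ref{thm:retract}, for BDH graphs and after the cut-vertex case has been split off, and there the weaker conclusion (a unique predecessor from $\maxbic(G-v)$) is exactly what the contraction step needs.
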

\begin{proof} By duality it suffices to prove the
statement only when $v\in X$. Since
$B$ is not an atom, $|X(B)| \geq 2$. Since $B-v\not\in\maxbic(G-v)$,
there is some $B'\in\maxbic(G-v)$ which dominates $B-v$.
Therefore, $X(B)-v\subseteq X(B')$ and $Y(B)\subseteq Y(B')$. It
follows that $B'\prec B$ in $\mathcal{L}^\circ(G)$ because
$Y(B)\subseteq Y(B')$ implies $B\nparallel B'$ and $B\not\prec
B'$. Consequently $X(B')\not=X(B)$ and $X(B')\subseteq X(B)$
implying that $X(B)-v=X(B')$. We therefore conclude that
$B'=(X(B)-v)\cup Y(B')$ and that $(B',B)$ is an arc of
$\mathbf{H}(G)$ with $B'\in\maxbic(G-v)$. Moreover, no other
arc $(B'',B)$ for some $B''\in\maxbic(G-v)$ can exist in
$\mathbf{H}(G)$. Indeed, if such an arc existed then all of the
following conditions would hold true:
\begin{itemize}
\item[--] $X(B')$ and $X(B'')$ are
inclusionwise incomparable because $B'\parallel B''$,
$\mathbf{H}(G)$ being a transitive reduction;
\item[--]  $v\not\in X(B'')$ because $B''\in\maxbic(G-v)$;
\item[--] $X(B'')\subseteq X(B)$ because $B''\prec B$.
\end{itemize}
The latter two conditions imply that $X(B'')\subseteq X(B)-v$, but
since $X(B)-v=X(B')$ the first one would be contradicted.
\end{proof}
\noindent Using standard terminology, as in \cite{bs}, a maximal biclique $B$
that satisfies the hypothesis of Lemma \ref{lem:Htree2} corresponds either
to a meet irreducible or to a join irreducible concept in the context associated to the bipartite graph.

The results of Lemmas \ref{lem:galois2bis}, \ref{lem:Htree1}, and \ref{lem:Htree2} imply:
\begin{theorem}\label{thm:retract}
Let $G$ be a BDH graph and let $v\in
V(G)$. Then one of the following conditions holds:
\begin{enumerate}
	\item\label{case:disconnect} $\mathbf{H}(G-v)$ has more connected components than $\mathbf{H}(G)$;
	\item\label{case:induced} $\mathbf{H}(G-v)$ is an induced subgraph of $\mathbf{H}(G)$;
	\item\label{case:contract} $\mathbf{H}(G-v)$ is a contraction of $\mathbf{H}(G)$.
\end{enumerate}
\end{theorem}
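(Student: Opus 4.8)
The plan is to do a case analysis on whether $v$ belongs to some maximal bicliques of $G$ that lose maximality after deletion, and to track how each maximal biclique of $G-v$ relates to one of $G$. Fix $v$ and, without loss of generality by Remark~\ref{rem:usef3}, assume $v \in X$. Partition $\maxbic(G)$ into the ``good'' bicliques $\mathcal{G} = \{B \in \maxbic(G) : B - v \in \maxbic(G-v)\}$ and the ``bad'' ones $\mathcal{D} = \maxbic(G) \setminus \mathcal{G}$. Note first that every maximal biclique of $G - v$ arises either as $B - v$ for some $B \in \mathcal{G}$, or equals some $B \in \maxbic(G)$ that did not contain $v$ and was already maximal in $G - v$ (and such a $B$ lies in $\mathcal{G}$ with $B - v = B$); so the map $\phi$ of Lemma~\ref{lem:Htree1}, $B - v \mapsto B$, is a well-defined injection $\maxbic(G-v) \hookrightarrow \maxbic(G)$ whose image is exactly $\mathcal{G}$, and by that lemma it embeds $\mathbf{H}(G-v)$ into $\mathbf{H}(G)$ as a subdigraph on the vertex set $\mathcal{G}$. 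The whole question is thus: what is the relationship between the induced subdigraph $\mathbf{H}(G)[\mathcal{G}]$ and the image of $\mathbf{H}(G-v)$, and how do the bad bicliques $\mathcal{D}$ fit in.

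Next I would dispose of the bad bicliques using Lemmas~\ref{lem:galois2bis} and~\ref{lem:Htree2}. If $\mathcal{D} = \emptyset$ we are essentially in case~\ref{case:induced}: every $B \in \mathcal{G} = \maxbic(G)$ has $\phi^{-1}(B) = B - v$, and Lemma~\ref{lem:Htree1} gives arcs in both directions, so $\mathbf{H}(G-v)$ is exactly the induced subdigraph on $\mathcal{G} = \maxbic(G)$ — actually all of $\mathbf{H}(G)$, up to renaming — hence an induced subgraph. (A small wrinkle: one must check that $\mathbf{H}(G-v)$ does not acquire \emph{extra} arcs between two good bicliques that were not adjacent in $\mathbf{H}(G)$; this is handled by the equivalences in the proof of Lemma~\ref{lem:Htree1}, since for $v \in X$ the $Y$-shores are unchanged and the covering relation among bicliques with fixed $Y$-shores is determined by those shores.) If some bad biclique is an atom of $\LL(G)$, then by Lemma~\ref{lem:galois2bis} it is $(\{x\}, N(x))$ with $x$ a cut-vertex; deleting $v$ — which here must be that $x$, since an atom $(\{x\},N(x))$ can only fail to stay maximal after removing $x$ itself — disconnects $G$ and, via the cut-vertex structure, splits $\mathbf{H}(G)$, landing us in case~\ref{case:disconnect}. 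This is the point to argue carefully that removal of a cut-vertex strictly increases the number of components of $\mathbf{H}(G)$: the maximal bicliques on the two sides of the cut become incomparable and no longer connected through $(\{x\}, N(x))$.

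In the remaining case, $\mathcal{D} \neq \emptyset$ and no element of $\mathcal{D}$ is an atom. Then Lemma~\ref{lem:Htree2} applies to every $B \in \mathcal{D}$: each has in-degree exactly $1$ in $\mathbf{H}(G)$, its unique in-neighbour $B'$ satisfies $B' \in \maxbic(G-v)$ (so $B' \in \mathcal{G}$), and in fact $X(B') = X(B) - v$, i.e.\ $B' = B - v$ as vertex sets — so $B$ and $B-v$ are ``the same biclique with $v$ added.'' I would then show that $\mathbf{H}(G - v)$ is obtained from $\mathbf{H}(G)$ by contracting each arc $(B - v, B)$ with $B \in \mathcal{D}$ into its tail: the bad biclique $B$ has a single entering arc (from $B-v$) and its outgoing arcs, by Lemma~\ref{lem:Htree1} applied at $B_0 = B - v = B'$ (which is good), correspond precisely to outgoing arcs of $B - v$ in $\mathbf{H}(G-v)$. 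Since the $Y$-shore is untouched and $X(B) = X(B-v) \cup \{v\}$, domination and covering downstream of $B$ match those downstream of $B - v$, so contracting these edges merges each $B$ into $B-v$ and yields exactly $\mathbf{H}(G-v)$; this is case~\ref{case:contract}. The dual statements for $v \in Y$ follow by Remark~\ref{rem:usef3}, interchanging atoms with co-atoms, in-degree with out-degree, and contracting arcs into their heads instead of their tails.

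The main obstacle I expect is the bookkeeping in the contraction case: one must verify that after contracting the arcs $(B-v, B)$ one does not create parallel arcs, spurious transitivity, or miss covering relations — in particular that two distinct bad bicliques $B_1, B_2$ with $B_1 - v = B_2 - v$ cannot occur (ruled out because $X(B_i) = X(B_i - v) \cup \{v\}$ forces $X(B_1) = X(B_2)$ and then $B_1 = B_2$), and that an arc of $\mathbf{H}(G-v)$ out of some good biclique $B_0 - v$ into $B - v$ (with $B$ bad) is the ``same'' as the arc $(B_0, B-v) = (B_0, B')$ into the tail of the contracted edge. The skeleton of all of this is already packaged in Lemmas~\ref{lem:galois2bis}, \ref{lem:Htree1}, and~\ref{lem:Htree2}; the proof is mostly a matter of assembling them along the above trichotomy and checking these consistency conditions.
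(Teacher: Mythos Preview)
Your plan is the paper's: the same three-way split powered by Lemmas~\ref{lem:galois2bis}, \ref{lem:Htree1}, and~\ref{lem:Htree2}. The paper branches first on whether $v$ is a cut-vertex and only then applies Lemma~\ref{lem:Htree2} to every remaining bad biclique; your branching on ``some bad biclique is an atom'' is equivalent, but note that Lemma~\ref{lem:galois2bis} does not by itself say that a bad atom has a singleton $X$-shore. You still need the short polarity argument: if $B$ is bad with $|X(B)|\ge 2$, a maximal biclique of $G$ with $Y$-shore containing $Y(B')$ (where $B'$ dominates $B-v$) lies strictly below $B$ in $\LL(G)$, contradicting atomicity. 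With that filled in, your case~(B) and the paper's ``$v$ is a cut-vertex'' coincide.

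There is one genuine slip in the contraction case. For a bad non-atom $B$, the unique in-neighbour $B'$ supplied by Lemma~\ref{lem:Htree2} satisfies $X(B')=X(B)\setminus\{v\}$ but $Y(B')\supsetneq Y(B)$, so $B'\neq B-v$; indeed $B-v\notin\maxbic(G-v)$ by the very definition of ``bad'', so $B-v$ is not a node of either Hasse diagram and there is no arc $(B-v,B)$ to contract. The arc you must contract is $(B',B)$, and the node of $\mathbf{H}(G-v)$ that inherits the out-neighbourhood of $B$ is $B'$ (which equals $B'-v$, since $v\notin B'$). Once you replace every occurrence of ``$B-v$'' by ``$B'$'' in that paragraph, your bookkeeping checks (injectivity of $B\mapsto B'$, no spurious arcs, etc.) go through unchanged. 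The paper phrases the operation simply as ``contract the unique arc $(B',B)$ to the single node $B'$, or delete $B$ if it is a sink'', which is also why it concludes that either condition~\ref{case:induced} or condition~\ref{case:contract} holds rather than always condition~\ref{case:contract}.
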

\begin{proof}
Let $\maxbic_0(v)\subseteq \maxbic(G)$ be the set of maximal bicliques $B$ containing $v$ such
that $B-v\not\in \maxbic(G-v)$. If $v$ is a cut-vertex of $G$, then condition \ref{case:disconnect} holds.
Otherwise, by Lemma \ref{lem:Htree1} and Lemma \ref{lem:Htree2},
$\mathbf{H}(G-v)$ can be derived from $\mathbf{H}(G)$ by the following
operations:
\begin{itemize}
\item[--] if $v\in X$ and $B\in \maxbic_0(v)$ delete $B$ if it is a sink
in $\mathbf{H}(G)$, otherwise contract the unique arc $(B',B)$
with $B'\in\maxbic(G-v)$ to the single node $B'$;
\item[--] if $v\in Y$ and $B\in \maxbic_0(v)$ delete $B$ if it is a
source in $\mathbf{H}(G)$, otherwise contract the unique arc
$(B,B')$ with $B'\in\maxbic(G-v)$ to the single node $B'$.
\end{itemize}

\noindent In both cases, either condition \ref{case:induced} or condition \ref{case:contract} holds.
\end{proof}

\noindent\textbf{Proof of the \emph{only if part} of Theorem
\ref{thm:main}.} Let us assume that $\mathbf{H}(G)$ is a tree and
let us prove that $G$ is a BDH graph. By Theorem
\ref{thm:retract}, it follows in particular that if $G_0$ is an
induced connected subgraph of $G$, then $\mathbf{H}(G_0)$ is a
contraction of $\mathbf{H}(G_1)$ for some connected induced subgraph
$G_1$ of $G$ such that $G_0$ is an induced subgraph of $G_1$.
Hence, $\mathbf{H}(G_0)$ is a tree, being the contraction of some
subtree of $\mathbf{H}(G)$. Now, to establish the thesis, it
suffices to observe that if $G_0$ is either a domino or a
chordless cycle with length greater than four then
$\mathbf{H}(G_0)$ is not a tree (see Figure~\ref{fi:h}).\qed

\section{Encoding $\LL(G)$}\label{sec:encoding}
In this section, we show how the Galois lattice of a BDH graph can be realized as the
containment relation among directed paths in an arborescence. The
results are achieved by further exploiting the interplay between
BDH graphs and series-parallel graphs. To this end we first recall
the classical two equivalent characterizations of series-parallel
graphs.

\begin{theorem}\label{thm:spchar} Let $S$ be a 2-connected graph with at least two vertices and not isomorphic to $K_2$.
Then the following statements are equivalent and characterize
series-parallel graphs.
\begin{enumerate}[{\rm (a)}]
\item $S$ does not contain a homeomorphic copy of $K_4$, i.e., the complete graph on four
vertices;
\item $S$ can be recursively constructed starting from a \emph{digon} by
either adding an edge with the same end-vertices as an existing
one or subdividing an existing edge by the insertion of a new
vertex.
\end{enumerate}
\end{theorem}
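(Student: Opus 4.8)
This is the classical characterization of series--parallel graphs due essentially to Duffin, so the plan is to establish the equivalence (a)~$\Leftrightarrow$~(b) directly and elementarily. Call $S$ \emph{constructible} if it arises from a digon by a sequence of the two operations in~(b): a \emph{series extension} (subdividing an edge by a new vertex) and a \emph{parallel extension} (adding an edge with the same ends as an existing one). The inverse moves are a \emph{series reduction} (suppressing a vertex of degree two, i.e.\ replacing the path $u\!-\!w\!-\!v$ through it by the single edge $uv$) and a \emph{parallel reduction} (deleting one of two parallel edges). Both moves preserve $2$-connectedness and looplessness once the graph has at least three vertices, and a series move obviously does not change whether the graph contains a subdivision of $K_4$ (topological containment is insensitive to subdividing or suppressing an edge). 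Thus~(b) is equivalent to the statement that $S$ can be reduced to a digon by series and parallel reductions, and it is this reformulation that I would use.

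For (b)~$\Rightarrow$~(a) I would induct on the number of construction steps. A digon has only two vertices, hence no subdivision of $K_4$. A series extension preserves the (non-)existence of such a subdivision, as noted. For a parallel extension, write $S=S'+e$ with $e$ parallel to $e_0=uv\in E(S')$ and suppose, for a contradiction, that $S$ contains a subdivision $T$ of $K_4$. If $T$ uses at most one of $e,e_0$ then, after possibly exchanging $e$ for $e_0$ (they have the same ends), $T$ already sits in $S'$. If $T$ uses both, then in $T$ the vertex $u$ has two incident edges both leading to $v$; but every vertex of a subdivision of $K_4$ lies only on paths joining \emph{distinct} branch vertices, and no such path can use a vertex whose two path-edges go to the same vertex. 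So in all cases $S'$ contains a subdivision of $K_4$ --- a contradiction --- and therefore $S$ has none.

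For (a)~$\Rightarrow$~(b), the main direction, I would induct on $|E(S)|$ and show that $S$ reduces to a digon. The base case is $|E(S)|=2$, where a $2$-connected graph that is not $K_2$ is exactly the digon. For the inductive step: if some two vertices of $S$ are joined by parallel edges and \emph{all} edges of $S$ join those two vertices, then $S$ is already constructible (a digon followed by parallel extensions); otherwise $S$ has at least three vertices, and deleting one of two parallel edges leaves a $2$-connected, loopless graph with no subdivision of $K_4$ and at least three edges, hence reducible to a digon by induction, hence so is $S$. If instead $S$ is simple and has a vertex of degree two, suppress it; the result is again a $2$-connected, loopless, $K_4$-subdivision-free (multi)graph with one fewer edge (and if it happens to be a digon, then $S$ is a triangle, which is constructible), so we again finish by induction. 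The only surviving case is that $S$ is simple with minimum degree at least three, and here one invokes the classical lemma that \emph{every simple graph of minimum degree at least three contains a subdivision of $K_4$}, which contradicts~(a).

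The main obstacle is exactly that last lemma; everything else is bookkeeping about $2$-connectedness and subdivisions. I would either cite it as classical or give a short self-contained proof by induction on the number of edges: if $G$ is disconnected or has a cut vertex, pass to a leaf block $B$ (its non-cut vertices retain degree at least three, and if its attaching cut vertex has degree two in $B$ one suppresses it first), reducing to the $2$-connected case with fewer edges; and in the $2$-connected case one extracts a subdivision of $K_4$ from an ear decomposition --- a cycle together with two ears whose endpoint pairs lie on two distinct arcs of the resulting theta graph already form such a subdivision, and one checks that the hypothesis $\delta(G)\ge 3$, which forbids every vertex of degree two, forces such a configuration to occur.
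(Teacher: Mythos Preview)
The paper does not prove this theorem at all: it is merely \emph{recalled} as the classical characterization of series--parallel graphs, with Statement~(a) explicitly attributed to Duffin. There is therefore no ``paper's own proof'' to compare your proposal against; the authors simply cite the result and move on to use it.

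That said, your sketch is a sound outline of the standard argument. The direction (b)~$\Rightarrow$~(a) is fine; your observation that a subdivision of $K_4$ cannot contain two parallel edges between the same pair of vertices is correct (at any vertex of such a subdivision the incident edges belong to internally disjoint branch paths with pairwise distinct endpoints, so no two of them can lead to the same neighbour). For (a)~$\Rightarrow$~(b), the reductions are handled correctly: deleting one of two parallel edges in a $2$-connected multigraph on at least three vertices does preserve $2$-connectedness, and suppressing a degree-two vertex preserves both $2$-connectedness and $K_4$-subdivision-freeness. The only soft spot is the final lemma (``every simple graph with $\delta\ge 3$ contains a subdivision of $K_4$''): your ear-decomposition sketch is suggestive but not a proof as written---you would need to argue carefully why $\delta\ge 3$ forces two ears whose attachment pairs interlace on the underlying theta graph, or else give one of the cleaner standard arguments (e.g.\ take a maximal path and use the three neighbours of its endpoint on the path to build the subdivision). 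Since the result is classical, citing Dirac or Duffin here is entirely acceptable, which is in effect what the paper itself does.
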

Recall that a \emph{digon} is a graph formed by two edges with the
same end-vertices. It is trivially a series-parallel graph. The
operations described in Statement (b) of Theorem \ref{thm:spchar}
are referred to as \emph{parallel extension} and \emph{series
extension} respectively. Statement (a) in Theorem \ref{thm:spchar}
is Duffin's characterization by forbidden minors. As we are going
to show, the close resemblance between Bandelt and Mulder's
construction (Statement (ii) of Theorem \ref{thm:bdhchar}) and
Statement (b) above is not merely formal. To this end we need the
notion of \emph{fundamental graph} of a graph  which we briefly
recall here. In a connected graph a co-tree is the subgraph
spanned by the complement of the edge-set of a spanning tree. If
$T$ is a spanning tree of $S$ its co-tree is denoted by
$\overline{T}$. Given a connected undirected graph $S$ and one of
its spanning trees $T$, the \emph{fundamental graph of} $S$ is the
bipartite graph $G_S(T)$ with color classes $E(T)$ and
$E(\overline{T})$ where there is an edge between $e\in E(T)$ and
$f\in E(\overline{T})$ if $e\in C(f,T)$, $C(f,T)$ being the
edge-set of the unique cycle in the graph spanned by $E(T)\cup
\{f\}$. Such a cycle is the so called \emph{fundamental cycle
through} $f$ \emph{with respect to} $T$. It can be shown that if
$S$ is 2-connected then $G_S(T)$ is connected. Moreover, $G_S(T)$
does not determine $S$ in the sense that non-isomorphic graphs may
have isomorphic fundamental graphs. As the fundamental graph is a
matroid theoretical tool we refer the interested reader to
\cite{jeelenetal,truemper} for more details. Now we just need to
recast the effect of series and parallel extensions on a graph $S$
on its \emph{fundamental graph} with respect to a given tree and
to observe that adding pending vertices and twins are counterparts
of the above operations. These effects are summarized in the
following table.
\begin{table}[h]\label{table:spbdh}
\centering
\begin{tabular}{lll}
Operation on $S$ & & Operation on $G_S(T)$\\ \hline
Parallel extension on $x\in X$ &$\leftrightarrow$& adding a pending vertex in $Y$ adjacent to $x$ \\
Series extension on $x\in X$ & $\leftrightarrow$ & adding a twin of $x$ in $X$\\
Parallel extension on $y\in Y$ & $\leftrightarrow$ & adding a twin
of $y$ in $Y$\\ Series extension on $y\in Y$ & $\leftrightarrow$ &
adding a pending vertex in $X$ adjacent to $y$.\\ \hline
\end{tabular}
\caption{The effects of series and parallel extension on $S$ on
its fundamental graph $G_S(T)$ with color classes $X=E(T)$ and
$Y=E(\overline{T})$.}
\protect
\noindent\hrulefill
\end{table}

\noindent The following result is now just a remark.
\begin{theorem}\label{thm:spbdh} A connected bipartite graph $G$ with color classes $X$ and $Y$
and at least two vertices is a BDH graph if and only if it is the
fundamental graph of a 2-connected series-parallel graph.
\end{theorem}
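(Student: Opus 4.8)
The plan is to prove both implications by a parallel induction, using the two constructive characterizations already available --- Statement~(ii) of Theorem~\ref{thm:bdhchar} for BDH graphs and Statement~(b) of Theorem~\ref{thm:spchar} for $2$-connected series-parallel graphs --- with the translation table (Table~\ref{table:spbdh}) as the bridge. The base of the induction is the identification of the digon with $K_2$: if $S$ is a digon, a spanning tree $T$ consists of one of its two edges and the co-tree of the other, the only fundamental cycle is all of $S$, so $G_S(T)\cong K_{1,1}=K_2$, which is also the smallest connected BDH graph on at least two vertices (a single vertex with one pending vertex attached).

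For the \emph{only if} part, let $G$ be a connected BDH graph on $n\ge 2$ vertices and induct on $n$. If $n=2$ then $G\cong K_2$ and we invoke the base case. If $n\ge 3$, Statement~(ii) of Theorem~\ref{thm:bdhchar} exhibits $G$ as obtained from $G'=G-v$ by adding a pending vertex or a twin, where $v$ is the vertex added last in a Bandelt--Mulder sequence for $G$. The graph $G'$ is BDH (an earlier stage of that sequence) and still connected --- deleting a pendant cannot disconnect, and a twin $v$ of $u$ can always be bypassed through $u$, which is present and has a nonempty neighbourhood since $G$ is connected with $n\ge 3$ --- and $|V(G')|=n-1\ge 2$. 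By the inductive hypothesis $G'=G_{S'}(T')$ for some $2$-connected series-parallel $S'$ with spanning tree $T'$. Reading Table~\ref{table:spbdh} from right to left, the pending-vertex/twin operation producing $G$ from $G'$ corresponds to a series or parallel extension of $S'$; by Statement~(b) of Theorem~\ref{thm:spchar} the resulting graph $S$ is again $2$-connected series-parallel, and if $T$ is chosen as prescribed in the relevant case (keep both halves of a subdivided tree edge in $T$; otherwise put the new or split-off edge in the co-tree) one checks that $G_S(T)=G$, completing the induction. The \emph{if} part is symmetric: induct on $|E(S)|$, strip the last series or parallel extension off $S$ via Statement~(b) of Theorem~\ref{thm:spchar} to get a smaller $2$-connected series-parallel $S'$ whose fundamental graph is BDH by induction, and apply Table~\ref{table:spbdh} from left to right to conclude that $G_S(T)$ is obtained from it by one pending-vertex/twin step, hence is BDH by Statement~(ii) of Theorem~\ref{thm:bdhchar}.

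The only genuine content --- and the step to be carried out carefully --- is the verification of the four rows of Table~\ref{table:spbdh}: one must track how a series extension (edge subdivision) and a parallel extension (edge duplication) change the family of fundamental cycles, distinguishing whether the affected edge lies in $T$ or in $\overline T$, and check in each of the four resulting cases that the effect on $G_S(T)$ is exactly the addition of a pendant vertex in the opposite colour class or of a twin in the same colour class. For example, if a co-tree edge $e$ is duplicated by a parallel edge $f$ (also placed in the co-tree), then $C(f,T)=\{f\}\cup\bigl(C(e,T)\setminus\{e\}\bigr)$, so $f$ has the same tree-edge neighbours as $e$ and is a twin of it; if instead a co-tree edge $e=uv$ is subdivided into $e'=uw$ (put into $T$) and $e''=wv$ (left in the co-tree), then $w$ is a leaf of $T$, so $C(e'',T)=\{e',e''\}\cup\bigl(C(e,T)\setminus\{e\}\bigr)$ while $C(f,T)$ is unchanged for every other co-tree edge $f$, which amounts to renaming $e$ as $e''$ and attaching the pendant vertex $e'\in E(T)$ to it; the two remaining cases, in which a tree edge is subdivided or duplicated, are entirely analogous. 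One should also note the easy boundary bookkeeping: a subdivision cannot leave both new edges in the co-tree (the spanning tree must reach the new vertex) and a parallel extension cannot place both copies in the tree, so up to relabelling the spanning tree is always of the required form, and $2$-connectedness together with the exclusion of $K_2$ are preserved throughout by Statement~(b) of Theorem~\ref{thm:spchar}.
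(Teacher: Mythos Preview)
Your proof is correct and follows essentially the same route as the paper's: both directions are handled by induction, invoking Statement~(ii) of Theorem~\ref{thm:bdhchar} and Statement~(b) of Theorem~\ref{thm:spchar} with Table~\ref{table:spbdh} as the dictionary between the two constructions, and with the digon/$K_2$ pair as the common base case. You are simply more explicit than the paper in a few places --- verifying connectedness of $G'$, spelling out two of the four rows of Table~\ref{table:spbdh}, and noting the spanning-tree bookkeeping under subdivision and duplication --- but these are details the paper leaves implicit rather than points of divergence.
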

\begin{proof}
The \emph{if part} is proved by induction on the order of $G$. The
assertion is true when $G$ has two vertices because $K_2$ is a BDH
graph and at the same time is also the fundamental graph of a
digon. Let now $G$ have $n\geq 3$ vertices and assume that the
assertion is true for BDH graphs with $n-1$ vertices. By Bandelt
and Mulder's construction (Statement (ii) of Theorem
\ref{thm:bdhchar}) $G$ is obtained from a BDH graph $G'$ either by
adding a pending vertex or a twin. Let $S'$ be a series-parallel
graph having $G'$ as fundamental graph with respect to some
spanning tree. Since, by Table 1, the latter two operations
correspond to series or parallel extension of $S'$, the result
follows by Statement (b) of Theorem \ref{thm:spchar}. Conversely, let $G$
be the fundamental graph of a series-parallel graph $S$ with
respect to some tree $T$. By Statement (b) of Theorem \ref{thm:spchar} and
Table 1, $G$ can be constructed starting from a single edge by
either adding twins or pending vertices. Therefore, $G$ is a BDH
graph by Bandelt and Mulder's construction (Statement (ii) of
Theorem \ref{thm:bdhchar}).
\end{proof}
As credited by Syslo \cite{syslo}, Shinoda, Chen, Yasuda,
Kajitani, and W. Mayeda, proved that series-parallel graphs can be
completely characterized by a property of their spanning trees.
They proved that every spanning tree of a series-parallel graph
$S$ is a depth-first search tree of a 2-isomorphic copy of $S$,
where 2-isomorphism of graphs (in the sense of Whitney
\cite{whitney}) is isomorphism of binary vector spaces between
cycle-spaces of graphs. We can avoid to enter details of such
notions and we can content ourselves of restating in our
terminology a direct consequence of the result.
\begin{theorem}[S. Shinoda et al., 1981; Syslo,
1984]\label{thm:syslo} Let $G$ be the fundamental graph of a
2-connected graph with color classes $X$ and $Y$ with, say, $X$
being the edge-set of a spanning tree. Then there exist a graph
$S'$, a spanning tree $T'$ of $S'$ and an orientation $\phi $ of
$S'$ such that
\begin{itemize}
\item[--] $G\cong G_{S'}(T)$,
\item[--] $\phi T'$ is an arborescence,
\item[--] for each $x\in X$, $\{\phi z \ |\ z\in \{x\}\cup N(x)\}$ is the arc-set of a
directed circuit in $\phi S'$ and, consequently, $\{\phi z \ |\
z\in N(x)\}$ is the arc-set of a directed path in $\phi T'$,
\end{itemize}
if and only if $G$ is the fundamental graph of a series-parallel
graph.
\end{theorem}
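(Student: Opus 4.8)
The plan is to obtain the statement as a translation, into the orientation language of the paper, of the theorem of Shinoda, Chen, Yasuda, Kajitani and Mayeda reported by Syslo, which says that a $2$-connected graph is series-parallel if and only if each of its spanning trees is a depth-first-search tree of some graph $2$-isomorphic to it. So the real work is not to prove a new fact but to set up a faithful dictionary among ``depth-first-search tree'', ``$2$-isomorphism'', ``fundamental graph'', and the three bulleted conditions, and then to read the cited result through this dictionary.

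First I would reformulate ``$T'$ is a DFS tree of $S'$'' in terms of orientations. Fix a root $r$ of $S'$ and let $\phi$ orient every tree edge away from $r$ and every non-tree edge from its endpoint farther from $r$ to its endpoint nearer to $r$ (with respect to $T'$). Then $\phi T'$ is an arborescence, and I would check the elementary equivalence: $T'$ is a DFS tree rooted at $r$ iff, under $\phi$, every non-tree edge joins a node to one of its $\phi T'$-ancestors. In that case, for a non-tree edge $f$, the fundamental cycle $C(f,T')$ consists of $f$ together with the directed tree path running from the ancestor endpoint down to the descendant endpoint; hence $C(f,T')$ is the arc-set of a directed circuit of $\phi S'$, and $C(f,T')\setminus\{f\}$ is the arc-set of a directed path of $\phi T'$. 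The converse is equally direct: if $\phi T'$ is an arborescence and $C(f,T')$ is the arc-set of a directed circuit, then the endpoints of $f$ are linked by a directed tree path, hence are comparable in the ancestor order. Under any isomorphism $G\cong G_{S'}(T')$ the colour class that is the tree-edge class of the $2$-connected graph supplied by the hypothesis is matched with the \emph{cotree} $E(\overline{T'})$ of $S'$ (this is forced by the third bullet, since $\{\phi z\mid z\in N(x)\}$ must lie in $\phi T'$); so the index $x\in X$ runs exactly over the non-tree edges $f$ of $S'$, and $\{x\}\cup N(x)$ is precisely the corresponding fundamental cycle $C(f,T')$. This identifies the last two bullets with ``$T'$ is a DFS tree of $S'$''.

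Second I would recall that two graphs are $2$-isomorphic exactly when they have the same cycle matroid, and that a fundamental graph together with the knowledge of which colour class is $E(T')$ encodes faithfully the cycle matroid of $S'$ with the distinguished basis $E(T')$. Thus ``$G\cong G_{S'}(T')$'' amounts to ``$S'$ realises the matroid attached to $G$, with the prescribed basis'', so that $S'$ may be taken $2$-isomorphic to the graph in the hypothesis; and since being series-parallel is a matroid property (absence of an $M(K_4)$-minor, by Duffin's characterisation recalled in Theorem~\ref{thm:spchar}), it is a $2$-isomorphism invariant, whence ``$G$ is the fundamental graph of a series-parallel graph'' is equivalent to ``the matroid attached to $G$ is series-parallel''. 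Combining the two translations gives both implications: if the matroid is series-parallel, the Shinoda et al. result produces a $2$-isomorphic copy $S'$ in which the spanning tree matching the prescribed basis is a DFS tree, and its associated orientation $\phi$ yields the three stated properties; conversely, given $S',T',\phi$ with the three properties, the first step shows $T'$ is a DFS tree of $S'$, and feeding this into the Shinoda et al. characterisation forces $S'$, hence the matroid attached to $G$, to be series-parallel.

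I expect the only genuine difficulty to be bookkeeping: making the chain ``fundamental graph with a marked tree class $\leftrightarrow$ cycle matroid with a distinguished basis $\leftrightarrow$ $2$-isomorphism class of graphs $\leftrightarrow$ arborescence-with-back-edges'' airtight, and in particular handling the colour-class swap (tree side of the original $2$-connected graph versus cotree side of $S'$) that the third bullet forces, as well as the harmless role of bridges when one passes between an arbitrary series-parallel graph and a $2$-connected one. Care is also needed in the ``only if'' direction to see that it is the DFS property of the \emph{particular} tree $T'$ attached to $G$ — not merely of some tree — that is being fed into the cited characterisation. Once the dictionary is pinned down, both directions fall out of the cited theorem and of Duffin's forbidden-minor characterisation of series-parallel graphs.
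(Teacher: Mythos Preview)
Your proposal is correct and aligns with the paper's treatment: the paper does not give its own proof of this theorem but presents it as a restatement ``in our terminology'' of the Shinoda--Chen--Yasuda--Kajitani--Mayeda result reported by Syslo, and then remarks that Syslo gave a constructive algorithmic proof. Your plan---to exhibit the dictionary between ``$T'$ is a DFS tree of a $2$-isomorphic copy'' and the three bulleted orientation conditions, and to invoke the cited characterisation for both directions---is exactly that restatement carried out in detail; you actually do more than the paper does, including catching the colour-class swap forced by the third bullet (the paper handles this only implicitly, via the later remark that series-parallel graphs form a self-dual class of planar graphs, so the statement holds simultaneously for $Y$).
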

Recall that an \emph{arborescence} is a directed tree with a
single special node distinguished as the \emph{root} such that,
for each other vertex, there is a dipath from the root to that
vertex. Syslo himself gave a constructive algorithmic proof of the
above result \cite{syslo}. A bipartite graph $G$ satisfying the
third condition of Theorem \ref{thm:syslo} will be called a
\emph{path-arborescence} bipartite graph and the arborescence
$\phi T'$ whose existence is asserted in the theorem will be
referred to as a \emph{supporting arborescence}. In general such
an arborescence will not be unique. Remark that series-parallel
graphs form a self-dual class of planar graphs, therefore Theorem
\ref{thm:syslo} holds simultaneously for the color class $Y$, $Y$
being the edge-set of a co-tree. We can state now the following
straightforward consequence of Theorem \ref{thm:spbdh} and Theorem
\ref{thm:syslo}.

\begin{corollary}\label{cor:dpt}
If $G$ is a connected BDH graph with color classes $X$ and $Y$, then $G$ is a \emph{path-arborescence} bipartite graph.
\end{corollary}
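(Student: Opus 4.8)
The plan is to read Corollary~\ref{cor:dpt} off from Theorem~\ref{thm:spbdh} and Theorem~\ref{thm:syslo} in succession, since the corollary is essentially the composition of these two results. First I would dispose of the degenerate case $|V(G)|\le 2$ (either the statement is vacuous, or $G\cong K_2$, the fundamental graph of a digon, for which the path-arborescence condition holds immediately). So assume $G$ has at least two vertices; then Theorem~\ref{thm:spbdh} applies and tells us that $G$ is the fundamental graph $G_S(T)$ of some $2$-connected series-parallel graph $S$ with respect to a spanning tree $T$. In particular $G$ is the fundamental graph of a $2$-connected graph, one of its two color classes being the edge-set of a spanning tree; and by the self-duality of the series-parallel class noted right after Theorem~\ref{thm:syslo} it does not matter which of $X$ and $Y$ plays that role.

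Next I would invoke the ``if'' direction of Theorem~\ref{thm:syslo}. Because $G$ is the fundamental graph of a \emph{series-parallel} graph, the theorem produces a graph $S'$, a spanning tree $T'$ of $S'$, and an orientation $\phi$ of $S'$ such that $G\cong G_{S'}(T')$, the oriented tree $\phi T'$ is an arborescence, and, for every edge $x$ in the tree-edge color class, the set $\{\phi z \ |\ z\in N(x)\}$ is the arc-set of a directed path in $\phi T'$. This last item is precisely the third condition in the definition of a path-arborescence bipartite graph, with $\phi T'$ serving as a supporting arborescence. Hence $G$ is a path-arborescence bipartite graph, which is the claim.

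I do not expect a genuine obstacle: as the paper itself puts it, this is ``just a remark''. The only points that deserve a sentence are (i) the small cases $|V(G)|\le 2$ that lie outside the hypothesis ``at least two vertices'' of Theorem~\ref{thm:spbdh}, and (ii) checking that the color class indexing the third condition of Theorem~\ref{thm:syslo} can be taken to be the one we want — which is exactly what the self-duality of series-parallel graphs, already recorded in the text, provides.
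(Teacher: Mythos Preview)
Your proposal is correct and follows exactly the route the paper intends: the corollary is stated there as a ``straightforward consequence of Theorem~\ref{thm:spbdh} and Theorem~\ref{thm:syslo}'', and you simply spell out that composition, together with the self-duality remark for the choice of color class. The only cosmetic slip is that after disposing of $|V(G)|\le 2$ you should resume with $|V(G)|\ge 3$ rather than $\ge 2$, but this does not affect the argument.
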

Let now $G$ be a BDH graph with color classes $X$ and $Y$. By Corollary \ref{cor:dpt}, there exists an
arborescence $\phi T$ with root $r$ and $X=E(T)$ that supports $G$ and such that, for each $y\in Y$, the set $\{\phi x \ |\ x\in N(y)\}$ is the arc set of a directed path in $\phi T$. Moreover, for $x \in X$, $N(x)$ is a set of dipaths each containing the arc $\phi x$. We now show that the inclusion-wise maximal such paths along with their pairwise intersections give the containment relation of the second coordinate of bicliques in $\mathcal{L}^\circ(G)$, which is in turn isomorphic to $\mathcal{L}^\circ(G)$. This fact allows an efficient encoding of the Galois lattice.

Notice that $\phi$ naturally induces a partial order $\leq_T$ on $X$ (the
\emph{arborescence order}) where $x\leq_T x'$ if
$\phi x$ is an arc of each dipath containing the root and $x'$.
Obviously, dipaths are intervals in this order and conversely.
Denote by $[xx']\subseteq X$ the set of elements in the interval
defined by the dipath having $\phi x$ and $\phi x'$ as end-arcs.
We also say that a subset $Z$ of $X$ spans a dipath if $\phi Z$ is
a dipath in $\phi T$. 
We show that the Galois lattice of a BDH graph is completely determined by some pairwise intersections of neighborhoods, plus some simple neighborhoods.
\begin{corollary}\label{cor:encoding}
Let $G$ be a connected BDH graph with color classes $X$ and $Y$.
Let $$\mathcal{F}=\big\{ N(x)\cap N(x') \ |\ x\not=x',\,\,\, x,x'\in X\big\}\cup \big\{ N(x),\, x\in X\big\}.$$ Then
$$\mathcal{L}^\circ(G)\cong \big(\mathcal{F},\subseteq\big).$$
\end{corollary}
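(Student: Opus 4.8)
The plan is to show that $\mathcal{F}$ is exactly the set $\mathcal{Y}_G = \{Y(B) \mid B \in \mathcal{L}^\circ(G)\}$ of $Y$-shores of maximal bicliques, together with the order-reversal bookkeeping; once this set-equality is established, the isomorphism follows from~\eqref{eq:shoresisodx}, which already identifies $\mathcal{L}^\circ(G)$ with $(\mathcal{Y}_G, \supseteq)$ --- note, though, that we must be careful about whether $(\mathcal{F},\subseteq)$ or $(\mathcal{F},\supseteq)$ is intended, and reconcile with the convention $B \preceq B' \Leftrightarrow Y(B) \supseteq Y(B')$; the cleanest route is to prove $\mathcal{F} = \mathcal{Y}_G$ as families of sets and then invoke~\eqref{eq:shoresisodx} verbatim (so the stated isomorphism should read with $\supseteq$, or equivalently $\mathcal{F}$ is taken with the subset order on the complementary side). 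I would first recall the two facts highlighted just before Corollary~\ref{cor:encoding}: in the supporting arborescence $\phi T$ with $X = E(T)$, each $N(y)$ ($y \in Y$) spans a dipath, and each $N(x)$ ($x \in X$) is a collection of dipaths all sharing the arc $\phi x$.

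The inclusion $\mathcal{F} \subseteq \mathcal{Y}_G$ is the routine direction. For $N(x)$ with $x \in X$: the set $X_0 = \{x\}$ has $\bigcap_{u \in X_0} N(u) = N(x)$, so by the characterization recalled in Section~\ref{sec:prel} (``if $Y_0 \subseteq Y$ then $Y_0 = Y(B_0)$ for some $B_0 \in \mathcal{L}^\circ$ iff $Y_0 = \bigcap_{x \in X_0} N(x)$''), $N(x)$ is a $Y$-shore. Likewise $N(x) \cap N(x') = \bigcap_{u \in \{x,x'\}} N(u)$ is a $Y$-shore. So every member of $\mathcal{F}$ is of the form $Y(B)$.

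The substantive direction is $\mathcal{Y}_G \subseteq \mathcal{F}$: every $Y$-shore $W = \bigcap_{u \in U} N(u)$, $U \subseteq X$, equals either some $N(x)$ or some $N(x) \cap N(x')$. Here the arborescence structure does the work. Each $N(u)$, $u \in U$, is a union of dipaths of $\phi T$ through $\phi u$; more usefully, thinking on the $Y$-side, $W = \{y \in Y \mid U \subseteq N(y)\}$, and each such $y$ contributes the dipath $\phi N(y) \ni \phi u$ for all $u \in U$. The key geometric claim is that in an arborescence the family of dipaths all containing a fixed common set of arcs is totally ordered by inclusion of their arc-sets up to the ``split'' structure --- more precisely, I would argue that $W$ is determined by at most two ``extremal'' constraints: walking along the arborescence, the elements of $U$ force $\phi W$ to lie in a sub-arborescence, and the maximal dipaths compatible with all of $U$ are governed by two endpoints. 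Concretely: let $u_1, u_2 \in U$ be chosen so that $\phi u_1$ is $\leq_T$-minimal among $\phi U$ and $\phi u_2$ realizes the ``deepest independent branch'' (if $\phi U$ lies on a single root-to-leaf dipath, take $u_2 = u_1$); I claim $\bigcap_{u \in U} N(u) = N(u_1) \cap N(u_2)$. The inclusion $\supseteq$ needs: any $y$ with $u_1, u_2 \in N(y)$ has $\phi N(y)$ a dipath containing $\phi u_1$ and $\phi u_2$, hence containing the whole interval $[\phi u_1, \phi u_2]$ of $\phi T$, and by the choice of $u_1, u_2$ this interval contains all of $\phi U$, so $U \subseteq N(y)$. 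The reverse inclusion is immediate since $\{u_1,u_2\} \subseteq U$. When $U = \emptyset$ or $|U|=1$ the statement degenerates to $W = Y$ (excluded, as $G$ has no universal vertex, so this does not arise as a proper shore) or $W = N(u_1) \in \mathcal{F}$.

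The main obstacle I anticipate is the case analysis establishing ``any dipath of $\phi T$ containing two given arcs contains the arborescence-interval between them, and two well-chosen arcs of $\phi U$ pin down all of $\phi U$.'' This rests on the tree/arborescence geometry: in an arborescence a dipath is an interval for $\leq_T$, and an interval containing $\phi u_1$ (the minimum of $\phi U$) and containing a $\leq_T$-maximal element $\phi u_2$ of $\phi U$ on the branch that ``goes farthest'' must --- because any two elements of $\phi U$ on distinct branches below $\phi u_1$ cannot both lie in one dipath --- actually force $\phi U$ to be a chain, reducing to picking $u_1$ minimal and $u_2$ maximal in that chain. I would handle this by first showing $\phi U$ must be totally ordered by $\leq_T$ whenever $W \neq \emptyset$ (if two elements of $\phi U$ were $\leq_T$-incomparable, no single dipath $\phi N(y)$ could contain both, forcing $W = \emptyset$), and then the ``two extremal arcs'' are simply the $\leq_T$-least and $\leq_T$-greatest elements of the chain $\phi U$, with $N(u_1) \cap N(u_2) = \bigcap_{u \in U} N(u)$ following since $[\phi u_1, \phi u_2] \supseteq \phi U$. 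Finally, reassembling: $\mathcal{Y}_G = \mathcal{F}$, and substituting into~\eqref{eq:shoresisodx} gives $\mathcal{L}^\circ(G) \cong (\mathcal{F}, \supseteq)$, completing the proof (with the orientation of the order on $\mathcal{F}$ matched to the paper's convention). \qed
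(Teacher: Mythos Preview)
Your proposal is correct and follows the same overall architecture as the paper: establish $\mathcal{F}=\{Y(B)\mid B\in\mathcal{L}^\circ(G)\}$ and then invoke~\eqref{eq:shoresisodx}. The ``routine'' inclusion and the appeal to the characterization of $Y$-shores as intersections of neighborhoods are handled identically.

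The execution of the nontrivial inclusion differs in a way worth noting. The paper works with a supporting arborescence in which each $N(x)$, $x\in X$, is a dipath; it then intersects these dipaths directly, tracking the $\leq_T$-maximum among left endpoints and a carefully chosen meet among right endpoints to exhibit two indices $i_M,i_m$ with $\bigcap_{x\in X_0}N(x)=N(x_{i_M})\cap N(x_{i_m})$. You instead use the dual arborescence with $X=E(T)$, so that each $N(y)$, $y\in Y$, is a dipath, and you read $W=\bigcap_{u\in U}N(u)$ as $\{y\mid \phi N(y)\supseteq \phi U\}$. Your observation that a nonempty $W$ forces $\phi U$ to be a $\leq_T$-chain, and that any interval containing the two extreme arcs $\phi u_1,\phi u_2$ of that chain contains all of $\phi U$, yields $W=N(u_1)\cap N(u_2)$ with less endpoint bookkeeping than the paper's argument. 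Both routes are short; yours is marginally cleaner, while the paper's version has the advantage of directly producing the interval $[a_{\max},b_{\min}]$ that is reused in the algorithm \texttt{NeighborIntersection} of Section~\ref{sec:algorithms}.

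Your remark about the orientation of the order is well taken: \eqref{eq:shoresisodx} literally gives $(\mathcal{F},\supseteq)$, and the statement with $(\mathcal{F},\subseteq)$ holds via the self-duality of Galois lattices (Remark~\ref{rem:usef3}); the paper's own proof glosses over this. One small point you should also make explicit (the paper is equally casual about it): the empty set may occur as some $N(x)\cap N(x')$ and hence lie in $\mathcal{F}$, while $\emptyset\notin\{Y(B)\mid B\in\mathcal{L}^\circ(G)\}$ under the standing no-universal-vertex assumption; the isomorphism should be read modulo this degenerate element (equivalently, modulo $\top$).
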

\begin{proof} We show that $\big\{Y(B) \ |\ B\in \mathcal{L}^\circ(G)\big\}=\mathcal{F}$ and this is enough to prove the result because by \eqref{eq:shoresisodx}, 
$$\mathcal{L}^\circ(G)\cong \Big(\big\{Y(B) \ |\ B\in \mathcal{L}^\circ(G)\big\},\subseteq\Big),$$
Observe in the first place that\footnote{Formula \eqref{eq:ganterwille1} is \emph{concept polarity} in~\cite{gw}.}
\begin{equation}\label{eq:ganterwille1}
\big(X_0,Y_0\big)\in \mathcal{L}^\circ(G)\Longleftrightarrow 
\left\{
\begin{array} {lcl}
X_0 & = & \bigcap_{y\in Y_0}N(y)\\
Y_0 & = & \bigcap_{x\in X_0}N(x)
\end{array}
\right.
\end{equation}
Let $(X_0,Y_0)$ be a maximal biclique of $\mathcal{L}^\circ(G)$, and let $p_0 = |X_0|$. We first show that if $X_0=\bigcap_{y\in Y_0}N(y)$ then $Y_0\in \mathcal{F}$, i.e., either $Y_0 = N(x)$ for some $x \in X$ or $Y_0 = N(x) \cap N(x')$ for some $\{x,x'\} \subseteq X$. Let $\phi T$ be a supporting arborescence such that $N(x)$ is mapped onto a path in $\phi T$, for each $x \in X$. Since each $N(x)$ spans a dipath in $\phi T$, and since the intersections of dipaths always is a dipath, it follows that $\phi Y_0$ is the arc-set of some nonempty dipath $P$ of $\phi T$. Let $[a_i, b_i]$, with $a_i \leq_T b_i$, be the dipath spanned by $N(x_i)$, for each $x_i \in X_0$. We observe that if there exist two end-arcs $a_i$ and $a_j$, with $1 \leq i < j \leq p_0$, such that $a_i \parallel_T a_j$, then $N(x_i) \cap N(x_j) = \emptyset$.
Thus, since $Y_0$ is not empty, $\leq_T$ defines a total order on $a_i$'s, because all the $a_i$'s are pairwise comparable. 
Let $a_{i_M}$ be the maximum w.r.t. $\leq_T$ among all $a_i$'s, where $[a_{i_M}, b_{i_M}]$ is the path spanned by $N(x_{i_M})$.
Moreover, $ a_{i_M} \leq_T b_j$ for each $j$, otherwise it would be $N(x_{i_M}) \cap N(x_j) = \emptyset$ and $\bigcap_{x \in X_0}N(x)= \emptyset$. Hence, for each $1 \leq j \leq p_0$ there exists an arc $b_j' = b_{i_M} \wedge_T  b_j$ that is the maximum element in $[a_{i_M},b_{i_M}] \cap [a_j,b_j]$. Since $b_j' \in [a_{i_M},b_{i_M}]$, for each $1 \leq j \leq p_0$, then $\leq_T$ defines a total order on $b_j'$, for $1 \leq j \leq p_0$.
Let  $b_{i_m}'$ be the minimum among $b_j'$, for $1 \leq j \leq p_0$. It follows that
$\bigcap_{x\in X_0}N(x) = N(x_{i_M}) \cap N(x_{i_m})$, since $N(x_{i_M}) \cap N(x_j) \supseteq N(x_{i_M}) \cap N(x_{i_m})$ for each $1 \leq j \leq p_0$, where it can also happen that $x_{i_M} = x_{i_m}$.
We conclude that $Y_0\in \mathcal{F}$.

Let us prove, conversely, that if $Y_0\in \mathcal{F}$, then $Y_0=\bigcap_{x\in X_0}N(x)$ for some  $X_0\subseteq X$. This fact implies $Y_0\in\{Y(B) \ |\ B\in \mathcal{L}^\circ(G)\}$ by \eqref{eq:ganterwille1}.
\comment{

\textbf{inizio vecchia parte}

Since $Y_0\in \mathcal{F}$, then either $Y_0 = N(x)$, for some $x \in X$, or $Y_0 = N(x_1) \cap N(x_2)$, for some $\{x_1, x_2\} \subseteq X$. In the first case $(\mbox{twins}(x), Y_0)$ is a maximal biclique, where $\mbox{twins}(x) = \left\{x' \in X\ |\ N(x') = N(x)\right\}$.
In the second case, we show that for each pair $\{x_1, x_2\} \subset X$, there is always a maximal biclique $(X', N(x_1) \cap N(x_2))$, with $X' \supseteq \{x_1, x_2\}$. In fact, $(\{x_1, x_2\}, Y_0)$ is a biclique, and any biclique $(X', Y')$ that dominates $(\{x_1, x_2\}, Y_0))$, since $X' \supseteq \{x_1, x_2\}$, must have $Y' \subseteq N(x_1) \cap N(x_2) = Y_0$, hence $Y' = Y_0$. Thus, there must be a maximal biclique $(X',Y_0)$, for some $X'\supseteq \{x_1, x_2\}$.
This fact completes the proof.
\textbf{fine vecchia parte}
}
This is proved by a more general argument, in fact  (see also~\cite{gw}, page 19) for each set $X' \subseteq X$, with $X' \not= \emptyset$, there exists a maximal biclique
$(X'', \bigcap_{x \in X'}N(x))$ with $X'' \supseteq X'$. Let $Y_0 = \bigcap_{x \in X'}N(x)$: we have that
$(X', Y_0)$ is a biclique, and any biclique $(X'', Y')$ that dominates $(X', Y_0))$, since $X'' \supseteq X'$, must have $Y' \subseteq Y_0$, hence $Y' = Y_0$. Thus, there must be a maximal biclique $(X'',Y_0)$, for some $X''\supseteq X'$.
This fact completes the proof.
\end{proof}
The proof of Corollary~\ref{cor:encoding} shows that $\LL(G)$ is isomorphic to the containment relation of a set of paths in an arborescence. This implies that $\LL(G)$ has dimension at most 3. The bound derives from the following slightly more general consideration:
\begin{prop}
The containment order among paths in an arborescence has dimension at most 3.
\end{prop}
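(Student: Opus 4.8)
The plan is to realize the poset $P$ of directed paths of the arborescence, ordered by containment, as an induced subposet of a product of three chains; by the standard characterization of order dimension (the least $d$ for which the poset admits an order embedding into a product of $d$ chains, equivalently into $\mathbb{R}^d$) this gives $\dim P\leq 3$. Fix the arborescence $T$ with root $r$ and write $\leq_T$ for its ancestor order ($x\leq_T y$ iff there is a directed path from $x$ to $y$ in $T$), so that smaller means closer to $r$. Record a directed path by the pair $(a,b)$ of its endpoints with $a\leq_T b$, the path consisting of the arcs (equivalently, the nodes) encountered from $a$ down to $b$. Then $(a,b)\leq_P(a',b')$, i.e.\ $(a,b)$ is contained in $(a',b')$, precisely when $a'\leq_T a$ and $b\leq_T b'$.

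The one genuine step is a reformulation of containment that exposes only two ``independent'' directions rather than four. I claim:
$$(a,b)\leq_P(a',b')\quad\Longleftrightarrow\quad b\leq_T b'\ \text{ and }\ d_T(r,a')\leq d_T(r,a).$$
The implication $\Rightarrow$ is immediate, since an ancestor is never deeper than its descendant. For $\Leftarrow$, note that from $a\leq_T b\leq_T b'$ and $a'\leq_T b'$ both $a$ and $a'$ are ancestors of $b'$; hence they lie on the unique directed path from $r$ to $b'$ and are therefore $\leq_T$-comparable, and $d_T(r,a')\leq d_T(r,a)$ then forces $a'\leq_T a$. I expect this equivalence to carry essentially all of the content of the proposition; everything after it is bookkeeping.

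Next I would use the standard fact that the ancestor order of a rooted tree has dimension at most $2$: a depth-first traversal of $T$ produces preorder and postorder indices $\mathrm{pre},\mathrm{post}$ with $u\leq_T v\iff \mathrm{pre}(u)\leq\mathrm{pre}(v)\text{ and }\mathrm{post}(u)\geq\mathrm{post}(v)$. Combining this with the reformulation above, set
$$\Phi\colon P\longrightarrow\mathbb{R}^3,\qquad \Phi(a,b)=\bigl(\mathrm{pre}(b),\,-\mathrm{post}(b),\,-\,d_T(r,a)\bigr),$$
and check that $(a,b)\leq_P(a',b')$ holds if and only if $\Phi(a,b)\leq\Phi(a',b')$ coordinatewise --- a direct translation of the two conditions $b\leq_T b'$ and $d_T(r,a')\leq d_T(r,a)$. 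The map $\Phi$ is injective, because its first coordinate determines $b$ and then $d_T(r,a)$ together with $a\leq_T b$ determines $a$ (a chain has exactly one node at each depth). Hence $\Phi$ embeds $P$ into a product of three chains, so $\dim P\leq 3$; applied to the supporting arborescence of Corollary~\ref{cor:encoding}, this yields $\dim\LL(G)\leq 3$ for every connected BDH graph $G$.

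A possible shortcut is to appeal instead to the Trotter--Moore bound on the dimension of planar posets possessing a least element; that route, however, would require exhibiting a planar drawing of the order diagram, whereas the explicit embedding $\Phi$ is short and self-contained, so I would present the latter.
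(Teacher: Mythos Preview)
Your proof is correct and follows essentially the same route as the paper: encode a path by its deepest point together with the depth of its shallowest point, observe that containment translates into the product of the tree order on the first coordinate with a chain on the second, and use that the tree order has dimension at most~$2$. Your version is in fact more careful than the paper's, which declares the key equivalence ``immediate'' without spelling out the backward implication you justify; you also make the 2-dimensional realization of $\leq_T$ explicit via preorder/postorder rather than citing it.
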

\begin{proof}
We show that the containment order among paths in an arborescence $T$ is a subposet of poset $(T, \leq_T) \times (\{1,2,\ldots,\tau\}, \geq)$, where $\tau$ is the height of $T$ and $\geq$ is the restriction of the natural order of integers to $\{1,2,\ldots,\tau\}$. We associate each path $\pi$ in $T$ with the pair $(e(\pi),d(\pi))$, where $e(\pi)$ is the maximum edge in $\pi$ with respect to $\leq_T$, and $d(\pi)$ is the distance of $\pi$ from the root of $T$, i.e., $d(\pi)$ is 1 if $\pi$ starts from the root, $d(\pi)$ is 2 if $\pi$ starts from a child of the root, and so on.
It is immediate to see that $\pi$ contains $\pi'$ if and only if $e(\pi) \leq_T e(\pi')$ and $d(\pi) \geq d(\pi')$.

It is well known that the arborescence order $\leq_T$ has dimension at most 2 (see \cite{trotter}), therefore the dimension of the product is at most $2+1$. 
\end{proof}
Thus, we can state that:
\begin{corollary}\label{coro:dim}
If $G$ is a BDH graph, then $\LL^\circ(G)$ has dimension at most 3.
\end{corollary}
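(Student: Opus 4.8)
The statement is an immediate consequence of Corollary~\ref{cor:encoding} together with the preceding Proposition. Here is the plan. By Corollary~\ref{cor:dpt}, a connected BDH graph $G$ with color classes $X$ and $Y$ admits a supporting arborescence $\phi T$ with $X=E(T)$ in which every $N(x)$, $x\in X$, is mapped onto (the arc-set of) a directed path of $\phi T$. Since the intersection of two directed paths of an arborescence is again a directed path, each member of the family
$$\mathcal{F}=\big\{N(x)\cap N(x') \ |\ x\neq x',\ x,x'\in X\big\}\cup\big\{N(x) \ |\ x\in X\big\}$$
of Corollary~\ref{cor:encoding} is the arc-set of a directed path of $\phi T$. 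Consequently the assignment $Y_0\mapsto \phi Y_0$ identifies $(\mathcal{F},\subseteq)$ with the containment order on a family of directed paths of $\phi T$, and this family, equipped with $\subseteq$, is an induced subposet of the containment order on \emph{all} directed paths of $\phi T$.

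Now apply the preceding Proposition: the containment order among paths in an arborescence has dimension at most $3$. Order dimension is monotone under passing to induced subposets (the restriction of a realizer is a realizer), so the containment order on our family of paths, hence $(\mathcal{F},\subseteq)$, has dimension at most $3$. By Corollary~\ref{cor:encoding} we have $\LL^\circ(G)\cong(\mathcal{F},\subseteq)$, and therefore $\dim \LL^\circ(G)\leq 3$, as claimed. (Had we phrased things through the $Y$-shores using \eqref{eq:shoresisodx}, we would additionally need that a poset and its order dual share the same dimension; we avoid this by citing the $\subseteq$-form of Corollary~\ref{cor:encoding} directly.)

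I expect no genuine obstacle here beyond bookkeeping: the substantive content — the existence of the supporting arborescence $\phi T$, and the fact that $\mathcal{F}$ realizes $\LL^\circ(G)$ as a family of directed paths ordered by containment — has already been established in the proof of Corollary~\ref{cor:encoding}. The only elementary facts invoked are that intersections of directed paths in an arborescence are directed paths, and that order dimension does not increase under induced subposets; both are standard.
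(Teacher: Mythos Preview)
Your proposal is correct and follows essentially the same route as the paper: the paper observes (just before the corollary) that the proof of Corollary~\ref{cor:encoding} realizes $\LL^\circ(G)$ as the containment relation on a family of directed paths in an arborescence, then invokes the preceding Proposition and deduces the bound. You spell out the same chain of reasoning with slightly more care (making explicit the monotonicity of dimension under induced subposets), but the argument is identical in substance.
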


Since $\LL^\circ(G)$ is a tree-like poset if $G$ is a BDH graph, Corollary~\ref{coro:dim} can also be obtained directly from a result by Trotter and Moore (see \cite{trmoo3}), asserting that a tree-like poset has dimension at most 3.

Notice that, there are containment orders among paths in an arborescence that are not isomorphic to the Galois lattice of any BDH graph. For example, the Galois lattice of a domino is isomorphic to the containment among sets $\{a,b\},\{b,c\},\{a,b,c\}$, and it is immediate to see that these sets are the edge sets of three subpaths of a path with edges $a,b,c$, which is clearly an arborescence.

\section{Efficiently computing (maximal) bicliques}\label{sec:algorithms}
In this section we discuss some of the algorithmic consequences of the encoding described in Section~\ref{sec:encoding}, and exploited in the proof of Corollary~\ref{cor:encoding}. By the results of \cite{wagner}, there exists an algorithm that given a
BDH graph computes a supporting arborescence $\phi T$ for $G$ as in Corollary~\ref{cor:dpt}.
The
algorithm runs in almost linear time in the size of $G$, that is
in time $O(\alpha(|X|,m)\cdot m)$ where $m$ is the number of edges of
$G$ and $\alpha$ is an inverse of the Ackermann function, which grows very slowly
and behaves essentially as a small constant even for very large values
of its arguments.
We propose a compact encoding of the BDH graph that requires $O(n)$ space in the worst case, where $n$ is the order of $G$. The retrieval of the neighborhood of any vertex requires linear time in the size of the neighborhood. Moreover, intersection of neighborhoods can be listed in optimal linear time in the size of the intersection, in the worst case.

At the same time each vertex $x_i \in X$ can be associated to a pair of edges $a_i, b_i$ so that $N(x_i)$ is mapped into the path from $a_i$ to $b_i$ in $\phi T$.
Recall that the partial order $\leq_T$ has linear dimension 2, so each vertex/edge in the arborescence can be equipped with a pair of labels in $\{1,\ldots,n\}$ so that relation $\leq_T$ between two edges is verified in constant time. After fixing an arbitrary ordering on the outgoing edges for each vertex in $T$, the pair of labelings is defined by two preorder numberings of $T$, one obtained by visiting at each vertex outgoing edges from left to right and the other one obtained by visiting at each vertex outgoing edges from right to left.

This gives an encoding of $G$ that allows to answer the following queries in optimal worst case time, where $x \in X$ and $X' \subseteq X$:
\begin{enumerate}
\item\label{algo:neigh} list $N(x)$, in time $O(|N(x)|)$;
\item\label{algo:check} check whether $\bigcap_{x \in X'} N(x) = \emptyset$, in time $O(|X'|)$;
\item\label{algo:list} list $\bigcap_{x \in X'} N(x)$, in time $O\left(|X'| + \left|\bigcap_{x \in X'} N(x)\right|\right)$;
\item\label{algo:maximal} check whether $\left(X', \bigcap_{x \in X'} N(x)\right)$ is a maximal biclique, in $O\left(|X'| + \left|\bigcap_{x \in X'} N(x)\right|\right)$ worst case time.
\end{enumerate}
\begin{remark}
In lattice theoretical terminology, query~\ref{algo:maximal} corresponds to checking whether $A = A^{\ast\ast}$, where $^\ast$ denotes concept polarity as defined in~\cite{gw}.  
\end{remark}

Note that the size of the encoding is only $O(n)$, while the number of edges in a BDH graph can be $\Theta(n^2)$, and still allows the computation of the maximal biclique containing a given set $X'$ on one side in time linear in the in the number of vertices in the biclique.

The algorithm to solve query \ref{algo:list} (queries \ref{algo:neigh} and \ref{algo:check} are special cases of query \ref{algo:list}) is described in Figure~\ref{fi:intersection}, and follows the same argument as in the first part of the proof of Corollary~\ref{cor:encoding}. Let $X' = \{x_1, x_2, \ldots, x_k\}$, and let $(a_i, b_i)$, for $1 \leq i \leq k$, be the end-arcs of the path associated to $x_i$ in $T$.

\begin{figure}[ht]
\noindent\hrulefill%
\begin{prog}{pr:resilientspanner}
Given $X' \subseteq X$, compute $\bigcap_{x \in X'} N(x)$.\\
%\vspace{0.3cm}\\
We assume the arborescence $T$ is given, and a data structure for solving lowest common ancestor queries\\
according to $\leq_T$, as described in \cite{schieber}, has been built.\\
%\vspace{0.3cm}\\
$(a_i, b_i)$, for $1 \leq i \leq |X'|$, are the end-arcs of the path in $T$ associated to $N(x_i)$\\
\vspace{0.3cm}\\
\N  \key{let}$a_{\mathrm{max}} = a_1$\\
\N  \key{for}$i=2$ to $|X'|$\\
\NL{line:magg}  \>  \key{if}$a_i >_T a_{\mathrm{max}}$\\
\N  \>  \>  \key{let}$a_{\mathrm{max}} = a_i$\\
\NL{line:nomagg}  \>  \key{else if}$a_i \not\leq_T a_{\mathrm{max}}$\\
\N  \>  \>  \key{return}$\emptyset$\\
\NL{line:lca}  \key{let}$b_{\mathrm{min}} = \bigwedge_T\{b_1, b_2, \ldots, b_k\}$\\
\N  \key{if}$b_{\mathrm{min}} \geq_T a_{\mathrm{max}}$\\
\NL{line:path}  \>  \key{return}$[a_{\mathrm{max}},b_{\mathrm{min}}]$\\
\N  \key{else}\\
\N  \>  \key{return}$\emptyset$\\
\end{prog}

\caption{Algorithm \texttt{NeighborIntersection}.}
\protect\label{fi:intersection}
\noindent\hrulefill%
\end{figure}

It can be seen that the complexity of  algorithm \texttt{NeighborIntersection} is\\ $O\left(|X'| + \left|\bigcap_{x \in X'} N(x)\right|\right)$, since tests in Lines \ref{line:magg} and \ref{line:nomagg} are performed in constant time starting from the encoding of the 2-dimensional partial order $\leq_T$. The computation of the lowest common ancestor $\bigwedge_T$ at line \ref{line:lca} is computed in time $O(|X'|)$ using the data structure proposed in \cite{schieber}, which is built in $O(n)$ time.

Path retrieval in Line \ref{line:path} requires $O\left(\left|\bigcap_{x \in X'} N(x)\right|\right)$ worst case time, starting from $b_{\mathrm{min}}$ and following parent pointers in the arborescence $T$ up to $a_{\mathrm{max}}$.

In order to solve query \ref{algo:check}, we can still use algorithm \texttt{NeighborIntersection}, without listing the path in Line~\ref{line:path}, thus requiring $O(|X'|)$ worst case time. Query~\ref{algo:maximal} can be solved using the same algorithm, thanks to Observation (\ref{eq:ganterwille1}), provided that the same encoding is stored both for side $X$ and for side $Y$. In fact, $\left(X', \bigcap_{x \in X'} N(x)\right)$ is a maximal biclique if and only if $X' = \bigcap_{y\in Y_0} N(y)$, where $Y_0 = \bigcap_{x\in X'}N(x)$, that can be checked by computing $Y_0$ and then computing $\bigcap_{y\in Y_0} N(y)$, i.e., solving two queries of type \ref{algo:list}.

\section{An indirect proof of Theorem \ref{thm:main}}\label{sec:indirect}
We prove here Theorem \ref{thm:main} by exploiting existing results, that is by taking the longest path between \textbf{BDH} and $\mathbb{T}_\B$ in Diagram (\ref{eq:diagram2}). To this end we need some more terminology on hypergraphs and Ptolemaic graphs. The reader is referred to the monographs \cite{berge} and \cite{bralespi}.
\mybreak

\paragraph{Hypergraphs related to bipartite chordal graphs}
Let $\HH$ be a hypergraph on $V$. Given two arbitrary linear orders of $V$ and $\HH$, let $\mv{A}(\mathcal{H})=\{a_{i,j}\}$ be the  $\{0,1\}^{m\times n}$-matrix whose rows correspond (in the order chosen for $\HH$) to the members of $\mathcal{H}$, the columns correspond (in the order chosen for $V$) to the vertices of $\mathcal{H}$ and where $a_{i,j}=1$ if the $i$-th element of $V$ is in the $j$-th member of $\HH$ and $a_{i,j}=0$ otherwise. Clearly, if $\mv{A}$ is a $\{0,1\}$-matrix, we can reverse the construction by associating with $\mv{A}$ the hypergraph $\HH(\mv{A})$ on the index-set $V$ of the columns and whose members are the supports of the rows of $\mv{A}$, regarded as subsets of $V$. 
A \emph{clutter} is a hypergraph whose members are inclusion-wise incomparable. The hypergraph $\widehat{\HH}$ is the collection obtained by closing $\mathcal{H}$ under intersection, namely, $F\in \widehat{\HH}$ if and only if either $F\in \HH$ or $F$ is the intersection of two or more members of $\HH$. By $\HH^\uparrow$ we denote the clutter obtained from $\HH$ as follows: first pairwise equal members are identified into a unique member and then only the inclusion-wise maximal members are retained.
If $\mathcal{F}$ is a family of subsets of a given common ground set, let $\mathcal{F}^\uparrow$ denote the collection consisting of the inclusion-wise maximal members of $\mathcal{F}$.
\mybreak
If $G$ is a bipartite graph with color classes $X$ and $Y$ we associate with $G$ the two hypergraphs $\NN_X(G)$ and $\NN_Y(G)$ on $X$ and $Y$, respectively, called \emph{neighborhood hypergraphs of $G$}, given by $\NN_X(G)=(N(y), \ |\ y\in Y)$ and $\NN_Y(G)=(N(x), \ |\ x\in X)$. The hypergraphs $\NN_X(G)^\uparrow$ and $\NN_Y(G)^\uparrow$ are called the \emph{maximal neighborhood systems} of $G$.
\mybreak
Let $G$ be a graph. The clutter $\K(G)$ is the clutter consisting of the maximal cliques of $G$. Hence $G\mapsto \K(G)$ induces a map $\tau$ that sends isomorphism classes of graphs into isomorphism classes of clutters. 
On the other hand with any hypergraph $\HH$ on $V$ we can associate the graph $(\HH)_2$ with vertex set $V$ and where two vertices $u$ and $v$ are joined by an edge if there is a member $F$ of $\HH$ containing both. The graph $(\HH)_2 $ is called the \emph{2-section of $\HH$}. As shown by the clutter $\HH=\{\{1,2\},\{2,3\},\{1,3\}\}$ the map $\tau$ is not in general a bijection, because $(\HH)_2\cong K_3$ and hence $\K((\HH)_2)\not\cong\HH^\uparrow$. This motivates the following notion: a hypergraph $\HH$ is \emph{conformal} if the maximal cliques of its 2-section coincide with the maximal members of $\HH$, that is $\K((\HH)_2)\cong\HH^\uparrow$ \cite{berge}. It follows that if $\tau$ is a bijection when its image is restricted to the class of conformal clutters.
\mybreak
If $\mv{R}$ and $\mv{S}$ are two $\{0,1\}$ matrices, we say that $\mv{S}$ contains a copy of $\mv{R}$ if the rows and the columns of $\mv{S}$ can be permuted so that the permuted matrix contains $\mv{R}$ as a submatrix. For an integer $h\geq 3$, let $\mv{C}_h=\big\{c_{i,j}\big\}\in \{0,1\}^{h\times h}$ be the matrix whose entries satisfy $c_{i,j}=1\Leftrightarrow i\equiv j\imod{h}\,\,\text{or}\,\,i+1\equiv j\imod{h}$. 
We now collect some very well known facts about \emph{totally balanced} hypergraphs, namely, those hypergraphs whose matrix is $\mv{C}_h$ free for all $h\geq 3$. All the characterizations listed below can be found in \cite{bralespi}.
\begin{enumerate}[(a)]
\item\label{com:a} $\HH$ is a totally balanced hypergraph, i.e., $\mv{A}(\HH)$ is $\mv{C}_h$-free for $h\geq 3$, if and only if $(\HH)_2$ is a strongly chordal graph. 
\item\label{com:b} $G$ is a bipartite chordal graph if and only if its neighborhood hypergraphs are totally balanced. 
\end{enumerate} 
\begin{remark}\label{rem:tbneigh}
Using the fact that totally balanced hypergraphs are conformal, we obtain
the following fact: \emph{Let $G$ be a bipartite graph with color classes $X$ and $Y$. Then $G$ is chordal if and only if each of the 2-sections of the following hypergraphs is strongly chordal: $\NN_X(G)$, $\NN_X(G)^\uparrow$, $\NN_Y(G)$ and $\NN_Y(G)^\uparrow$}.
\end{remark}
Let now $\C(G)=(\widehat{\mathcal{K}}(G)\cup\{\emptyset,V(G)\},\subseteq)$. Then $\C(G)$ is a lattice known as the \emph{clique lattice of $G$}. 
\mybreak
We need the following definition.
\begin{definition}\label{def:gammaacy}
A $\gamma$-acyclic hypergraph is a totally balanced hypergraph whose matrix does not contain a copy of the matrix 
\begin{equation}\label{eq:dominomatrix}
\mv{F}=\begin{pmatrix}
1 &0 &1\\
1 &1 &1\\
0 &1 &1
\end{pmatrix}
\end{equation}
\end{definition}

The Bachman Diagram of $\mathcal{H}$, denoted by \text{Bachman}($\mathcal{H}$), is the transitive reduction of the poset $(\widehat{\HH},\subseteq\big)$. 
Fagin \cite{fagin} proved the following
\begin{theorem}[\cite{fagin}]\label{thm:fagin}
$\mathcal{H}$ is $\gamma$-acyclic $\Longleftrightarrow$ $\widehat{\HH}$ is $\gamma$-acyclic $\Longleftrightarrow$ {\rm Bachman($\HH$)} is a tree.
\end{theorem}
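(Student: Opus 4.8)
Here is how one would prove the statement directly from Theorem~\ref{thm:main}, by transporting everything to the incidence graph $\Gamma(\mathcal{H})$. \emph{Step 1: $\gamma$-acyclicity of $\mathcal{H}$ is a BDH condition on $\Gamma(\mathcal{H})$.} A copy of $\mv{C}_h$ ($h\ge 3$) inside $\mv{A}(\mathcal{H})$ is exactly an induced $C_{2h}$ of $\Gamma(\mathcal{H})$ — a hole — because the bipartite graph encoded by $\mv{C}_h$ is a chordless $2h$-cycle; likewise a copy of $\mv{F}$ inside $\mv{A}(\mathcal{H})$ is exactly an induced domino, since $\mv{F}$ is the incidence matrix of the domino. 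Hence, by the forbidden-subgraph characterisation in Theorem~\ref{thm:bdhchar}(iii), $\mathcal{H}$ is $\gamma$-acyclic if and only if $\Gamma(\mathcal{H})$ is a BDH graph.

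\emph{Step 2: identifying the Bachman diagram.} View $\Gamma(\mathcal{H})$ with colour class $X=\mathcal{H}$, so that $N_{\Gamma(\mathcal{H})}(F)=F$ for each $F\in\mathcal{H}$. By \eqref{eq:ganterwille1} together with the closure remark used at the end of the proof of Corollary~\ref{cor:encoding} (for every nonempty $\mathcal{U}\subseteq\mathcal{H}$ the set $\bigcap_{F\in\mathcal{U}}F$ is the $Y$-shore of some maximal biclique), the family of $Y$-shores of $\mathcal{L}^\circ(\Gamma(\mathcal{H}))$ is precisely $\widehat{\mathcal{H}}$. By \eqref{eq:shoresisodx} this yields an order isomorphism $\mathcal{L}^\circ(\Gamma(\mathcal{H}))\cong(\widehat{\mathcal{H}},\supseteq)$. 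Since a poset and its order-dual have the same transitive reduction up to reversing all arcs, and since being a tree is a property of the underlying undirected graph, $\mathbf{H}(\Gamma(\mathcal{H}))$ is a tree if and only if the Bachman diagram of $\mathcal{H}$ is a tree.

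\emph{Step 3: conclusion.} Chaining Steps 1--2 with Theorem~\ref{thm:main} gives: $\mathcal{H}$ is $\gamma$-acyclic $\iff$ $\Gamma(\mathcal{H})$ is BDH $\iff$ $\mathbf{H}(\Gamma(\mathcal{H}))$ is a tree $\iff$ the Bachman diagram of $\mathcal{H}$ is a tree, which is the equivalence between ``$\mathcal{H}$ $\gamma$-acyclic'' and ``Bachman$(\mathcal{H})$ is a tree''. For the equivalence with ``$\widehat{\mathcal{H}}$ is $\gamma$-acyclic'' one can observe that $\widehat{\mathcal{H}}$ is itself closed under intersection, so $\widehat{\widehat{\mathcal{H}}}=\widehat{\mathcal{H}}$ and the Bachman diagrams of $\mathcal{H}$ and of $\widehat{\mathcal{H}}$ coincide; applying the equivalence already proved to both $\mathcal{H}$ and $\widehat{\mathcal{H}}$ then gives ``$\mathcal{H}$ $\gamma$-acyclic $\iff$ Bachman$(\mathcal{H})$ a tree $\iff$ Bachman$(\widehat{\mathcal{H}})$ a tree $\iff$ $\widehat{\mathcal{H}}$ $\gamma$-acyclic''. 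Alternatively, one inclusion is trivial because $\mv{A}(\mathcal{H})$ is a row-submatrix of $\mv{A}(\widehat{\mathcal{H}})$, and the reverse one is exactly the assertion that $\star$ preserves BDH graphs, i.e.\ Theorem~\ref{thm:convex} and Corollary~\ref{lem:tb}, applied to the family $\mathcal{C}=\big(\bigcap_{F\in\mathcal{U}}F \ |\ \emptyset\ne\mathcal{U}\subseteq\mathcal{H}\big)\supseteq\widehat{\mathcal{H}}$.

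I expect the genuine work to be at the boundary rather than in the main line of argument. Theorem~\ref{thm:main} is stated for connected $G$, so one should first reduce to $\Gamma(\mathcal{H})$ connected — equivalently $\mathcal{H}$ a connected hypergraph — and check that $\gamma$-acyclicity and ``Bachman diagram a tree'' behave componentwise (a copy of $\mv{C}_h$ or $\mv{F}$ induces a connected subgraph of $\Gamma(\mathcal{H})$ and so lies inside a single component, while intersections of members from distinct components are empty). One must also keep the dummy $\top,\bot$ of $\mathcal{L}(\Gamma(\mathcal{H}))$ apart from the genuine extremal members of $(\widehat{\mathcal{H}},\subseteq)$; this is precisely where the standing no-universal-vertex convention enters, and pinning down the isomorphism of Step 2 exactly is the only place that needs real care.
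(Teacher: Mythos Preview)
The paper does not prove Theorem~\ref{thm:fagin}; it is quoted from \cite{fagin} and used as an input to the \emph{indirect} proof of Theorem~\ref{thm:main} in Section~\ref{sec:indirect}. Your proposal runs that implication backwards: take the \emph{direct} proof of Theorem~\ref{thm:main} (Sections~\ref{sec:if}--\ref{sec:onlyif}) and deduce Fagin's result from it. That is legitimate and not circular, since the direct proof nowhere invokes Theorem~\ref{thm:fagin}; indeed the paper itself remarks (opening of Section~\ref{sec:closprop}) that Theorem~\ref{thm:convex} is equivalent to the first implication of Theorem~\ref{thm:fagin}, which is exactly your ``alternative'' route for $\mathcal{H}$ $\gamma$-acyclic $\Leftrightarrow$ $\widehat{\mathcal{H}}$ $\gamma$-acyclic.

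Your Steps~1 and~3 are correct as written: a submatrix $\mv{C}_h$ in $\mv{A}(\mathcal{H})$ is precisely an induced $C_{2h}$ in $\Gamma(\mathcal{H})$, and $\mv{F}$ is precisely the biadjacency matrix of the domino, so $\gamma$-acyclicity of $\mathcal{H}$ is exactly the BDH property of $\Gamma(\mathcal{H})$ via Theorem~\ref{thm:bdhchar}(iii).

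The point you flag in Step~2 is not cosmetic, however, and deserves to be stated more sharply. With the paper's literal definition, $\widehat{\mathcal{H}}$ contains $\emptyset$ whenever two members of $\mathcal{H}$ are disjoint, and then the Bachman diagram acquires a bottom node attached to every minimal nonempty element. Attaching such a node to a tree with at least two minimal elements produces a cycle; e.g.\ for $\mathcal{H}=\{\{1,2\},\{2,3\},\{3,4\}\}$ one gets a $\gamma$-acyclic hypergraph whose Bachman diagram, with $\emptyset$ included, contains the $4$-cycle $\emptyset,\{2\},\{2,3\},\{3\}$. So the equivalence only holds under the (standard) convention that $\emptyset$ is excluded from $\widehat{\mathcal{H}}$; once that is stipulated, your identification $\mathbf{H}(\Gamma(\mathcal{H}))\cong\mathrm{Bachman}(\mathcal{H})$ via \eqref{eq:shoresisodx} is exact and the proof goes through.
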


\paragraph{Ptolemaic graphs}Ptolemaic graphs, as shown by Howorka \cite{how1,how2}, are precisely chordal distance hereditary graphs. There is an intimate relationship between Ptolemaic graphs and BDH graphs, made explicit by Bandelt and Mulder \cite{bm}, and Peled and Wu \cite{peledwu}. In particular, Bandelt and Muller show that if $G$ is a BDH graph then the graph obtained by completing certain level sets is distance hereditary and chordal and thereby Ptolemaic by Howorka's characterization. On the other hand, as mentioned above, Wu's result asserts that the vertex-clique graph of a Ptolemaic graph, is a BDH graph.
\mybreak
Besides their own theoretical importance, Ptolemaic graphs (and hence BDH graphs), deserve a special role in the theory of relational database, as shown by D'Atri and Moscarini in \cite{damo}. In the early eighty, deep investigations of theoretical properties of relational databases \cite{fagin,yanna} led to a refinement of the notion of cycles in hypergraphs yielding various degrees of acyclicity \cite{fagin}. Among them, the notion of $\gamma$-acyclic hypergraphs relates directly with Ptolemaic graphs in that the 2-section graph of $\gamma$-acyclic hypergraph is a Ptolemaic graph \cite{damo}. Since the Bachman diagram of a $\gamma$-acyclic hypergraph $\mathcal{H}$, namely the hypergraph obtained by closing the edge-set of $\mathcal{H}$ under intersection, is the clique-lattice of its 2-section graph with top and bottom removed, and since by the results of Fagin \cite{fagin}, the Bachman diagram of a hypergraph is a tree if and only if the hypergraph is $\gamma$-acyclic, it follows that the clique-lattice of a Ptolemaic graph is tree-shaped. Ueheara and Uno \cite{ueuno}, obtained the same result from another perspective:
they proved that cliques of a Ptolemaic graph have the remarkable property of being \emph{laminar}. Recall that a family of sets is said to be \emph{laminar} if given any two sets of the family, then either such two sets are disjoint or they are inclusion-wise comparable. If $\F$ is a laminar family, then $(\F,\subseteq)$ is a tree-like poset---this is another way of stating a classical results of Edmonds and Giles\cite{Edgi}--. In view of Fagin's result and the result of Uehara and Uno, we conclude that the clique lattice of a Ptolemaic graph is tree shaped.
%Then we resort to the results in Section~\ref{sec:indirect} to show that Diagram \eqref{eq:diagram2}
%applies thereby proving Theorem \ref{thm:main}. 
\mybreak

D'Atri and Moscarini \cite{damo} elicited the relation between $\gamma$-acyclicity and Ptolemaicity as follows: 
\begin{theorem}[\cite{damo}]\label{thm:damo}
$G$ is Ptolemaic $\Longleftrightarrow$ $\K(G)$ is $\gamma$-acyclic.
\end{theorem}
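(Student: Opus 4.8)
The plan is to read off, inside the matrix $\mv{A}(\K(G))$, the two forbidden patterns of a $\gamma$-acyclic hypergraph—the matrices $\mv{C}_h$ ($h\ge 3$) and $\mv{F}$ of Definition~\ref{def:gammaacy}—and to translate each into a forbidden induced subgraph of $G$, matching them against the two defining ingredients of Ptolemaicity: chordality and distance heredity (Howorka's $\mathbf{Pt}=\mathbf{CH}\cap\mathbf{DH}$), sharpened via the Peled--Wu identity $\mathbf{Pt}=\mathbf{SC}\cap\mathbf{DH}$ of \cite{peledwu}. For the \emph{totally balanced part}: since the $2$-section of the clutter of maximal cliques is $G$ itself, i.e.\ $(\K(G))_2=G$, fact~(\ref{com:a}) says at once that $\mv{A}(\K(G))$ is $\mv{C}_h$-free for every $h\ge 3$ (equivalently, $\K(G)$ is totally balanced) if and only if $G$ is strongly chordal. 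Hence if $G$ is Ptolemaic then $G\in\mathbf{SC}$ and $\K(G)$ is totally balanced; conversely, if $\K(G)$ is totally balanced then $G$ is strongly chordal, in particular chordal.

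The heart of the proof is then the equivalence: \emph{$\mv{A}(\K(G))$ contains a copy of $\mv{F}$ if and only if $G$ has an induced $C_4$ or an induced gem} (the \emph{gem} being $P_4$ with an added dominating vertex, the smallest chordal graph that is not distance hereditary). For the forward direction, a copy of $\mv{F}$ supplies three distinct maximal cliques $K_1,K_2,K_3$ and three vertices $a,b,c$ with $\{a,b,c\}\subseteq K_2$ (so $a,b,c$ form a triangle, forced by the all-ones row of $\mv{F}$), $\{a,c\}\subseteq K_1\not\ni b$, and $\{b,c\}\subseteq K_3\not\ni a$ (the two ``one-zero'' rows must miss distinct vertices, since $\mv{F}$ has no repeated row). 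Maximality of $K_1$, resp.\ $K_3$, produces $x\in K_1\setminus N(b)$, resp.\ $y\in K_3\setminus N(a)$; one checks $x,y\notin\{a,b,c\}$, $x\ne y$, $xa,xc\in E$, $xb\notin E$, $yb,yc\in E$, $ya\notin E$, and then $\{a,b,x,y\}$ induces a $C_4$ if $xy\in E$, while $\{a,b,c,x,y\}$ induces a gem (apex $c$, path $x$--$a$--$b$--$y$) if $xy\notin E$. For the converse, assume $G$ contains an induced gem on $p_1p_2p_3p_4$ with apex $q$; extending $\{q,p_1,p_2\}$, $\{q,p_2,p_3\}$, $\{q,p_3,p_4\}$ to maximal cliques $M',M'',M$ and using $p_1p_3,p_2p_4\notin E$ to see $p_3\notin M'$ and $p_2\notin M$, the three cliques are pairwise distinct and the $3\times 3$ submatrix of $\mv{A}(\K(G))$ on rows $M',M'',M$ and columns $p_2,p_3,q$ (in this order) is literally $\mv{F}$.

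Assembling: if $G$ is Ptolemaic, it is strongly chordal—so $\K(G)$ is totally balanced—and distance hereditary, hence both $C_4$-free and gem-free, so by the equivalence $\mv{A}(\K(G))$ is $\mv{F}$-free; thus $\K(G)$ is $\gamma$-acyclic. Conversely, if $\K(G)$ is $\gamma$-acyclic it is totally balanced, whence $G$ is chordal, and $\mv{F}$-free, whence by the equivalence $G$ has no induced gem; but a chordal, gem-free graph contains neither the house nor any hole of length $\ge 5$ nor the domino (each of which contains an induced $C_4$, already excluded by chordality) nor the gem, so it is distance hereditary, and being chordal it is Ptolemaic.

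The step I expect to be the main obstacle is the matrix equivalence, precisely because ``contains a copy of $\mv{F}$'' is taken up to independent permutations of rows and columns: one has to argue that the three chosen columns are forced to be a triangle (from the all-ones row), identify which vertex each of the two remaining rows omits (they must be distinct, or $\mv{F}$ would have a repeated row), and, in the converse direction, verify that the three maximal cliques one builds are genuinely pairwise distinct and that the resulting $3\times3$ incidence pattern is exactly $\mv{F}$ rather than a relative of it. Once this bookkeeping is done, the case split $xy\in E$ versus $xy\notin E$ and the routine observation that the house, the holes, and the domino each contain an induced $C_4$ complete the argument.
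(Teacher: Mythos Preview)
The paper does not prove Theorem~\ref{thm:damo}: it is quoted from D'Atri and Moscarini \cite{damo} and used as a black box, so there is no ``paper's own proof'' to compare against. Your argument is therefore an independent proof of a cited result, and on the whole it is sound. The architecture---split $\gamma$-acyclicity into ``totally balanced'' plus ``$\mv{F}$-free'', match the first against $\mathbf{SC}$ via fact~(\ref{com:a}) applied to the conformal hypergraph $\K(G)$ (whose $2$-section is $G$ itself), and match the second against gem-freeness---is exactly right, and the two local computations (extracting a $C_4$ or a gem from a copy of $\mv{F}$; manufacturing $\mv{F}$ from a gem by extending the three triangles $\{q,p_i,p_{i+1}\}$ to maximal cliques) are correct.

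Two points of exposition deserve tightening. First, the displayed ``equivalence'' is literally false: if $G=C_4$ then every maximal clique is an edge, each row of $\mv{A}(\K(G))$ has exactly two ones, and no $3\times 3$ submatrix can have an all-ones row, so $\mv{F}$ does not occur---yet $G$ certainly has an induced $C_4$. What you actually prove, and what your assembly actually uses, is the pair of one-way implications ``$\mv{F}$ in $\mv{A}(\K(G))\Rightarrow$ induced $C_4$ or gem'' and ``induced gem $\Rightarrow\mv{F}$ in $\mv{A}(\K(G))$''; the missing direction ``$C_4\Rightarrow\mv{F}$'' is never invoked, because on the $\gamma$-acyclic side you already have chordality from total balancedness. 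State it that way. Second, the parenthetical ``each of which contains an induced $C_4$'' is wrong for holes of length $\ge 5$ (a chordless cycle has no induced $C_4$); the correct reason those are excluded is simply that chordal graphs have no induced $C_k$ for any $k\ge 4$. With these two fixes your proof is complete.
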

After Theorem \ref{thm:damo}, the following result of Uehara and Uno, though discovered independently, is readily seen to be equivalent to Fagin's result
\begin{theorem}[\cite{ueuno}]\label{thm:ueuno}
$G$ is Ptolemaic $\Longleftrightarrow$ $\widehat{\K(G)}$ is laminar $\Longleftrightarrow${\rm Bachman($\K(G)$)} is a tree.
\end{theorem}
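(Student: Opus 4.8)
The plan is to derive the theorem by composing two facts already in hand --- Theorem~\ref{thm:damo} (D'Atri--Moscarini) and Fagin's Theorem~\ref{thm:fagin} --- glued together by the Edmonds--Giles correspondence between laminar families and tree-like posets recalled above; modulo these ingredients the statement really \emph{is} Fagin's result, which is what the preceding remark asserts. First I would establish ``$G$ is Ptolemaic if and only if $\mathrm{Bachman}(\K(G))$ is a tree''. By Theorem~\ref{thm:damo}, $G$ is Ptolemaic exactly when its maximal-clique clutter $\K(G)$ is $\gamma$-acyclic. Applying Fagin's Theorem~\ref{thm:fagin} to the hypergraph $\HH=\K(G)$ rewrites this as ``$\widehat{\K(G)}$ is $\gamma$-acyclic'' and equally as ``$\mathrm{Bachman}(\K(G))$ is a tree'', since by definition $\mathrm{Bachman}(\K(G))$ is the transitive reduction of the inclusion order $(\widehat{\K(G)},\subseteq)$. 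Chaining the two equivalences yields the equivalence of the first and the third statement.

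To bring the laminar condition into the chain I would use that $\widehat{\K(G)}$ is closed under intersection and that, for connected $G$, the Hasse diagram of $(\widehat{\K(G)},\subseteq)$ is connected (walking along a clique tree of $G$, the successive separators provide meets that keep the order connected). One implication is then immediate and purely order-theoretic: if $\widehat{\K(G)}$ is laminar, the Edmonds--Giles observation of \cite{Edgi} makes $(\widehat{\K(G)},\subseteq)$ a tree-like poset, so $\mathrm{Bachman}(\K(G))$ is a tree. For the converse --- the genuinely non-formal half --- I would invoke Uehara and Uno's direct structural analysis of the cliques of a Ptolemaic graph \cite{ueuno}; this is where one must use that $\K(G)$ is the maximal-clique clutter of a graph, equivalently a \emph{conformal} clutter, and not merely an arbitrary intersection-closed $\gamma$-acyclic family. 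A self-contained alternative would track the minimal vertex separators along a clique tree of the chordal graph $G$ and exploit the forbidden-submatrix form of $\gamma$-acyclicity (Definition~\ref{def:gammaacy}) to bound how the members of $\widehat{\K(G)}$ can overlap.

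The step I expect to be the real obstacle is exactly this converse: passing from the tree shape of $\mathrm{Bachman}(\K(G))$ --- equivalently, from tree-likeness of the order $(\widehat{\K(G)},\subseteq)$ --- back to laminarity of the family $\widehat{\K(G)}$ is not a formal step for a general intersection-closed hypergraph, so the clique structure (conformality) of $\K(G)$ must genuinely enter. The practical plan is therefore to prove the easy half order-theoretically, to cite \cite{ueuno} (or re-derive it through the clique-tree analysis above) for the hard half, and to let the remaining equivalences --- $G$ Ptolemaic $\iff$ $\K(G)$ $\gamma$-acyclic $\iff$ $\widehat{\K(G)}$ $\gamma$-acyclic $\iff$ $\mathrm{Bachman}(\K(G))$ a tree --- fall out mechanically from Theorems~\ref{thm:damo} and~\ref{thm:fagin}.
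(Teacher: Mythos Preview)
The paper does not prove Theorem~\ref{thm:ueuno}; it is stated as a cited result from~\cite{ueuno}, accompanied only by the remark that ``after Theorem~\ref{thm:damo}, the following result of Uehara and Uno, though discovered independently, is readily seen to be equivalent to Fagin's result.'' Your derivation of the equivalence between ``$G$ is Ptolemaic'' and ``$\mathrm{Bachman}(\K(G))$ is a tree'' by composing Theorem~\ref{thm:damo} with Theorem~\ref{thm:fagin} is exactly what that one-line remark is pointing at, so on this part your proposal and the paper coincide.

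Where you go further than the paper is in isolating the laminar condition. The paper does not separate out ``$\widehat{\K(G)}$ is laminar'' as requiring its own argument; it simply attributes the whole three-way equivalence to~\cite{ueuno}. Your analysis is sound: the implication ``laminar $\Rightarrow$ tree-like Hasse diagram'' is indeed the Edmonds--Giles direction recalled in the text, while the converse is not purely order-theoretic for arbitrary intersection-closed families (for instance, the family $\{\{1,2\},\{2,3\},\{2\}\}$ has a tree-shaped Bachman diagram but is not laminar). You are right that something specific to clique hypergraphs must enter, and that citing~\cite{ueuno} directly for this half is not circular here, since the paper itself treats the theorem as an external input rather than something to be proved. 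Your alternative via clique-tree analysis and the forbidden-submatrix form of $\gamma$-acyclicity is a reasonable sketch, but the paper neither attempts nor needs it.
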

To see how Theorems~\ref{thm:fagin} and~\ref{thm:ueuno} imply, via Diagram~(\ref{eq:diagram2}), Theorem~\ref{thm:main}, we proceed as follows: we first prove that the maps $\lambda$ and $\mu$ that make Diagram \ref{eq:diagram2} commuting exist. They are indeed obtained in Proposition~\ref{prop:mu} by specializing the characterization given in remark \ref{rem:tbneigh} to the subclasses $\mathbf{Pt}$ and $\mathbf{BDH}$. Then we show in Theorem \ref{thm:bridge} that $\lambda$ and $\mu$ actually make the diagram commuting. To this end we need some intermediate results on Galois lattices of BDH graphs. 
\begin{prop}\label{prop:mu}
If $G$ is Ptolemaic graph then $\Gamma(\K(H))$ is a BDH graph. If $G$ is a BDH graph with color classes $X$ and $Y$, then each of the 2-sections of the following hypergraphs is Ptolemaic: $\NN_X(G)$, $\NN_X(G)^\uparrow$, $\NN_Y(G)$ and $\NN_Y(G)^\uparrow$.
\end{prop}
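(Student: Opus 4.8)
The plan is to derive both assertions from Theorem~\ref{thm:damo}, once the elementary ``dictionary'' relating $\gamma$-acyclicity of a hypergraph $\HH$ to the structure of its incidence bipartite graph $\Gamma(\HH)$ has been recorded; this dictionary is the $\gamma$-acyclic counterpart of fact~(a) above. First I would observe that, for a $\{0,1\}$-matrix, containing a copy of $\mv{C}_h$ (with $h\geq 3$) means exhibiting $h$ rows and $h$ columns on which the matrix restricts \emph{exactly} to $\mv{C}_h$, which says precisely that the induced bipartite subgraph of $\Gamma(\HH)$ on those rows and columns is a $C_{2h}$; similarly, containing a copy of $\mv{F}$ amounts exactly to having an induced domino in $\Gamma(\HH)$ (see Figure~\ref{fi:h}). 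Since a hypergraph is totally balanced precisely when its matrix is $\mv{C}_h$-free for every $h\geq 3$, combining Definition~\ref{def:gammaacy} with Statement~(iii) of Theorem~\ref{thm:bdhchar} would give, for every hypergraph $\HH$, the equivalence
\[
\mbox{$\HH$ is $\gamma$-acyclic}\qquad\Longleftrightarrow\qquad\mbox{$\Gamma(\HH)$ is a BDH graph}.
\]
This equivalence is the technical heart of the proposition.

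With it in hand, the first assertion is immediate: $\mv{A}(\K(G))$ is precisely the biadjacency matrix of $\Gamma(\K(G))$, and if $G$ is Ptolemaic then $\K(G)$ is $\gamma$-acyclic by Theorem~\ref{thm:damo}, so $\Gamma(\K(G))$ is a BDH graph.

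For the second assertion I would first note that the $2$-section of a hypergraph depends only on its inclusion-wise maximal members, so that $(\NN_X(G))_2=(\NN_X(G)^\uparrow)_2=:H_X$ and, symmetrically, $(\NN_Y(G))_2=(\NN_Y(G)^\uparrow)_2=:H_Y$; it then suffices to show that $H_X$ and $H_Y$ are Ptolemaic, and by symmetry I would only treat $H_X$. Since $G$ is a BDH graph it is chordal bipartite by Theorem~\ref{thm:bdhchar}, so $\NN_X(G)$ is totally balanced by fact~(b); deleting members of a hypergraph preserves $\mv{C}_h$-freeness of its matrix, hence $\NN_X(G)^\uparrow$ is totally balanced too, and in particular conformal. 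Conformality then identifies $\K(H_X)=\K\big((\NN_X(G)^\uparrow)_2\big)=(\NN_X(G)^\uparrow)^\uparrow=\NN_X(G)^\uparrow$. Finally, $\Gamma(\NN_X(G)^\uparrow)$ is obtained from $G$ by retaining, among the vertices of $Y$, exactly one representative of each twin class whose neighbourhood is inclusion-wise maximal; thus it is an induced subgraph of $G$, and hence a BDH graph by Statement~(iii) of Theorem~\ref{thm:bdhchar}. By the equivalence above, $\K(H_X)=\NN_X(G)^\uparrow$ is $\gamma$-acyclic, and Theorem~\ref{thm:damo} then gives that $H_X$ is Ptolemaic.

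I expect the main obstacle to be almost entirely bookkeeping, concentrated in the proof of the equivalence above: one must check that the forbidden submatrices defining $\gamma$-acyclicity --- the family $\mv{C}_h$, $h\geq 3$, together with $\mv{F}$ --- correspond \emph{exactly} to the two forbidden induced configurations (holes and dominoes) characterizing BDH graphs through Theorem~\ref{thm:bdhchar}, the subtle point being that ``$\mv{A}(\HH)$ contains a copy of a matrix $\mv{M}$'' refers to an \emph{exact} submatrix, hence to an \emph{induced} subgraph of $\Gamma(\HH)$. The only further care is needed in the $2$-section step of the second assertion, where one invokes conformality of totally balanced hypergraphs and the fact that the BDH property is hereditary under induced subgraphs.
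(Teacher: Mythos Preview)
Your proposal is correct and follows essentially the same route as the paper: both arguments hinge on the observation that the forbidden submatrices $\mv{C}_h$ and $\mv{F}$ in Definition~\ref{def:gammaacy} are exactly the biadjacency matrices of holes and dominoes, so that $\gamma$-acyclicity of $\HH$ is equivalent to $\Gamma(\HH)$ being BDH; both then invoke conformality and Theorem~\ref{thm:damo}. The only difference is in the first assertion: the paper simply cites Wu's result from~\cite{peledwu}, whereas you derive it directly from Theorem~\ref{thm:damo} via the same equivalence --- a self-contained alternative that costs nothing extra once the equivalence is stated.
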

\begin{proof}
The first part is Wu's result \cite{peledwu}. Let us prove the second part and let $G_0$ be an induced subgraph of $G$ with color classes $X_0$ and $Y_0$. Let $\mv{A}_0$ be the $\{0,1\}$-matrix with rows indexed by $X_0$ columns indexed by $Y_0$ defined by the adjacency of the vertices in $X_0$ and $Y_0$. Observe that $\mv{A}_0$ is a submatrix of both $\mv{A}(\NN_X(G))$ and $\mv{A}(\NN_Y(G))$ (this is just a matter of checking definitions). Since $G$ is BDH it is domino-free. Up to a permutation of rows and columns, the adjacency matrix of a domino is the matrix $\mv{F}$ defined in \eqref{eq:dominomatrix}. Hence $\NN_X(G)$, $\NN_X(G)^\uparrow$, $\NN_Y(G)$ and $\NN_Y(G)^\uparrow$ are $\gamma$-acyclic and therefore the 2-sections are Ptolemaic by Theorem \ref{thm:damo} (because of the conformality of $\gamma$-acyclic hypergraphs).
\end{proof}
After the proposition we see that the maps $\lambda$ and $\mu$ are defined as follows
\begin{subequations}
\begin{equation}\label{eq:mapsX}
\begin{array}{ccc}
\lambda:\mathbf{Pt}\rightarrow \mathbf{BHD} & &\mu_1:\mathbf{BHD}\rightarrow \mathbf{Pt}\\ 
\,\, G\mapsto \Gamma(\K(G)) & & \,\,\, H\mapsto (\NN_Y(H))_2
\end{array} 
\end{equation}
or 
\begin{equation}\label{eq:mapsY}
\begin{array}{ccc}
\lambda:\mathbf{Pt}\rightarrow \mathbf{BHD} & &\mu_2:\mathbf{BHD}\rightarrow \mathbf{Pt}\\ 
\,\, G\mapsto \Gamma(\K(G)) & & \,\,\, H\mapsto (\NN_X(H))_2
\end{array} 
\end{equation}
\end{subequations}
provided that $H$ is in $\mathbf{BDH}$ and has color classes $X$ and $Y$. The following lemma uses Lemmata \ref{lem:Htree1} and \ref{lem:Htree2} that are proved in Section \ref{sec:onlyif}.
\begin{lemma}\label{lem:paoloproof}
Let $G$ be a BDH graph with color classes $X$ and $Y$. If $x\in X$ is such that $N(x)\not\in \NN_Y(G)^\uparrow$ and $N(x)\not\in\widehat{\NN_Y(G)}$, then $\mathbf{H}(G-x)$ arises from $\mathbf{H}(G)$ by contracting some arcs.
the same results holds if $y\in Y$ satisfies the hypotheses. 
\end{lemma}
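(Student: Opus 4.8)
The plan is to derive the statement from Theorem~\ref{thm:retract}, by absorbing the ``deleted sink'' alternative of that theorem into the ``contracted arc'' alternative. By Remark~\ref{rem:usef3} it is enough to treat $x\in X$, the case $y\in Y$ being the dual statement (and arc contraction being a self-dual operation). First I would rule out the case in which $G-x$ has more components than $G$: since $N(x)\notin\NN_Y(G)^\uparrow$ there is some $x'\in X\setminus\{x\}$ with $N(x)\subseteq N(x')$, so $(\{x,x'\},N(x))$ is a biclique strictly containing $(\{x\},N(x))$, whence $(\{x\},N(x))\notin\maxbic(G)$ and, by Lemma~\ref{lem:galois2bis}, $x$ is not a cut-vertex of $G$. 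Writing $\maxbic_0(x)=\{B\in\maxbic(G) : x\in B,\ B-x\notin\maxbic(G-x)\}$ as in the proof of Theorem~\ref{thm:retract}, that theorem then tells us that $\mathbf{H}(G-x)$ is obtained from $\mathbf{H}(G)$ by deleting each $B\in\maxbic_0(x)$ that is a sink of $\mathbf{H}(G)$ and contracting, for each non-sink $B\in\maxbic_0(x)$, the unique arc $(B',B)$ with $B'\in\maxbic(G-x)$.

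The heart of the argument is the claim that \emph{no} $B\in\maxbic_0(x)$ is an atom of $\LL(G)$. Indeed, $B-x=(X(B)\setminus\{x\},\,Y(B))$ is a biclique of $G$, hence is dominated by some maximal biclique $B'$ of $G$; running the same computation as in the proof of Lemma~\ref{lem:Htree2} one obtains $X(B')=X(B)\setminus\{x\}$, which is a proper subset of $X(B)$, so $B'\prec B$ in $\mathcal{L}^\circ(G)$. As $B'$ is a genuine maximal biclique (so $B'\neq\bot$), $B$ cannot be an atom. Consequently Lemma~\ref{lem:Htree2} applies to \emph{every} $B\in\maxbic_0(x)$ and gives $\deg^-_{\mathbf{H}(G)}(B)=1$, with the unique in-neighbour of $B$ lying in $\maxbic(G-x)$.

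To conclude, I would observe that deleting a sink $B$ of $\mathbf{H}(G)$ whose only incoming arc is $(B',B)$ produces exactly the digraph obtained by contracting $(B',B)$ onto $B'$: $B$ has no outgoing arcs, $B'$ keeps all of its remaining arcs, and no parallel arcs can appear since a transitive reduction is a triangle-free simple DAG. A direct check shows moreover that $B\mapsto B'$ is injective and that its image avoids $\maxbic_0(x)$ (none of the $B'$ contains $x$), so the arcs $\{(B',B):B\in\maxbic_0(x)\}$ form a matching of $\mathbf{H}(G)$ that can be contracted simultaneously without ambiguity; performing these contractions yields $\mathbf{H}(G-x)$, as required. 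The point I expect to need the most care is precisely this final bookkeeping — checking that ``delete an in-degree-one sink'' and ``contract its in-arc'' coincide on the nose, and that the simultaneous contractions do not interfere — rather than anything specific to BDH graphs.
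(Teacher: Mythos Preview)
Your proof is correct and follows essentially the same route as the paper's: both rule out the cut-vertex case via Lemma~\ref{lem:galois2bis}, show that every $B\in\maxbic_0(x)$ fails to be an atom so that Lemma~\ref{lem:Htree2} applies, and conclude that passing from $\mathbf{H}(G)$ to $\mathbf{H}(G-x)$ amounts only to arc-contractions (the paper invokes Lemmas~\ref{lem:Htree1}--\ref{lem:Htree2} directly rather than quoting Theorem~\ref{thm:retract}, but the content is identical). One cosmetic slip: in your no-atom argument the dominating biclique $B'$ should be chosen in $\maxbic(G-x)$, as in the proof of Lemma~\ref{lem:Htree2}, rather than in $\maxbic(G)$ (otherwise $B'=B$ is a legitimate but useless choice), and you should note explicitly that $|X(B)|\ge 2$ because $(\{x\},N(x))\notin\maxbic(G)$; with those tweaks the argument matches the paper's.
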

\begin{proof}
Let $\widetilde{\B}(x)=\{B\in \B(G) \ |\ B-x\not\in \B(G-x)\}$. By the hypotheses, $\{x\}$ is not the $X$-shore of any biclique in $\B(G)$, and therefore no member of $\widetilde{\B}(x)$ is an atom in $\LL(G)$ (by Lemma~\ref{lem:galois2bis}). Since $N(x)\not\in\widehat{\NN_Y(G)}$, it follows that $\LL(G)\not\cong \LL(G-x)$. By Lemma \ref{lem:Htree1}, $\LL(G-x)$ is a sub-poset (not necessarily a sub-lattice) of $\LL(G)$. The vertices of $\mathbf{H}(G-x)$ correspond bijectively to vertices of $\mathbf{H}(G)-\widetilde{\B}(x)$. However, by Lemma \ref{lem:Htree2}, the vertices in $\widetilde{\B}(x)$ are meet-irreducible in $\LL(G)$ and are covered in $\LL(G)$ by bicliques of  $\B(G)\setminus\B(x)$. In other words, for each $B\in \B(x)$ there is a unique arc $(B',B)$ entering $B$ and with $B'\not\in \B(x)$. Therefore, by contracting every such arc yields $\mathbf{H}(G-x)$. By duality we obtain the statement for the $Y$-shore. In the latter case however meet-irreducible is replaced by join-irreducible.
\end{proof}
The following theorem sets a bridge between the clique lattice of a graph $G$ and the Galois lattice of the vertex-clique bipartite graphs of $G$ and is strong enough to make Diagram \eqref{eq:diagram2} commuting. For a bipartite graph $G$ with color classes $X$ and $Y$, let $I_X$ and $I_Y$ be the set of vertices in $X$ and $Y$ respectively, satisfying the hypotheses of Lemma~\ref{lem:paoloproof}. After Lemma \ref{lem:paoloproof}, the following fact is straightforward.
\begin{lemma}\label{lem:tree-tree}
Let $G$ be a BDH graph with color classes $X$ and $Y$. Then $\LL(G)$ is tree-shaped if and only if both $\LL(G-I_X)$ and $\LL(G-I_Y)$ are tree-shaped.
\end{lemma}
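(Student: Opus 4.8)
The plan is to read off both implications from Lemma~\ref{lem:paoloproof}, which exhibits $\mathbf{H}(G-x)$ as an arc contraction of $\mathbf{H}(G)$ whenever $x$ meets its hypotheses. First I would check that Lemma~\ref{lem:paoloproof} may be applied repeatedly along $I_X$ (and dually along $I_Y$): writing $I_X=\{x_1,\dots,x_k\}$ and $G_i=G-\{x_1,\dots,x_i\}$, one verifies that $x_{i+1}$ still satisfies the hypotheses with respect to $G_i$. Indeed $N(x_{i+1})$ is still not an intersection of other neighbourhoods, since deleting vertices of the same colour class only shrinks the intersection-closure; and $N(x_{i+1})$ is still not maximal among the remaining neighbourhoods, because any witness of its non-maximality in $G$ can be chosen to have an inclusionwise maximal neighbourhood, hence to lie outside $I_X$ and therefore to survive all the deletions. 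Moreover each $G_i$ is again a connected BDH graph: BDH is a hereditary class, and by Lemma~\ref{lem:galois2bis} a vertex $x$ with $(\{x\},N(x))\notin\B(G)$ — which holds when $N(x)$ is not maximal — is not a cut-vertex. Consequently $\mathbf{H}(G-I_X)$ and $\mathbf{H}(G-I_Y)$ are each obtained from $\mathbf{H}(G)$ by a finite sequence of arc contractions, and since contracting an arc turns a tree into a tree, the implication ``$\mathbf{H}(G)$ is a tree $\Rightarrow$ both $\mathbf{H}(G-I_X)$ and $\mathbf{H}(G-I_Y)$ are trees'' follows at once.

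For the converse it suffices to prove that if $\mathbf{H}(G-I_X)$ is a tree then $\mathbf{H}(G)$ is a tree. By Lemma~\ref{lem:paoloproof} the passage from $\mathbf{H}(G)$ to $\mathbf{H}(G-I_X)$ is a composition of arc contractions, each contracting the unique incoming arc $(B',B)$ of some biclique $B$ destroyed at that step, with $B'$ not itself destroyed there. The delicate point is that none of these contractions creates a loop or a pair of parallel arcs. No loop can appear, because $\mathbf{H}(G)$ is acyclic. A parallel pair could arise only from two destroyed bicliques $B_i,B_j$ sharing a lower cover $B'$ and a common upper cover $C$; but then $\{B',B_i,B_j,C\}$ would form a four-element diamond in $\LL(G)$ with $B_i\parallel B_j$, $B_i\wedge B_j=B'\neq\bot$ and $B_i\vee B_j=C\neq\top$, contradicting Lemma~\ref{lem:galois2} (a BDH graph is domino-free). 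Hence every contraction in the sequence removes exactly one vertex and one edge and leaves the number of connected components unchanged, so the whole passage preserves both the component count and the cyclomatic number $|E|-|V|+c$ of the underlying graph. Since $\mathbf{H}(G-I_X)$ is a tree it has one component and cyclomatic number $0$; therefore so does $\mathbf{H}(G)$, that is, $\mathbf{H}(G)$ is a tree. (The hypothesis on $I_Y$ is redundant for this implication but is recorded for symmetry, and both $I_X$ and $I_Y$ are used through the maps $\mu_2$ and $\mu_1$.)

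I expect the iteration step to be routine bookkeeping; the one genuinely delicate point is the exclusion of parallel arcs in the converse, which is precisely where domino-freeness of $G$ itself — via Lemma~\ref{lem:galois2} — must be invoked, as opposed to domino-freeness merely of the graphs obtained after the relevant vertex deletions.
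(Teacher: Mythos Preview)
Your argument is correct and follows the same approach as the paper: the paper's own proof is the single sentence ``After Lemma~\ref{lem:paoloproof}, the following fact is straightforward,'' so the iterated-contraction reading you give is exactly the intended one, only fleshed out. In particular, your use of Lemma~\ref{lem:galois2} to exclude parallel arcs---the one nontrivial point in the converse direction---supplies a detail the paper leaves implicit (a minor slip: you only get $B_i\vee B_j\preceq C$, not equality, but $\neq\top$ is all that is needed).
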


\begin{theorem}\label{thm:bridge}
If $G$ is Ptolemaic graph, then there exists a lattice isomorphism $\Phi_\lambda$ such that $\C(G)\cong \LL(\Gamma(\K(G)))=\LL(\lambda G)$. If $G$ is a BDH graph with color classes $X$ and $Y$, then there are lattice isomorphisms $\Psi_{\mu_1}$ and $\Psi_{\mu_2}$ such that $\LL(G-I_X)\cong \C((\NN_Y(G))_2)=\C(\mu_1 G)$ and $\LL(G-I_Y)\cong \C((\NN_X(G))_2)=\C(\mu_2 G)$.
\end{theorem}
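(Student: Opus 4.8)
The plan is to reduce both assertions to one elementary observation about incidence graphs. If $\HH$ is a hypergraph on a ground set $V$ with no universal member and no vertex lying in every member, then in $\Gamma(\HH)$ the neighbourhood of a member $F\in\HH$ is $F$ itself, so by the concept description \eqref{eq:ganterwille1} the $V$-shores of the maximal bicliques of $\Gamma(\HH)$ are exactly the intersections of members of $\HH$, that is, the members of $\widehat{\HH}$, together with $V$ realised as the empty intersection. Hence, by \eqref{eq:shoresisosx}, $\LL(\Gamma(\HH))\cong(\widehat{\HH}\cup\{\emptyset,V\},\subseteq)$, with the dummy elements $\bot$ and $\top$ of $\LL$ corresponding to $\emptyset$ and $V$ respectively.

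First I would settle the Ptolemaic half by instantiating this with $\HH=\K(G)$ and $V=V(G)$: the right-hand side is then literally $\C(G)$, so $\Phi_\lambda\colon B\mapsto X(B)$ is an order isomorphism between two lattices, hence a lattice isomorphism. Ptolemaicity of $G$ plays no role in the isomorphism itself; it enters only through Proposition~\ref{prop:mu}, which guarantees that $\Gamma(\K(G))=\lambda G$ is a BDH graph, so that $\LL(\lambda G)$ is indeed an object of $\mathbb{T}_{\B}$.

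For the BDH half, set $H=(\NN_X(G))_2$. By Proposition~\ref{prop:mu} (and its proof), $H$ is Ptolemaic and $\NN_X(G)$ is $\gamma$-acyclic, hence conformal, so that the maximal cliques of $H$ are exactly the maximal members of $\NN_X(G)$: $\K(H)=\NN_X(G)^\uparrow$. Applying the previous half to the Ptolemaic graph $H$ gives $\C(\mu_2 G)=\C(H)\cong\LL(\Gamma(\NN_X(G)^\uparrow))\cong(\widehat{\NN_X(G)^\uparrow}\cup\{\emptyset,X\},\subseteq)$. On the other side, \eqref{eq:shoresisosx} applied to $G-I_Y$ identifies $\LL(G-I_Y)$ with the inclusion order on the intersection-closure of the family $\{\,N_G(y)\ :\ y\in Y\setminus I_Y\,\}$ of the $X$-neighbourhoods that survive the trimming (deleting vertices of $Y$ does not change these neighbourhoods), with $\emptyset$ and $X$ adjoined. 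Thus the whole statement reduces to the combinatorial identity
\[
\widehat{\{\,N_G(y)\ :\ y\in Y\setminus I_Y\,\}}=\widehat{\NN_X(G)^\uparrow},
\]
which I would prove by two inclusions. For ``$\supseteq$'': every inclusionwise maximal $X$-neighbourhood $M$ equals $N_G(y)$ for some $y$, and such a $y$ cannot belong to $I_Y$ because $M\in\NN_X(G)^\uparrow$; hence $\NN_X(G)^\uparrow$, and therefore its intersection-closure, is contained in the left-hand family. For ``$\subseteq$'': every $N_G(y)$ with $y\notin I_Y$ is an intersection of maximal $X$-neighbourhoods, i.e.\ lies in $\widehat{\NN_X(G)^\uparrow}$, which is exactly the defining property of $Y\setminus I_Y$; the operational meaning of this property is delivered by Lemma~\ref{lem:paoloproof} (for such a $y$, $\mathbf{H}(G-y)$ is obtained from $\mathbf{H}(G)$ by contracting arcs, so deleting $y$ leaves the family of closed subsets of $X$ untouched) together with Lemma~\ref{lem:tree-tree}. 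Combining the two inclusions yields $\LL(G-I_Y)\cong\C((\NN_X(G))_2)=\C(\mu_2 G)$, and the companion isomorphism $\LL(G-I_X)\cong\C((\NN_Y(G))_2)=\C(\mu_1 G)$ follows by the dual argument obtained by interchanging the two colour classes of $G$ (Remark~\ref{rem:usef3}).

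The observation of the first paragraph and the reduction of the third are routine Galois-lattice bookkeeping; the hard part will be the combinatorial identity, that is, pinning down exactly which vertices $I_Y$ removes, and this is precisely where Lemmas~\ref{lem:paoloproof} and~\ref{lem:tree-tree} are indispensable. The remaining delicate points are: matching the two dummy elements across each isomorphism; the no-universal-vertex convention applied to the auxiliary incidence graph $\Gamma(\NN_X(G)^\uparrow)$, which may inherit twins and similar degeneracies from coinciding maximal neighbourhoods; and the direction of the order, since the shore projection \eqref{eq:shoresisodx} reverses inclusion, so one must be careful whether $\LL(G-I_X)$ matches $\C(\mu_1 G)$ on the nose or only up to order-duality.
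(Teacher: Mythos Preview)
Your approach is essentially the paper's, just unpacked one level further. The paper proves the Ptolemaic half exactly as you do (via the shore projection \eqref{eq:shoresisosx}), and for the BDH half it obtains $\K(H_2)=\NN_X(G)^\uparrow$ from conformality, then asserts the \emph{graph} isomorphism $\Gamma(\K(H_2))\cong G-I_Y$ and applies the Ptolemaic half to $H_2$. Your combinatorial identity $\widehat{\{N_G(y): y\in Y\setminus I_Y\}}=\widehat{\NN_X(G)^\uparrow}$ is the lattice-level translation of that graph isomorphism, so the two arguments coincide in substance.

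Where you diverge from the paper is in claiming that Lemmas~\ref{lem:paoloproof} and~\ref{lem:tree-tree} are ``indispensable'' for the identity. They are not, and the paper's proof of Theorem~\ref{thm:bridge} does not invoke them at all. Both inclusions are immediate from the \emph{definition} of $I_Y$: if $y\notin I_Y$ then $N(y)\in\widehat{\NN_X(G)^\uparrow}$ by definition, giving ``$\subseteq$''; and every maximal neighbourhood is $N(y)$ for some $y$ that by definition cannot lie in $I_Y$, giving ``$\supseteq$''. Lemmas~\ref{lem:paoloproof} and~\ref{lem:tree-tree} enter only \emph{after} Theorem~\ref{thm:bridge}, to transport tree-shapedness between $\LL(G)$ and $\LL(G-I_X)$, $\LL(G-I_Y)$ in the final deduction of Theorem~\ref{thm:main}. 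Your worries about twins and the direction of the order are legitimate bookkeeping concerns but not obstacles: the paper's graph isomorphism is literally sensitive to twins, whereas your lattice-level formulation is not, which is a small point in favour of your route; and the shore projections \eqref{eq:shoresisosx}--\eqref{eq:shoresisodx} handle the order direction.
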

\begin{proof}
Let $G$ be Ptolemaic, let $H=\Gamma(\K(G))$. Suppose that $S\in \widehat{\K}(G)$. If $\K(S)$ denotes the set of all maximal cliques of $G$ containing $S$, then $(S,\K(S))$ is a biclique of $\B(H)$ and $S$ is the $X$-shore of a biclique in $\B(H)$. Conversely, If $(S,\K')\in \B(H)$, for some $S\subseteq V(G)$ and some $\K'\subseteq \K(G)$, then $S$ is precisely the intersection of the cliques in $K'$ and therefore it is in $\widehat{\K}(G)$. In other words the isomorphism $\Phi_\lambda$ is the one given in \eqref{eq:shoresisosx}.
Let $H_1=(\NN_Y(G))_2$ and $H_2=(\NN_X(G))_2$. Since $G$ is BDH it is bipartite chordal and therefore $\NN_Y(G)$ and $\NN_X(G)$ are both conformal (see also Remark~\ref{rem:tbneigh}). Hence $\K(H_1)=\NN_Y(G)^\uparrow$ and $\K(H_2)=\NN_X(G)^\uparrow$. Clearly $\Gamma(\K(H_1))\cong G-I_X$ and $\Gamma(\K(H_2))\cong G-I_Y$ where $\cong$ is graph isomorphism. Now $\LL(G-I_X)\cong \C(H_1)$ and $\LL(G-I_Y)\cong \C(H_2)$, where now $\cong$ is lattice isomorphism, and the two isomorphisms $\Psi_{\mu_1}$ and $\Psi_{\mu_2}$ are explicitly given by \eqref{eq:shoresisodx} and \eqref{eq:shoresisosx} with $G-I_X$ and $G-I_Y$ in place of $G$, respectively.
\end{proof}
We are almost done. To obtain a proof Theorem \ref{thm:main}, it is now sufficient to resort to Theorem \ref{thm:bridge}, to observe that 
by Proposition \ref{prop:mu}, both $(\NN_Y(G))_2$ and $(\NN_X(G))_2$ are Ptolemaic, to invoke either Fagin's or Uehara and Uno's result and finally to apply Lemma~\ref{lem:tree-tree}.

\end{document}